\numberwithin{equation}{section} 
\newtheorem{theorem}{Theorem}[section]
\newtheorem{lemma}[theorem]{Lemma}
\newtheorem{fact}[theorem]{Fact}
\newtheorem{proposition}[theorem]{Proposition}
\theoremstyle{definition}
\newtheorem{definition}[theorem]{Definition}
\newtheorem*{remark}{Remark}
\newcommand{\ip}[2]{\langle a_{#1}, a_{#2}\rangle}
\newcommand{\tildeip}[2]{\langle \tilde{a}_{#1}, \tilde{a}_{#2} \rangle}
\newcommand{\Exp}[2]{\mathbb{E}_{#1}\left[ #2 \right]}
\newcommand{\Prob}[1]{\mathbb{P}\left[ #1 \right]}
\newcommand{\R}[0]{\mathbb{R}}
\newcommand{\spn}[2]{\| #1 \|_{S_{#2}}^{#2}}
\newcommand{\ceil}[1]{\lceil #1 \rceil}
\newcommand{\floor}[1]{\lfloor #1 \rfloor}
\newcommand{\pluseq}{\mathrel{+}=}
\DeclareRobustCommand{\stirling}{\genfrac\{\}{0pt}{}}
\newcommand{\Z}[0]{\mathbb{Z}}
\newcommand{\N}[0]{\mathbb{N}}
\newcommand{\tr}[0]{\mathsf{Tr}}
\newcommand{\Sc}{\mathcal{S}}
\newcommand{\Lc}{\mathcal{L}}
\newcommand{\Cc}{\mathcal{C}}
\newcommand{\supp}[1]{\operatorname{supp}(#1)}
\newcommand{\set}[1]{\{#1\}}
\newcommand{\abs}[1]{\lvert#1\rvert}
\DeclareMathOperator{\Var}{Var}
\DeclareMathOperator{\poly}{poly}
\newcommand\tuple[1]{\langle {#1}\rangle}
\providecommand{\eqdef}{\coloneqq}
\providecommand{\set}[1]{{\{#1\}}}
\renewcommand{\epsilon}{\varepsilon} 
\begin{document}
\title{Schatten Norms in Matrix Streams: Hello Sparsity, Goodbye
              Dimension}
\author{Vladimir Braverman%
  \thanks{Johns Hopkins University.
This research was supported in part by NSF CAREER grant 1652257, ONR Award N00014-18-1-2364 and and the Lifelong Learning Machines program from DARPA/MTO.
Email: \texttt{vova@cs.jhu.edu}
}
\and
Robert Krauthgamer%
\thanks{Weizmann Institute of Science.
  Work partially supported by ONR Award N00014-18-1-2364, the Israel Science Foundation grant \#1086/18, and a Minerva Foundation grant.
  Part of this work was done while the author was visiting the Simons Institute for the Theory of Computing.
  Email: \texttt{\{robert.krauthgamer,roi.sinoff\}@weizmann.ac.il}
}
\and
Aditya Krishnan%
\thanks{Johns Hopkins University. Work partially supported by ONR Award N00014-18-1-2364. Email: \texttt{akrish23@jhu.edu}
}
\and
Roi Sinoff\footnotemark[2]
}


\maketitle

\begin{abstract}
Spectral functions of large matrices contains important structural information about the underlying data, and is thus becoming increasingly important. 
Many times, large matrices representing real-world data are
\emph{sparse} or \emph{doubly sparse} (i.e., sparse in both rows and columns),
and are accessed as a \emph{stream} of updates,
typically organized in \emph{row-order}.
In this setting, where space (memory) is the limiting resource,
all known algorithms require space that is polynomial
in the dimension of the matrix, even for sparse matrices. 
We address this challenge by providing the first algorithms
whose space requirement is \emph{independent of the matrix dimension},
assuming the matrix is doubly-sparse and presented in row-order. 
Our algorithms approximate the Schatten $p$-norms, 
which we use in turn to approximate other spectral functions,
such as logarithm of the determinant, trace of matrix inverse,
and Estrada index.
We validate these theoretical performance bounds
by numerical experiments on real-world matrices representing social networks. 
We further prove that multiple passes are unavoidable in this setting,
and show extensions of our primary technique,
including a trade-off between space requirements and number of passes. 
\end{abstract}

\section{Introduction}

Large matrices are often used to represent real-world data sets like text documents, images and social networks,
however analyzing them is increasingly challenging,
as their sheer size renders many algorithms impractical.
Fortunately,
in several application domains, input matrices are often very sparse,
meaning that only a small fraction of their entries are non-zero.
In fact, in applications related to natural language processing (e.g.~\cite{GDC13}),
image recognition, medical imaging and computer vision (e.g.~\cite{GJPKI12,LZYQ15}),
the matrices are often doubly sparse, i.e., sparse in both rows and columns.
Throughout, we define these matrices as \emph{$k$-sparse}, meaning
that every row and every column has at most $k$ non-zero entries.
The current work devises new algorithms to analyze
the \emph{spectrum} (singular values) of such sparse matrices, 
aiming to achieve efficiency (storage requirement in streaming model)
that depends on \emph{matrix sparsity}  instead of \emph{matrix dimension}. 

We focus on fundamental functions of the spectrum, called the Schatten norms. 
Formally, the Schatten $p$-norm of a matrix $A \in \R^{m \times n}, m\geq n$ with singular values $\sigma_1 \geq \ldots \geq \sigma_n \geq 0$ is defined for every $p\geq 1$ as
$$
  \|A\|_{S_p}
  \eqdef
  \left(\sum_{i = 1}^n \sigma_i^p \right)^{1/p}.$$
This definition extends also to $0<p<1$, in which case it is not a norm, and also to $p=0,\infty$ by taking the limit. Frequently used cases include $p=0$, representing the rank of $A$, and $p=1,2,\infty$, commonly known as the trace/nuclear norm, Frobenius norm, and spectral/operator norm, respectively. Schatten norms are often used as surrogates for the spectrum, as explained in~\cite{ZWJ15,WV16,DNPS16,KO19}, and some specific cases have applications in optimization, image processing, and differential privacy etc.~\cite{XGLZZZ16,MNSUW18}.

For a positive semidefinite (PSD) matrix $A$, 
the Schatten norms can be easily used to approximate
other important spectral functions. 
One example is the Estrada index, which has applications in chemistry, physics, network theory and information theory (see survey by Gutman et al.~\cite{GDR11}).
Another example is the trace of matrix inverse,
which is used for image restoration, counting triangles in graphs,
to measure the uncertainty in data collections, and to bound the total variance of unbiased estimators
(see e.g. \cite{WLKSG16, Che16, HMAS17} for references).
A third example is the logarithm of the determinant,
used in many machine learning tasks,
like Bayesian learning, kernel learning, Gaussian processes, tree mixture models,
spatial statistics and Markov field models
(see e.g. \cite{HMS15, UCS17, US17, HMAS17} for references).
Thus, our results for Schatten norm have further applications. 

As the matrices in many real-world applications are often very large, storing the entire matrices in working memory can be impractical, and thus, as mentioned, analyzing them has become increasingly challenging. As a result, the data-stream model has emerged as a convenient model for how these data-sets are accessed in practice. In this model, the input matrix $A \in \R^{m \times n}$ is presented as a sequence of items/updates.
In one common setting, the \emph{turnstile} model, each update has the form $(i, j, \delta)$ for $\delta \in \Z$ and represents an update $A_{ij} \leftarrow A_{ij} + \delta$.
In another common setting, the \emph{row-order} model, items $(i,j,A_{ij})$ arrive in a fixed order, sorted by location $(i,j)$ lexicographically,  providing directly the entry of $A$ in that location. In both models, unspecified entries are $0$ by convention,
which is very effective for sparse matrices.

Row-order is a common access pattern for external memory algorithms.
When the data is too large to fit into working memory and has be ``streamed''
into memory in some pattern, it is useful to assume that algorithms can make multiple, albeit few, passes over the input data.
For a thorough discussion of such external memory algorithms,
including motivation for the row-order model and for multiple passes over the data, 
see~\cite{GM99,Vitter01,Lib13}. 

Designing small-space algorithms for estimating Schatten norms
of an input matrix in the data-stream model is an important problem,
and was investigated recently for various matrix classes and stream types
\cite{CW09,AN13,LNW14,LW16a,LW16b,LW17,BCKLWY18}.
However, all known algorithms require space that is polynomial in $n$,
the matrix dimension, 
even if the matrix is highly sparse and the stream type is favorable,
say row-order. A natural question then is: 
\begin{quote} 
{Does any streaming model admit algorithms for computing Schatten norms of a matrix presented as a stream, with storage requirement independent of the matrix dimension?}
\end{quote}
We \emph{answer this question in the affirmative}
for $k$-sparse matrices presented in row-order and all even integers $p$.
Our algorithms extend to all integers $p\geq 1$ 
if the input matrix is PSD. 

\subsection{Main Results}

Throughout, we write $\tilde O(f)$ as a shorthand for $O(f\cdot \log^{O(1)}n)$ where $n$ is the dimension of the matrix,
and write $O_d(f)$ when the hidden constant might depend on the parameter $d$.
We assume that the entries of the matrix are integers bounded by $\poly(n)$,
and thus often count space in words, each having $O(\log n)$ bits.
We denote by $\lceil p \rceil_4$ the smallest multiple of $4$ that is greater than or equal to $p$, and similarly by $\lfloor p \rfloor_4$ the largest multiple of $4$ that is smaller than or equal to $p$.

\paragraph{Upper and Lower Bounds for Row-Order Streams.}
Our main result is a new algorithm for approximating the Schatten $p$-norm
(for even $p$) of a $k$-sparse matrix streamed in row-order,
using $O(p)$ passes and $\poly(k^p/\epsilon)$ space
(independent of the matrix dimensions).
This is stated in the next theorem, 
whose proof appears in Section \ref{sec:Row-orderImportanceSampling}.

\begin{restatable}{theorem}{rowOrderAlgGeneralThm}
\label{thm:row-orderAlgGeneral}
There exists an algorithm that, given $p \in 2\Z_{\geq 2}$, $\epsilon > 0$ and a $k$-sparse matrix $A \in \R^{n \times n}$ streamed in row-order, makes $\floor{p/4}+1$ passes over the stream using $O_p(\epsilon^{-2}k^{3p/2 - 3})$ words of space, and outputs $\bar{Y}(A)$ that $(1\pm\epsilon)$-approximates $\spn{A}{p}$ with probability at least $2/3$.
\end{restatable}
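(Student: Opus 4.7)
My plan rests on the combinatorial identity that for every even $p$,
\begin{equation*}
  \spn{A}{p} = \tr\bigl((AA^\top)^{p/2}\bigr) = \sum_{i_1,\dots,i_{p/2}}\sum_{j_1,\dots,j_{p/2}}\prod_{t=1}^{p/2}A_{i_t, j_t}\,A_{i_{t+1}, j_t},
\end{equation*}
where $i_{p/2+1}\equiv i_1$. Each non-vanishing monomial corresponds to a closed walk of length $p$ in the bipartite graph $G_A$ of non-zero entries of $A$. Because $A$ is $k$-sparse, $G_A$ has maximum degree at most $k$ on each side, so every edge of $G_A$ is contained in at most $k^{p-2}$ closed walks of length $p$.

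Given this identity, I would design an importance-sampling algorithm that samples a ``pivot edge'' $(i_1,j_1)$ proportionally to $A_{i_1 j_1}^2 / \|A\|_F^2$ using a standard single-pass $\ell_2^2$-sampler, then enumerates all length-$p$ closed walks through this pivot and sums their $A$-value products with the appropriate inverse-probability weight. Enumeration under row-order access is performed by a two-sided breadth-first expansion from the pivot across passes: a row frontier is extended using that row's support, which is observed when the row is streamed, while a column frontier is extended during a dedicated pass that collects the at most $k$ rows touching that column. Exploiting the symmetry of a closed walk around its pivot, one extension step can be applied at each endpoint per pass; combined with an appropriately chosen initial seed of a few edges (via additional $\ell_2^2$-samples in the first pass), this reaches the full length $p$ within $\floor{p/4}+1$ passes. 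The number of partial walks stored at any time is bounded by the product of branching factors, which is $k^{O(p)}$.

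Unbiasedness of the resulting estimator $X$ is immediate. The core of the analysis is a second-moment bound of the form $\mathbb{E}[X^2]\le O_p(k^{3p/2-3})(\spn{A}{p})^2$, after which averaging $O(\epsilon^{-2}k^{3p/2-3})$ independent copies (with a standard median-of-means boost for probability $2/3$) yields the claim. I would establish this bound by expanding $\mathbb{E}[X^2]$ as a sum over pairs $(W_1,W_2)$ of closed walks sharing a common pivot edge, regrouping each pair as a single ``double walk'' whose contribution is an entry of a product of matrices derived from $A$, and then applying an inequality that expresses these matrix products in terms of $\spn{A}{p}$ via Schatten norm monotonicity and H\"older-type bounds.

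The main obstacle is securing the \emph{tight} exponent $3p/2-3$ in the variance bound. A straightforward count of walks per edge gives only the looser $k^{2(p-2)}$, so the sharpening requires an amortized combinatorial argument in which shared structure between $W_1$ and $W_2$ (beyond the pivot) is charged against Schatten norm mass rather than against additional $k$ factors. A secondary subtlety is arranging the two-sided BFS extension so that the meeting-in-the-middle check is performed without an extra pass and preserves unbiasedness; this is routine bookkeeping once the pivot-centered sampling scheme is in place.
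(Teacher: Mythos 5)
There is a genuine gap at exactly the place you flag as your ``main obstacle,'' and it is not a loose end but the heart of the proof. Your estimator samples a pivot \emph{edge} $(i_1,j_1)$ with probability proportional to $A_{i_1 j_1}^2/\|A\|_F^2$ and then sums over closed walks through that edge. But a length-$p$ walk through $(i_1,j_1)$ picks up the factor $A_{i_1j_1}$ only once, while the importance weight contributes $1/A_{i_1j_1}^2$; so a single small entry can inflate the estimator arbitrarily, and the walk-weights are not matched by the $\ell_2^2$ sampling distribution. Even setting that aside, the pair-of-walks second-moment count you acknowledge ($k^{2(p-2)}$ per pivot) has no mechanism to be ``charged against Schatten norm mass'': you have no structural invariant tying a sampled pivot to the magnitude of the walks routed through it. The paper's proof supplies exactly this missing invariant. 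It rewrites $\spn{A}{p}$ as a sum over cycles of $p/2$ \emph{row inner products} $\langle a_{i_t},a_{i_{t+1}}\rangle$ (not entry-level bipartite walks), and then \emph{assigns each cycle to its heaviest row} (Equation~\eqref{eqn:Schatten-p-Scaled}). With that assignment, every inner product along the cycle is at most $\|a_{i_1}\|_2^2$ by Cauchy--Schwarz, so sampling the seed row with probability proportional to the cascaded norm $\|a_{i_1}\|_2^p/\sum_j\|a_j\|_2^p$ (not an $\ell_2^2$ entry weight) is tightly matched to the cycle weights. The projection lemmas (Lemmas~\ref{Lem:InnerProdByLength} and~\ref{Lem:InnerProdSquaredByLength}) then bound the number and total mass of cycles assigned to a row by powers of $k$ that telescope to $k^{3p/2-4}$ for the variance (Theorem~\ref{Thm:GeneralMatrices}); this is precisely the sharpening you could not produce. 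Your proposal would at best recover the weaker $k^{O(p)}$ variance, not $k^{3p/2-3}$ space.

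Two secondary issues. First, the pass count: growing a two-sided frontier by one bipartite hop per side per pass from a single edge reaches length roughly $2t$ after $t$ passes, i.e.\ about $p/2$ passes, not $\floor{p/4}+1$; the paper gets the halved count by growing two \emph{row-level} paths of hop-length $\approx p/4$ from the seed row and stitching them (Section~\ref{sec:reducingPasses}), which again presupposes row-level rather than edge-level pivoting. Second, the paper does not enumerate all walks through the seed in a BFS (which would cost $k^{\Theta(p)}$ space per copy); it \emph{adaptively samples} a single extension per pass using the distribution $p^{i_1}_{i_{t-1}}(\cdot)$ restricted to rows in $N_S^{i_1}(i_{t-1})$, keeping per-copy space to $O(k)$ words, and only sums over the final ring-closing index. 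If you retain the BFS enumeration, you need a separate argument that the stored frontier stays within the claimed space budget.
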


Theorem \ref{thm:row-orderAlgGeneral} provides a multi-pass algorithm whose space complexity depends only on the sparsity of the input matrix.
A natural question is whether one can achieve a similar dependence
also for \emph{one-pass} algorithms,
but our next theorem (proved in Section~\ref{sec:row-orderlowerbounds})
shows that such algorithms require
$\Omega((n^{1-4/\lfloor p \rfloor_4})$ bits of space,
even for $O(1)$-sparse matrices. 

It follows that multiple passes over the data are necessary
for an algorithm for sparse matrices
to have space complexity independent of the matrix dimensions.

\begin{restatable}{theorem}{singlePassLBThm}\label{thm:singlePassLB}
For every $p \in 2\Z_{\geq 2}$ there is $\epsilon(p) > 0$ such that
every algorithm that makes one pass over an $O_p(1)$-sparse matrix $A \in \R^{n \times n}$ streamed in row-order,
and then outputs a $(1 \pm \epsilon(p))$-approximation to $\spn{A}{p}$ with probability at least $2/3$,
must use $\Omega(n^{1-4/\lfloor p \rfloor_4})$ bits of space. 
\end{restatable}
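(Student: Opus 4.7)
The plan is to reduce from the Boolean Hidden Hypermatching (BHH) problem with arity $t=\lfloor p\rfloor_4/4$, whose one-way randomized communication complexity over inputs of length $N$ is $\Omega(N^{1-1/t})=\Omega(N^{1-4/\lfloor p\rfloor_4})$. Recall that in this task Alice holds $x\in\{0,1\}^N$ and Bob holds a perfect $t$-hypermatching $M$ on $[N]$ together with $w\in\{0,1\}^{N/t}$; under the promise that $Mx\oplus w\in\{\mathbf{0},\mathbf{1}\}$, they must decide which, with Alice sending a single message to Bob.

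Given such an instance $(x;M,w)$, I would construct a $\pm 1$ matrix $A\in\R^{n\times n}$ of dimension $n=O_p(N)$ and with at most $O_p(1)$ non-zeros per row and per column, streamed in row-order with Alice's rows preceding Bob's. The matrix is a disjoint union of $|M|=N/t$ "hyperedge gadgets": for each hyperedge $e=\{i_1,\ldots,i_t\}\in M$, a dedicated block of $O(p)$ rows and columns carries a small rigid structure — a bipartite cycle of length $\lfloor p\rfloor_4$, with a length-$2$ pendant when $p\equiv 2\pmod 4$ — whose signs on Alice's rows encode the bits $x_{i_1},\ldots,x_{i_t}$ and whose "closing" entries on Bob's rows encode $w_e$. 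Because $M$ is a perfect hypermatching, each coordinate $i\in[N]$ participates in exactly one gadget, which keeps the overall sparsity at $O_p(1)$; row-order is respected because all rows introduced from $x$ have smaller indices than those introduced from $(M,w)$.

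The analysis expresses $\|A\|_{S_p}^p=\tr((A^{\top}A)^{p/2})$ as a signed sum over closed walks of length $p$ in the bipartite graph of $A$. By making the gadgets edge-disjoint and spectrally isolated, every non-vanishing closed walk lies entirely inside a single hyperedge gadget, so the norm decomposes as $\sum_{e\in M}\chi_e$ for local contributions $\chi_e$ depending only on $x\big|_e$ and $w_e$. A Fourier calculation on $\{0,1\}^t$ then gives $\chi_e=+c_p$ when $(Mx\oplus w)_e=0$ and $\chi_e=-c_p$ when $(Mx\oplus w)_e=1$ for an explicit constant $c_p>0$, so the YES and NO BHH instances yield Schatten norms differing by a $\Theta(1)$ multiplicative factor (after a fixed additive shift, obtained by super-imposing an input-independent reference block to keep the norm positive). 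Consequently a $(1\pm\epsilon(p))$-approximation with sufficiently small $\epsilon(p)>0$ decides BHH, and any one-pass row-order algorithm using $s$ bits of storage yields a one-way protocol of message length $s$, forcing $s=\Omega(N^{1-4/\lfloor p\rfloor_4})=\Omega(n^{1-4/\lfloor p\rfloor_4})$.

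The main obstacle is the gadget design, which must simultaneously (i) remain $O_p(1)$-sparse, (ii) confine every length-$p$ closed walk to a single hyperedge gadget so that no cross-gadget term pollutes the sum, and (iii) depend on $x$ and $w_e$ so that the sign of $\chi_e$ exactly tracks $(Mx\oplus w)_e$. Condition (ii) is the most delicate: spectral isolation is needed to kill walks that would otherwise cross between gadgets through shared vertices, typically achieved by giving each gadget a rigid "spine" whose eigenvalues dominate. The case $p\equiv 2\pmod 4$ is accommodated by the length-$2$ pendant, whose contribution factors out of every surviving closed walk and reduces the analysis to the multiple-of-$4$ case. Verifying that higher-order cross-walk terms are suppressed for small enough $\epsilon(p)$, and that the additive shift does not wipe out the $(1+\epsilon(p))$-gap, is the bulk of the technical work.
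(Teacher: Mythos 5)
Your reduction targets the right problem (Boolean Hidden Hypermatching with arity $t=\lfloor p\rfloor_4/4$), and the paper also goes through BHH, but you diverge from the paper in a significant way: the paper reuses the $M_{m,l}(\gamma)$ block construction of Li--Woodruff verbatim (observing it already works in row-order) and devotes all its effort to verifying the \emph{expectation-gap} condition $\Exp{l\sim\mathcal{E}(t)}{f(M_{m,l})}\neq\Exp{l\sim\mathcal{O}(t)}{f(M_{m,l})}$ via a binomial/Stirling-number computation, whereas you propose to design a brand-new ``signed bipartite cycle'' gadget with a pointwise parity dependence. That is a genuinely different route, and if it worked it would arguably be cleaner, but as written it has two concrete gaps.

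First, the central claim ``$\chi_e=+c_p$ when $(Mx\oplus w)_e=0$ and $\chi_e=-c_p$ when $(Mx\oplus w)_e=1$'' cannot hold: when the gadgets are vertex-disjoint blocks (which you need so that $\|A\|_{S_p}^p=\sum_e \chi_e$), each $\chi_e=\|A_e\|_{S_p}^p\ge 0$ is a nonnegative Schatten norm of a block, so it cannot flip sign. The signed-cycle intuition is fine --- the spectrum of a signed cycle really does depend only on the product of its signs --- but what you actually get is two distinct nonnegative values $v_0,v_1$, and you must \emph{prove} $v_0\neq v_1$ for the particular cycle length you chose and for the particular exponent $p$; this nontrivial verification (which in the paper is exactly the Stirling-number computation, and which in fact determines that the gap holds only up to $t\le p/4$, i.e.\ not for larger arity) is precisely the step your proof sketch defers to a ``Fourier calculation'' without carrying it out. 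Second, and more fundamentally for row-order, Alice holds $x$ but not $M$, so she cannot lay out her rows as arcs of per-hyperedge cycles: which rows must be wired into a common cycle is determined by $M$, which only Bob knows. The Li--Woodruff blocks are specifically structured so that the $x$-dependent part is an isolated diagonal $D_{m,l}$ that Alice can stream obliviously and Bob's all-ones block does all the ``matching.'' Your gadget needs an analogous decoupling (Alice contributes isolated rows; Bob alone closes the cycles while keeping everything $O_p(1)$-sparse) and without spelling that out the reduction does not yet yield a one-way protocol. Fixing these two points would make the argument go through, but at that point the work required is comparable to the paper's gap-condition calculation.
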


We can further extend our primary algorithmic technique
(from Theorem~\ref{thm:row-orderAlgGeneral})
in several different ways,
and obtain improved algorithms for special families of matrices,
algorithms in the more general turnstile model, 
and algorithms with a trade-off between the number of passes
and the space requirement,
as explaind later in this section. 
Table~\ref{tbl:results} summarizes our results for row-order streams,
and compares them to bounds derived from previous work (when applicable). 

\begin{table}[t]
\begin{center}
  \caption{Bounds for Schatten norms (for even $p$) of $k$-sparse matrices in row-order streams.
    Upper bound space is counted in words.
    Lower bounds are for suitable $\epsilon(p)>0$. 
    \label{tbl:results}
  }
\begin{tabular}{ |c|l l l| }
\hline
  \textbf{Which \boldmath $p$} &\textbf{Space Bound} &\textbf{Ref.} &\textbf{Comments} \\
  \hline\hline
  &&&\\
\multirow{5}{*}{$p > 4$} & $\tilde{O}_{p, \epsilon}(k^{O(p)}n^{1 - 4/\lceil p \rceil_4})$ & \cite{BCKLWY18} &one-pass\\
							& $O_{p, \epsilon}(k^{3p/2 - 3})$ &Thm.~\ref{thm:row-orderAlgGeneral} &$\floor{p/4}+1$ passes \\
              & $O_{p, \epsilon}(k^{2ps}n^{1 - 1/s})$ &Thm.~\ref{thm:sSetSchattenAlg} &$\lfloor \frac{p}{2(s+1)} \rfloor+1$ passes \\
              & & & \\
							& $\Omega (n^{1-{4}/{\lfloor p \rfloor_4}})$ &Thm.~\ref{thm:singlePassLB} &one-pass, $k=O(1)$ \\
							& $\Omega_t(k^{p/2 - 2})$ &  \cite{BCKLWY18} & $t$ passes, $k \leq n^{2/p}$\\
\hline &&&\\
\multirow{2}{*}{$p = 4$} 	& $\tilde{O}_{p, \epsilon}(k)$ &\cite{BCKLWY18} &one-pass\\
							& $O_{p}(\varepsilon^{-2})$ & Thm.~\ref{Thm: Schatten4analysis} &one-pass, for all $k \leq n$ \\
\hline
\end{tabular}

\end{center}
\hrule
\end{table}

\paragraph{Applications for Approximating Schatten Norms.}
We show in Section~\ref{sec:Applications} two settings where,
under certain simplifying conditions,
our algorithms can be used to approximate other functions of the spectrum, and even weakly recover the entire spectrum.
The basic idea is that it suffices to compute only a few Schatten norms,
in which case our algorithms for $k$-sparse matrices in row-order streams
can be used,
and the overall algorithm will require only small space (depending on $k$).

The first setting considers spectral sums of PSD matrices. 
We use an idea from~\cite{BDKKZ17} to show that for a PSD input $A \in \R^{n \times n}$ whose eigenvalues lie in an interval $[\theta, 1)$, one can $(1 \pm 2\epsilon)$-approximate $\log \det (A)$ and $\tr(A^{-1})$ using the first $\frac{1}{\theta}\log \left( \frac{1}{\epsilon}\right)$ (integer) Schatten norms. We further show that given a Laplacian matrix whose eigenvalues lie in an interval $[0,\theta]$, one can $(1 \pm 2\epsilon)$-approximate the Estrada index using the first $(e\theta+1)\log \frac{1}{\varepsilon}$ (integer) Schatten norms.

The second setting considers recovering the spectrum of a PSD matrix using a few Schatten norms of the matrix. We use an idea from~\cite{WV16} to  approximate the spectrum of a PSD matrix whose eigenvalues lie in the interval $[0,1]$, up to $L_1$-distance $\epsilon n$ using the first $O(1/\epsilon)$ Schatten norms.

\paragraph{Experiments.} 
We validated our row-order algorithm on real-world matrices representing academic collaboration network graphs. 
The experiments show that the space needed to approximate the Schatten $6$-norm of these matrices is much smaller than the theoretical bound,
and that the algorithm is efficient also for larger $p$ values.
In fact, the matrices in our experiments have $O(1)$-sparse in every row,
but their columns are only sparse on average. 
We also experimented to check if the algorithm is robust to noise,
and found that it is indeed effective also for nearly-sparse matrices. 
Our experiments validate that the storage requirement is independent of the matrix dimensions.
See Section~\ref{sec:experiments} for details.

\subsection{Extensions of Main Results}

\paragraph{Extension I: Fewer Passes.}
We show in Section~\ref{sec:passSpaceTrade-off} how to generalize our algorithmic technique to use fewer passes over the stream, albeit requiring more space.
Our method attains the following pass-space trade-off. 
For any integer $s\ge 2$,
our algorithm in Theorem~\ref{thm:sSetSchattenAlg} makes
$t(s) = \floor{\frac{p}{2(s+1)}}+1$
passes over the stream using
$O_{p}\left(\epsilon^{-3} k^{2ps} n^{1-1/s}\right)$ words of space,
and outputs a $(1 \pm \epsilon)$-approximation to $\spn{A}{p}$
for $p \in 2\Z_{\geq 2}$.

\paragraph{Extension II: Turnstile Streams.}
We design in Section~\ref{sec:TurnstileImportanceSampling} an algorithm for \emph{turnstile} streams with an additional $\tilde{O}(\epsilon^{-O(p)}k^{3p/2 - 3}n^{1 - 2/p})$ factor in their space complexity compared to our algorithm for row-order streams.
An additional $O(n^{1 - 2/p})$ factor is to be expected since the space complexity for estimating $\ell_p$ norms of vectors in turnstile streams is $\Omega(\frac{n^{1 - 2/p}}{t})$ if the algorithm is allowed to make $t$ passes over the data.
Our algorithm for turnstile streams makes $p + 1$ passes over the stream.
The algorithm of~\cite{LW16a} for $O(1)$-sparse matrices in the turnstile model
can obviously be extended to $k$-sparse matrices.
Its space requirement is $k^{O(p)}$, 
and we believe that a straightforward extension of their analysis
yields an exponent greater than $4.75p$

\paragraph{Extension III: Special Matrix Families.}
We give in Section~\ref{sec:Row-orderImportanceSampling} improved bounds for special families of $k$-sparse matrices that may be of potential interest.
We show that for Laplacians of undirected graphs with degree at most $k \in \N$, one can $(1 \pm \epsilon)$-approximate the Schatten $p$-norm with space ${O}_p(\epsilon^{-2}k^{p/2-1})$ by making $p/2$ passes over a row-order stream. Additionally, for matrices whose non-zero entries lie in an interval $[\alpha, \beta]$ for $\alpha, \beta \in \R^+$, we can get nearly-tight upper bounds -- our algorithm uses space $O_p(\epsilon^{-2}k^{p/2 - 1}(\beta/\alpha)^{p/2-2})$, which is nearly tight compared to the $\Omega(k^{p/2 - 2})$ multi-pass lower bound given in \cite{BCKLWY18} where $\alpha=\beta=1$.

\paragraph{Schatten 4-norm.}
We show in Section~\ref{sec:row-orderSchatten4} a simple one-pass algorithm for $(1 \pm \epsilon)$-approximating the Schatten $4$-norm of \emph{any} matrix
(not necessarily sparse) given in a row-order stream,
using only $\tilde{O}_p(\epsilon^{-2})$ words of space.
This improves a previous $\tilde{O}_p(\epsilon^{-2}k)$ bound from~\cite{BCKLWY18}.

\subsection{Technical Overview}\label{sec:TechnicalOverview}

\paragraph{Upper Bounds.}
We design an estimator that is inspired by the importance sampling framework and uses multiple passes over the data to implement the estimator. To the best of the our knowledge, this is the first algorithm for computing the Schatten $p$-norm in data streams that uses an adaptive sampling approach, i.e. the probability space of the algorithm's sampling in a given pass of the data is affected by the algorithm's decisions in the previous pass.

For an integer $p \in 2 \Z_{\geq 1}$ and $q \eqdef p/2$, the Schatten $p$-norm for a matrix $A \in \R^{n \times n}$, denoted by $\|A\|_{S_p}^p$, can be expressed as
\begin{equation}
\|A\|_{S_p}^p = \tr((A A^\top)^{q}) = \sum_{i_1, \ldots, i_{q} \in [n]} \ip{i_1}{i_2}\ip{i_2}{i_3}\ldots \ip{i_{q}}{i_1} \label{eqn:Schatten-p-original}
\end{equation}
where $a_i$ is the $i^{\text{th}}$ row of matrix $A$.

The Schatten $p$-norm can be interpreted using~\eqref{eqn:Schatten-p-original} as a sum over cycles of $q$ inner-products (which we refer to informally as \emph{cycles}) between rows of $A$. We assign each cycle in the above expression to one of the rows participating in that cycle. Hence, the Schatten $p$-norm can be expressed as a sum $\sum_{i = 1}^n z_i$ where $z_i$ is the cumulative weight of all the cycles assigned to row $i$.

Our algorithm starts by sampling a row $i \in [n]$ with probability proportional to the \emph{heaviest} cycle assigned to row $i$ (i.e., largest contribution to $z_i$). In the following $p/4$ stages, it samples one cycle assigned to $i$ with probability proportional to the weight of the cycle. Since the rows \emph{and} columns are sparse, each row cannot participate in ``too many'' cycles (because it is orthogonal to any row with a disjoint support). Specifically, we show that the number of cycles assigned to each row $i$ is only a function of $k$ and $p$. It follows that sampling the first row with probability proportional to the heaviest contributing cycle is a good approximation (up to a factor depending only on $k$ and $p$) to $z_i$, the actual contribution of row $i$ to $\sum_{i \in [n]}z_i = \spn{A}{p}$.

The space complexity of sampling a row with probability proportional to its heaviest contributing cycle depends on the assigning process. A natural assigning is to assign every cycle to the row with largest $l_2$-norm participating in that cycle (breaking ties arbitrarily). Notice then that, by the Cauchy-Schwarz inequality, the heaviest contributing cycle to row $i$ is simply $\|a_i\|_2^p$.

This estimator can be implemented in the row-order model easily by using weighted reservoir sampling \cite{Vit85, BOV15}, as shown in Section~\ref{sec:Row-orderImportanceSampling}. However, implementing it in turnstile streams is more challenging (see Section~\ref{sec:TurnstileImportanceSampling}). Using approximate $L_p$-samplers presented in \cite{MW10}, we build an approximate cascaded $L_{p, 2}$-norm\footnote{The $L_{p, 2}$-norm of a matrix $A \in \R^{n \times m}$ for $p \geq 0$ is $\left( \sum_{i = 1}^n \|a_i\|_2^p \right)^{1/p}$.} sampler, to sample rows $i$ with probability proportional to $\|a_i\|_2^p$. Additionally, we use the Count-Sketch data structure to recover rows and sample cycles once we have sampled the first, ``seed'' row. This allows us to implement the estimator in turnstile data streams with an additional $\tilde{O}(\epsilon^{-O(p)}n^{1 - 2/p})$ factor in the space complexity attributed to the approximate cascaded $L_{p, 2}$-norm sampler and an additional $O_p(k^{3p/2 - 3})$ factor that comes from approximating the sampling probabilities (compared to the row-order in which the sampling probabilities can be recovered exactly).

In Section \ref{sec:passSpaceTrade-off} we generalize the design of the importance sampling estimator. Instead of assigning every cycle to a single row that appears in it, every cycle is mapped to $s$ rows that participate in it, for parameter $s \in \N$. These $s$ rows split the cycle into roughly $\frac{q}{s}$ segments such that each of these $s$ rows participates in a segment where it is the heaviest' row (by $l_2$-norm). The algorithm samples $s$ ``seed" rows and then computes all the cycles (or alternatively samples one cycle) that are assigned to these $s$ rows. Since the length of each of the segments reduces linearly with $s$, one can compute these cycles with fewer passes. However, the algorithm needs to sample more indices in order to ensure that each cycle has a sufficiently large probability of being ``hit''. This tension leads to a trade-off between passes and space. 

\paragraph{Lower Bounds.}
We obtain an $\Omega(n^{1-4/{\lfloor p \rfloor_4}})$ bits lower bound for any algorithm that estimates the Schatten $p$-norm in one-pass of the stream for even $p$ values.
Our proof analyzes for even $p$ values a construction presented in \cite{LW16a}, which is based on a reduction from the Boolean Hidden Hypermatching problem. This lower bound holds even if the input matrix is promised to be $O(1)$-sparse.

\subsection{Previous and Related Work} \label{sec:previousWork}
The bilinear sketching algorithm in \cite{LNW14} was the first non-trivial algorithm for Schatten $p$-norm estimation in turnstile streams. It requires only one-pass over the data and uses $O(\epsilon^{-2}n^{2 - 4/p})$ words of space.\footnote{They also showed a lower bound of $\Omega(n^{2 - 4/p})$ for the dimension of bilinear sketching for approximating $\spn{A}{p}$ for all $p \geq 2$.} Their algorithm uses $O(\varepsilon^{-2})$ independent $G_1AG_2^\top$ sketches, where $G_1, G_2 \in \R^{t \times n}$ are matrices with i.i.d.~Gaussian entries and $t = O(n^{1 - 2/p})$.

Inspired by this sketch, \cite{BCKLWY18} gave an almost quadratic improvement in the space complexity if the algorithm is allowed to make multiple passes over the data. Their estimator uses matrices $G_2,\ldots, G_{p} \in \R^{t \times n}$ with i.i.d.~Gaussian entries and Gaussian vector $g_1 \in \R^n$ to output $g_1^\top AG_2^\top G_2A\ldots G_pAg_1$. This estimate can be constructed in $p/2$ passes of the data and requires $O(\varepsilon^{-2})$ independent copies each using only $t = O(n^{1 - \frac{1}{p-1}})$ space.

Restricting the input matrix to be $O(1)$-sparse allows for quadratic improvement in the space complexity of one-pass algorithms as shown in \cite{LW16a}. They show that sampling $O(n^{1 - 2/p})$ rows and storing them approximately using small space (since each row is sparse) is sufficient to $(1 + \epsilon)$-approximate the Schatten $p$-norm by exploiting the fact that rows cannot ``interact'' with one another ``too much'' because of the sparsity restriction.

If we restrict the data stream to be row-order, then we can reduce the dependence on $p$ in all the above algorithms by a factor of $2$. As noted in \cite{BCKLWY18}, since $A^\top A = \sum_{i} a_ia_i^\top$ (where $a_i$ is the $i^{\text{th}}$ row of $A$) one can apply the above algorithms to $A^\top A$ instead of $A$ by updating it with the outer product of every row with itself. Since $\spn{A^\top A}{p/2}=\spn{A}{p}$ (for even $p$ values), the output is as desired and the dependence on $p$ reduces by a factor of $2$.

\paragraph{Lower Bounds.} 
Every $t$-pass algorithm designed for turnstile streams requires $\Omega(n^{1-2/p}/t)$ bits,
which follows by injecting the $F_p$-moment problem (see \cite{Gro09, Jay09}) into the diagonal elements.
Li and Woodruff  \cite{LW16a} showed that restricting the algorithm to a single pass over the turnstile stream,
leads to a lower bound $\Omega(n^{1-\varepsilon})$ bits for every fixed $\varepsilon >0$ and $p \notin 2\Z_{\geq 2}$,
even if the input matrix is $O(1)$-sparse.%
\footnote{
They also showed that for $p \in 2\Z_{\geq 2}$,
single-pass algorithms require $\Omega(n^{1-2/p})$ bits even if all non-zeros in the input matrix are constants.
}
Later \cite{BCKLWY18} proved that $\Omega(n^{1-\varepsilon})$ bits are required for $p \notin 2\Z_{\geq 2}$ even in row-order streams.
In addition, they showed (Theorem 5.4 in Arxiv version) that $t$ passes over row-order streams require space 
$\Omega(n^{1-4/p}/t)$ bits, however these matrices are actually $\Omega(n^{2/p})$-sparse
(and not $O(1)$-sparse as may be understood from Table 2 therein).
A simple adaptation of that result yields an $\Omega(k^{p/2-2}/t)$
space lower bound for $k$-sparse input matrices ($k\leq n^{2/p}$).

\section{Notation and Preliminaries}\label{sec:notationAndPrelim}

The following useful fact comparing the lengths of the rows of $A$ and its Schatten $p$-norm is proved in Appendix \ref{app:lengthLeqSchatten-p}.
\begin{fact}\label{fact:lengthLeqSchatten-p}
Let matrix $A \in \R^{n \times n}$ have rows $\set{a_i}_{i\in [n]}$ and let $t \geq 1$. Then $\sum_{i \in [n]} \|a_i\|_2^{2t} \leq \spn{A}{2t}.$
\end{fact}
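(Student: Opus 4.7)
The plan is to recast the inequality in spectral terms and apply Jensen's inequality. Set $M \eqdef A A^\top$, which is positive semidefinite. Then $(M)_{ii} = \|a_i\|_2^2$, and the eigenvalues of $M$ are exactly $\sigma_1^2, \ldots, \sigma_n^2$ (where $\sigma_i$ are the singular values of $A$), so
\[
  \sum_{i \in [n]} \|a_i\|_2^{2t} = \sum_{i \in [n]} M_{ii}^{t}
  \qquad\text{and}\qquad
  \|A\|_{S_{2t}}^{2t} = \sum_{i\in[n]} \sigma_i^{2t} = \operatorname{\tr}(M^t).
\]
Thus it suffices to prove the general fact that for every PSD matrix $M \in \R^{n \times n}$ and every $t \geq 1$,
\[
  \sum_{i \in [n]} M_{ii}^t \leq \operatorname{\tr}(M^t).
\]

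To prove this, I would diagonalize $M = U \Lambda U^\top$ with $U$ orthogonal and $\Lambda = \operatorname{diag}(\lambda_1, \ldots, \lambda_n)$ where $\lambda_k \geq 0$. Then for each row index $i$,
\[
  M_{ii} = \sum_{k} U_{ik}^2 \lambda_k
  \qquad\text{and}\qquad
  (M^t)_{ii} = \sum_{k} U_{ik}^2 \lambda_k^t .
\]
Since $U$ is orthogonal, $\sum_k U_{ik}^2 = 1$, so $M_{ii}$ is a convex combination of the eigenvalues $\lambda_k$. The function $x \mapsto x^t$ is convex on $[0,\infty)$ for $t \geq 1$, so Jensen's inequality gives
\[
  M_{ii}^t = \Bigl( \sum_k U_{ik}^2 \lambda_k \Bigr)^t \leq \sum_k U_{ik}^2 \lambda_k^t = (M^t)_{ii}.
\]
Summing over $i \in [n]$ yields $\sum_i M_{ii}^t \leq \operatorname{\tr}(M^t)$, as desired.

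There is no real obstacle here: the only substantive step is recognizing that $M_{ii}$ is a convex combination of the eigenvalues and invoking convexity of $x^t$. Both the reduction to $M = A A^\top$ and the final summation are mechanical.
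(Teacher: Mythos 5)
Your proof is correct, and it shares the paper's initial reduction to $M = AA^\top$ and the reformulation $\sum_i M_{ii}^t \le \tr(M^t)$, but the core step is genuinely different. The paper invokes the Schur--Horn theorem to conclude that the eigenvalue vector $\vec{\lambda}$ weakly majorizes the diagonal vector $\vec{m}$ of $M$, and then applies Schur-convexity of $y \mapsto \sum_i y_i^t$. You instead diagonalize $M = U\Lambda U^\top$ and observe that each diagonal entry $M_{ii} = \sum_k U_{ik}^2 \lambda_k$ is a convex combination of the eigenvalues, so a single application of Jensen's inequality (convexity of $x \mapsto x^t$ on $[0,\infty)$ for $t\ge 1$) yields the pointwise bound $M_{ii}^t \le (M^t)_{ii}$, which you then sum. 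Your argument is more elementary and self-contained, avoiding the external Schur--Horn citation, and it actually establishes the stronger entrywise inequality $M_{ii}^t \le (M^t)_{ii}$ for each $i$ rather than only the trace bound. The paper's route, on the other hand, establishes the stronger majorization relation $\vec{\lambda} \succ \vec{m}$, which would give the inequality for any Schur-convex function at once; for the specific power function needed here that extra generality is not used, so your approach is arguably the more economical of the two.
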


\paragraph{Importance Sampling.}

Our main algorithmic technique is inspired by the importance sampling framework, as formulated by the following theorem, proved in Appendix \ref{App:ImpSamplingProof}. 
\begin{theorem}[Importance Sampling]\label{thm:ImportanceSampling}
Let $z = \sum_{i \in [n]} z_i \geq 0$ be a sum of $n$ reals. Let the random variable $\hat{Z}$ be an estimator computed by sampling a single index $i \in [n]$ according to the probability distribution given by $\{\tau_i\}_{i = 1}^n$ and setting $\hat{Z} \leftarrow \frac{z_i}{\tau_i}$. If for some parameter $\lambda \geq 1$, each $\tau_i \geq \frac{|z_i|}{\lambda \cdot z}$, then
$$\Exp{}{\hat{Z}} = z \text{ \ \ and \ \ } \Var(\hat{Z}) \leq (\lambda z)^2.$$
\end{theorem}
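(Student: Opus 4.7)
The plan is to verify the two claims by direct computation, since this is a standard application of importance sampling. The only subtlety is that the $z_i$ are allowed to be negative (as indicated by the $|z_i|$ in the hypothesis), so we must be careful about using $|z_i|$ versus $z_i$ at the right moments.

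First, I would establish the unbiasedness. Since the index $i$ is sampled according to $\{\tau_i\}$, by definition
\[
\Exp{}{\hat{Z}} = \sum_{i \in [n]} \tau_i \cdot \frac{z_i}{\tau_i} = \sum_{i \in [n]} z_i = z,
\]
where we implicitly use that $\tau_i > 0$ whenever $z_i \neq 0$ (which is guaranteed by the hypothesis $\tau_i \geq |z_i|/(\lambda z)$ whenever $z>0$; the degenerate case $z = 0$ forces every $z_i = 0$ and the claim is trivial).

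Next, for the variance bound, I would start from $\Var(\hat Z) \leq \Exp{}{\hat Z^2}$ and compute
\[
\Exp{}{\hat Z^2} = \sum_{i \in [n]} \tau_i \cdot \frac{z_i^2}{\tau_i^2} = \sum_{i \in [n]} \frac{z_i^2}{\tau_i}.
\]
Using the hypothesis $\tau_i \geq |z_i|/(\lambda z)$, I would bound each summand as $z_i^2 / \tau_i = |z_i|^2/\tau_i \leq \lambda z \cdot |z_i|$, giving
\[
\Exp{}{\hat Z^2} \leq \lambda z \sum_{i \in [n]} |z_i|.
\]
The final step is to observe that the hypothesis itself forces $\sum_i |z_i| \leq \lambda z$: summing $\tau_i \geq |z_i|/(\lambda z)$ over $i$ and using $\sum_i \tau_i = 1$ gives $\sum_i |z_i| \leq \lambda z$. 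Substituting yields $\Exp{}{\hat Z^2} \leq (\lambda z)^2$, which bounds the variance as claimed.

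There is no real obstacle here; the main thing to be careful about is separating the roles of $z_i$ and $|z_i|$ and noting that normalization of $\{\tau_i\}$ combined with the lower-bound hypothesis already implies $\sum_i |z_i| \leq \lambda z$, which is what closes the variance calculation cleanly.
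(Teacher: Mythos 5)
Your proof is correct and follows essentially the same route as the paper's: both establish unbiasedness by direct expectation, bound the variance by $\Exp{}{\hat Z^2}$, and apply the hypothesis $\tau_i \geq |z_i|/(\lambda z)$ together with $\sum_i \tau_i = 1$. The only cosmetic difference is that the paper bounds $\left(|z_i|/\tau_i\right)^2 \leq (\lambda z)^2$ termwise and then sums against $\tau_i$, whereas you apply the bound once to get $\lambda z\sum_i|z_i|$ and then separately note $\sum_i|z_i|\leq\lambda z$; the two are an algebraic rearrangement of the same estimate.
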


\paragraph{Families of Matrices.}

We define two families of matrices that are of special interest.
\begin{itemize}
\item Let $\mathcal{L}_n \subseteq \Z^{n \times n}$ be the family of Laplacian matrices of undirected graphs $G([n],E)$ with positive edge-weights $\{w_{uv} > 0: uv\in E\}$.

\item Given positive constants $\alpha \leq \beta$, let $\Cc^{m\times n}_{\alpha, \beta} \subseteq \R^{m \times n}$ be the family of matrices $C$ such that every entry $C_{i, j}$ is either zero or in the range $[\alpha, \beta]$. For the vector case (i.e. $n=1$) we may write $\Cc^{m}_{\alpha, \beta}$.
\end{itemize}

\section{An Estimator for Schatten $p$-Norm for $p \in 2\Z_{\geq 2}$}\label{sec:ImpSamplingEstimatorPrelim}
This section introduces our importance sampling estimator for Schatten $p$-norms.
We begin in Section \ref{sec:EstimatorPreliminaries} with manipulating expression \eqref{eqn:Schatten-p-original} for the Schatten $p$-norm by assigning every summand,
i.e. a cycle of $p/2$ inner products, to its heaviest participating row, see \eqref{eqn:Schatten-p-Scaled}.
We then use this new expression in Section \ref{sec:ImpSamplingEstimator} to give an importance sampling estimator. 
In Section \ref{sec:ProjLemmas} we prove several lemmas, referred to as projection lemmas, which are key to our analysis in Section \ref{sec:estimatorAnalysis}.

\subsection{Preliminaries}\label{sec:EstimatorPreliminaries}
Fix a matrix $A \in \R^{n\times n}$ and $p \in 2\Z_{\geq 2}$. For a row $a_i$, we define the set of its \emph{neighboring rows} $N(i)  \eqdef \{l \in [n] : \supp{a_i} \cap \supp{a_l} \neq \emptyset \}$. In addition, we denote the set of neighboring rows of $a_j$ that have smaller length than row $a_i$
$$N_{S}^i(j)  \eqdef \{l\in N(j): \ \|a_l\|_2 \leq \|a_i\|_2\}.$$
Building on this, we intorduce notation for certain ``paths'' of rows. Fixing some row indices $i, i_1 \in [n]$ and an integer $t \geq 2$, we then define
\begin{align*}
\Gamma(i_1, t) & \eqdef \left\{(i_1, \ldots, i_t) : \ i_2 \in N(i_1), \ldots, i_t \in N(i_{t-1}) \right\}, \\
\Gamma_S^i(i_1, t) & \eqdef \left\{(i_1, \ldots, i_t) : \ i_2 \in N_S^i(i_1), \ldots, i_t \in N_S^i(i_{t-1}) \right\}.
\end{align*}

We further define the weights of ``paths'' of inner products: given an integer $t \geq 2$ and indices $i_1, \ldots, i_t \in [n]$, let
$$\sigma(i_1, \ldots, i_t)  \eqdef \ip{i_1}{i_2}\ip{i_2}{i_3}\ldots\ip{i_{t-1}}{i_t}.$$

Recall from \eqref{eqn:Schatten-p-original} that the Schatten $p$-norm of $A \in \R^{n \times n}$ can be expressed in terms of the product of inner products of the rows of $A$. Using the above notation we manipulate it as follows.
\begin{align}
\|A\|^p_{S_p} &= \tr \left( (AA^\top)^q \right) = \sum_{i_1, \ldots, i_{q} \in [n]} \sigma(i_1, \ldots, i_q, i_1) \\
&= \sum_{i_1} \sum_{ \substack{(i_1, \ldots, i_{q-1}) \\ \in \Gamma(i_1, q-1)} } \sum_{i_q \in N(i_1)} \sigma(i_1, \ldots, i_q, i_1) \label{eqn:Schatten-p-neighbors}\\
&= \sum_{i_1} \sum_{\substack{(i_1, \ldots, i_{q-1}) \\ \in \Gamma_S^{i_1}(i_1, q - 1)}} \sum_{i_q \in N^{i_1}_S(i_1)} c(i_1, \ldots, i_{q})\sigma(i_1, \ldots, i_q, i_1) \label{eqn:Schatten-p-Scaled}
\end{align}

where $1 \leq c(i_1, \ldots, i_{q}) \leq q $ is the number of times the sequence $(i_1, \ldots, i_q, i_1)$ or a cyclic shift of the sequence appears in Equation \eqref{eqn:Schatten-p-neighbors}.

\subsection{The Estimator}\label{sec:ImpSamplingEstimator}

Our estimator is an importance sampling estimator for the quantity in~\eqref{eqn:Schatten-p-Scaled}. To define it, we need the following quantities:
\begin{align*}
\Sc &\eqdef \bigcup_{i \in [n]} \Gamma_S^{i}(i, q-1) \\
z_{(i_1, \ldots, i_{q-1})} &\eqdef \sum_{i_q \in N^{i_1}_S(i_1)} c(i_1, \ldots, i_{q})\sigma(i_1, \ldots, i_q)\ip{i_q}{i_1} && \forall (i_1, \ldots, i_{q-1}) \in \Sc \\
z &\eqdef \sum_{ (i_1, \ldots, i_{q-1}) \in \Sc } z_{(i_1, \ldots, i_{q-1})} = \spn{A}{p} &&\text{by Equation~\eqref{eqn:Schatten-p-Scaled}.}
\end{align*}

Our importance sampling estimator, for the sum $z$, samples quantities $z_{(i_1, \ldots, i_{q-1})}$ indexed by $(i_1, \ldots, i_{q-1}) \in \Sc$ in $q-1$ steps. In the first step, it samples row $i_1\in [n]$ with probability $\frac{\|a_{i_1}\|_2^p}{\sum_{j} \|a_j\|_2^p}$. In each step $2 \leq t \leq q-1$, conditioned on sampling $i_{t-1}$ in step $t-1$ it samples row $i_t\in N_S^{i_1}(i_{t-1})$ with probability $$p^{i_1}_{i_{t - 1}} (i_t) \eqdef \frac{|\ip{i_{t-1}}{i_t}|}{\sum_{l \in N_S^{i_1}(i_{t-1})}|\ip{i_{t-1}}{l}|}.$$

Overall, a sequence $(i_1, \ldots, i_{q-1}) \in \Sc$ is sampled with probability
\begin{align*}
\tau_{(i_1, \ldots, i_{q-1})} &= \frac{\|a_{i_1}\|_2^p}{\sum_{j} \|a_j\|_2^p} \prod_{t = 2}^{q - 1} p^{i_1}_{i_{t - 1}} (i_t),\\
\intertext{and the output estimator is}
Y(A) &\eqdef \frac{1}{\tau_{(i_1, \ldots, i_{q-1})}}  \cdot z_{(i_1, \ldots, i_{q-1})}.
\end{align*}

\subsection{Projection Lemmas}\label{sec:ProjLemmas}

To analyze the estimator $Y(A)$, 
we need a few lemmas, which we call projection lemmas, for sparse matrices.
We start with two lemmas for sparse matrices,
followed by two lemmas for more specialized cases.

\begin{lemma}\label{Lem:InnerProdByLength}
For every \textit{$k$-sparse} matrix $B \in \R^{n \times k}$ with rows $b_1, \ldots, b_n$ and vector $x \in \R^k$ such that $\|x\|_2 \geq \|b_i\|_2$ for all $i \in [n]$, we have that
$$
  \frac{\|Bx\|_1}{\|x\|_2^2}
  = \sum_{i = 1}^n \frac{|\langle x, b_i\rangle|}{\|x\|_2^2} \leq k\sqrt{k}.
$$
\end{lemma}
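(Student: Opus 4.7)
The plan is to bound $\sum_{i=1}^n |\langle x, b_i\rangle|$ by swapping the order of summation and working column-by-column, rather than trying to apply Cauchy--Schwarz on the whole sum (which would lose a factor of $\sqrt{k}$).

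First I would expand using the triangle inequality and Fubini:
\[
\sum_{i=1}^n |\langle x, b_i\rangle|
\le \sum_{i=1}^n \sum_{j=1}^k |x_j|\,|B_{ij}|
= \sum_{j=1}^k |x_j| \cdot \|B_{\cdot j}\|_1,
\]
so the problem reduces to bounding each column's $\ell_1$-norm. Next, since $B$ is $k$-sparse, every column has at most $k$ nonzero entries, so Cauchy--Schwarz gives $\|B_{\cdot j}\|_1 \le \sqrt{k}\,\|B_{\cdot j}\|_2$. To bound $\|B_{\cdot j}\|_2$, I use that $|B_{ij}| \le \|b_i\|_2 \le \|x\|_2$ (from the hypothesis on $x$) together with column sparsity: at most $k$ entries of $B_{\cdot j}$ are nonzero, each of magnitude at most $\|x\|_2$, yielding $\|B_{\cdot j}\|_2^2 \le k\|x\|_2^2$, hence $\|B_{\cdot j}\|_1 \le k\|x\|_2$.

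Plugging this back in,
\[
\sum_{i=1}^n |\langle x, b_i\rangle| \le k\|x\|_2 \cdot \sum_{j=1}^k |x_j| = k\|x\|_2 \cdot \|x\|_1.
\]
Finally, since $x \in \R^k$, Cauchy--Schwarz gives $\|x\|_1 \le \sqrt{k}\,\|x\|_2$, yielding the desired bound $k\sqrt{k}\,\|x\|_2^2$ after dividing by $\|x\|_2^2$.

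I do not anticipate a real obstacle here; the only non-obvious choice is to route through the column $\ell_1$ norms rather than trying to bound $\|Bx\|_2$ via $\|B\|_{\mathrm{op}}$, which would give only a $k^2$ bound. The key insight is that the column-sparsity hypothesis is used twice, once to pass from $\|B_{\cdot j}\|_1$ to $\|B_{\cdot j}\|_2$ and once to bound $\|B_{\cdot j}\|_2$ via the row-norm hypothesis, and the ambient dimension is used once to convert $\|x\|_1$ to $\|x\|_2$; each step contributes a $\sqrt{k}$, and together they account for the $k^{3/2}$ factor.
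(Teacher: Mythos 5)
Your proof is correct, and it is essentially parallel to the paper's argument with a small difference in how the inner products are first bounded. The paper applies Cauchy--Schwarz to each $\langle x, b_i\rangle$ restricted to $\supp{b_i}$, giving $|\langle x, b_i\rangle| \le \|x_{|\supp{b_i}}\|_2\|b_i\|_2 \le \|x_{|\supp{b_i}}\|_1\|x\|_2$, then counts that each column index appears in at most $k$ supports, arriving at $k\|x\|_1/\|x\|_2$ before the final $\|x\|_1 \le \sqrt{k}\|x\|_2$. You instead open $|\langle x, b_i\rangle|$ with the triangle inequality, swap sums, and bound each column's $\ell_1$-norm by $k\|x\|_2$ using column sparsity and the entrywise bound $|B_{ij}|\le\|b_i\|_2\le\|x\|_2$; this gives exactly the same intermediate quantity $k\|x\|_1/\|x\|_2$, so the two proofs converge at that point. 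Your route is a touch more elementary (no restricted Cauchy--Schwarz, just entrywise absolute values and Fubini), while the paper's first step is formally tighter (using $\|x_{|\supp{b_i}}\|_2$ instead of entrywise magnitudes), but both are then relaxed down to the same bound so the final constant $k^{3/2}$ is identical. One small simplification you could make: the detour $\|B_{\cdot j}\|_1 \le \sqrt{k}\|B_{\cdot j}\|_2 \le k\|x\|_2$ is unnecessary, since column $j$ has at most $k$ nonzero entries each of magnitude at most $\|x\|_2$, so $\|B_{\cdot j}\|_1 \le k\|x\|_2$ directly.
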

\begin{proof}
For a vector $y \in \R^k$ and $S \subseteq [k]$, let $y_{|S}$ to be the restriction of $y$ onto its indices corresponding to set $S$.

For all $i \in [n]$, by the Cauchy-Schwarz inequality, 
$\tuple{x, b_i}
  = \tuple{x_{|\supp{b_i}}, b_i}
  \le \|x_{|\supp{b_i}}\|_2\|b_i \|_2
  $.   
Hence,
\begin{align*}
  \sum_{i = 1}^n \frac{|\langle x, b_i\rangle|}{\|x\|_2^2}
  & \leq \sum_{i = 1}^n \frac{\|x_{|\supp{b_i}}\|_2 \|b_i\|_2 }{ \|x\|_2^2 }
    \leq \sum_{i = 1}^n \frac{\|x_{|\supp{b_i}}\|_2 }{ \|x\|_2 }
    \leq \sum_{i = 1}^n \frac{\|x_{|\supp{b_i}}\|_1} {\|x\|_2}
    \leq \frac{k \|x\|_1}{\|x\|_2} , 
\end{align*}
where the last inequality follows from the sparsity of $B$
(every column index is in $\supp{b_i}$ for at most $k$ of the rows $b_i$).
The lemma now follows by a simple application of the Cauchy-Schwarz inequality.
\end{proof}

We need another, similar, lemma in order to bound the variance.

\begin{lemma}\label{Lem:InnerProdSquaredByLength}
For every \textit{$k$-sparse} matrix $B \in \R^{n \times k}$ with rows $b_1, \ldots, b_n$ and a vector $x \in \R^k$ such that $\|x\|_2 \geq \|b_i\|_2$ for all $i \in [n]$, we have that
$$\frac{\|Bx\|^2_2}{\|x\|_2^4} = \sum_{i = 1}^n \frac{\langle x, b_i\rangle^2}{\|x\|_2^4} \leq k.$$
\end{lemma}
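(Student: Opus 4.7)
The plan is to mirror the proof of Lemma \ref{Lem:InnerProdByLength} but avoid the intermediate $\ell_1$-step, and instead exploit the column sparsity of $B$ directly at the end.

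First, I would bound each squared inner product by Cauchy--Schwarz, restricted to the support of the corresponding row: since $\langle x, b_i\rangle = \langle x_{|\supp{b_i}}, b_i\rangle$, we have
\[
  \langle x, b_i\rangle^2 \;\le\; \|x_{|\supp{b_i}}\|_2^{2}\,\|b_i\|_2^{2}.
\]
Then I would use the hypothesis $\|b_i\|_2 \le \|x\|_2$ to replace $\|b_i\|_2^2$ by $\|x\|_2^2$, obtaining
\[
  \sum_{i=1}^n \langle x, b_i\rangle^2 \;\le\; \|x\|_2^{2}\sum_{i=1}^n \|x_{|\supp{b_i}}\|_2^{2}.
\]

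Next I would swap the order of summation in the remaining sum. Writing $\|x_{|\supp{b_i}}\|_2^2 = \sum_{j\in\supp{b_i}} x_j^2$, we get
\[
  \sum_{i=1}^n \|x_{|\supp{b_i}}\|_2^{2}
  \;=\; \sum_{j=1}^{k} x_j^{2}\cdot \bigl|\{i\in[n]:\ j\in\supp{b_i}\}\bigr|
  \;\le\; k\,\|x\|_2^{2},
\]
where the inequality uses that $B$ is $k$-sparse in the \emph{column} sense, so every fixed column index $j$ lies in $\supp{b_i}$ for at most $k$ indices $i$. Combining the two displays yields $\sum_i \langle x, b_i\rangle^2 \le k\|x\|_2^{4}$, and dividing by $\|x\|_2^{4}$ gives the claim.

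I do not foresee a real obstacle here; the only point deserving care is that $k$-sparsity in this paper constrains both rows and columns, so the column bound used above is legitimate. Compared to Lemma \ref{Lem:InnerProdByLength}, the argument is actually simpler because the quadratic quantity is compatible with interchanging the sum (one does not need a final Cauchy--Schwarz to convert an $\ell_1$-type quantity back to $\ell_2$), which is why the final bound $k$ is stronger than the $k\sqrt{k}$ appearing there.
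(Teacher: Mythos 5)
Your argument is correct and is essentially the paper's own proof written out in full: Cauchy--Schwarz restricted to $\supp{b_i}$, then the hypothesis $\|b_i\|_2\le\|x\|_2$, then the swap of summation using column $k$-sparsity to bound $\sum_i \|x_{|\supp{b_i}}\|_2^2 \le k\|x\|_2^2$. The paper compresses this into a two-step chain of inequalities referencing the preceding lemma, but the underlying steps are identical.
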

\begin{proof}
Following similar steps as that of Lemma~\ref{Lem:InnerProdByLength},
\begin{align*}
  \sum_{i = 1}^n \frac{\langle x, b_i\rangle ^2}{\|x\|_2^4}
  \leq \sum_{i = 1}^n \frac{\|x_{|\supp{b_i}}\|_2^2} {\|x\|_2^2}
  \leq k ,
\end{align*}
where again the last inequality follows from the sparsity of $B$. 
\end{proof}

The next two lemmas present bounds that improve over
Lemma~\ref{Lem:InnerProdByLength} in two special cases,
when the $k$-sparse matrix is a graph Laplacian, 
and when all its non-zero entries come from a bounded range.

\begin{lemma}\label{Lem:GraphInnerProdByLength}
Let $G=([n],E)$ be an undirected graph with positive edge weights $\{w_{uv}\}_{uv\in E}$. 
Let $k$ be its maximum (unweighted) degree, and let $L(G) \in \R^{n \times n}$
be its Laplacian matrix with rows $l_1, \ldots, l_n$. 
Given $u \in [n]$, let the matrix $B_u$
consist of all the rows $l_v$ where $\|l_u\|_2 \geq \|l_v\|_2$,
and interpret $B_u$ also as a set of rows. 
Then,
$$
  \frac{\|B_u l_u\|_1}{\|l_u\|_2^2}
  = \sum_{l_v \in B_u} \frac{|\langle l_u, l_v\rangle|}{\|l_u\|_2^2} \leq 2k.
  $$
(Trivially, we can also omit from $B_u$ rows where $\langle l_u, l_v\rangle = 0$.) 
\end{lemma}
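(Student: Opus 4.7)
The plan is to factor through entries of $L^2$: since $\langle l_u, l_v\rangle = (L^2)_{uv} = \sum_w L_{uw}L_{wv}$, applying the triangle inequality and swapping the order of summation gives
\[
\sum_{v \in B_u}|\langle l_u, l_v\rangle|
\;\leq\; \sum_w |L_{uw}|\cdot T_w,
\quad\text{where } T_w \eqdef \sum_{v \in B_u}|L_{wv}|.
\]
Combined with the Laplacian row identity $\sum_w |L_{uw}| = \|l_u\|_1 = 2d_u$ and the elementary bound $d_u \leq \|l_u\|_2$ (from $\|l_u\|_2^2 = d_u^2 + \sum_{w \sim u} w_{uw}^2$), it will suffice to show $T_w \leq k\|l_u\|_2$ for every $w$ (at least when $k \geq 2$); this yields the desired $2k\|l_u\|_2^2$.

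To bound $T_w$, I would split cases on whether $w \in B_u$. If $w \in B_u$, then $d_w \leq \|l_w\|_2 \leq \|l_u\|_2$, so the full row bound gives $T_w \leq \|l_w\|_1 = 2d_w \leq 2\|l_u\|_2$. If $w \notin B_u$, the diagonal entry $L_{ww} = d_w$ is absent from $T_w$, and the remaining nonzero contributions come from at most $k$ neighbors $v \sim w$ with $v \in B_u$; since $|L_{wv}| = w_{wv} \leq \|l_v\|_2 \leq \|l_u\|_2$, we get $T_w \leq k\|l_u\|_2$. Either way $T_w \leq k\|l_u\|_2$ when $k \geq 2$.

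For the edge case $k = 1$, the graph is a disjoint union of edges and isolated vertices; direct computation on any edge $\{u,v\}$ (where $l_v = -l_u$, and every other row has support disjoint from that of $l_u$) gives $|\langle l_u, l_u\rangle| + |\langle l_u, l_v\rangle| = 2\|l_u\|_2^2 = 2k\|l_u\|_2^2$, matching the claim.

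The main subtlety is the case split on $T_w$: naively bounding by $(k+1)\|l_u\|_2$ (always permitting the diagonal) gives the looser constant $2(k+1)$. To get exactly $2k$ one must observe that the diagonal $L_{ww}$ contributes to $T_w$ \emph{only} when $w \in B_u$, precisely the regime where the row-sum identity $\|l_w\|_1 = 2 d_w$ already suffices to absorb it. Once this case analysis is in place, the remaining algebra is routine.
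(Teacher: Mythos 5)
Your proof is correct, and it takes a genuinely different route from the paper's. The paper does not expand $\langle l_u, l_v\rangle$ through the entries of $L^2$; instead it reuses the intermediate chain from Lemma~\ref{Lem:InnerProdByLength} (Cauchy--Schwarz on each inner product followed by $\|y\|_2\leq\|y\|_1$ and a column-sparsity count), which gives $\|B_u l_u\|_1/\|l_u\|_2^2 \leq k\|l_u\|_1/\|l_u\|_2$, and then closes with the single Laplacian observation $\|l_u\|_1 = 2d_u \leq 2\|l_u\|_2$. Both arguments ultimately rest on the same structural fact $d_u\leq\|l_u\|_2$, but you apply it on \emph{both} sides of the swapped double sum and compensate with a case split on whether the middle index $w$ lies in $B_u$, whereas the paper applies it once and absorbs everything else into the generic column-sparsity bound. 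Your version is longer and more bare-hands, but it is also more careful: the paper's invocation of Lemma~\ref{Lem:InnerProdByLength} counts ``the number of rows $b_i$ with $j\in\supp{b_i}$,'' which for a Laplacian of maximum degree $k$ is $k+1$ (degree plus the diagonal), so the generic argument naively yields $2(k+1)$; your case split is exactly the observation needed to shave that to $2k$ --- the diagonal entry $L_{ww}$ enters $T_w$ only when $w\in B_u$, precisely where the row-sum identity $\|l_w\|_1=2d_w\leq 2\|l_w\|_2\leq 2\|l_u\|_2$ already absorbs it. The separate $k=1$ check is a small price for that sharpening and is handled correctly.
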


\begin{proof}
The main idea is that the additional matrix structure implies 
$
  \|l_u\|_1
  \le 2 \|l_u\|_2
  $,
which is better than what follows from the Cauchy-Schwarz inequality.
Indeed,
$
  \|l_u\|_2^2
  = \big(- \sum_{t\in N(u)}w_{ut}\big)^2 + \sum_{t\in N(u)}w_{ut}^2 
  \geq \big( \sum_{t\in N(u)}w_{ut} \big)^2
  = \big( \frac12 \|l_u\|_1 \big)^2
$. 
Now using this inequality in the proof of Lemma~\ref{Lem:InnerProdByLength},
we have 
$$\frac{\|B_u l_u\|_1}{\|l_u\|_2^2}\leq \frac{k \|l_u\|_1}{\|l_u\|_2}\leq 2k.$$
\end{proof}

\begin{lemma}\label{Lem:ConstantsInnerProdByLength}
For positive constants $\alpha \leq \beta$ and a \textit{$k$-sparse} matrix $B \in \Cc^{n\times k}_{\alpha, \beta}$ with rows $b_1, \ldots, b_n$ and a vector $x \in \Cc^{k}_{\alpha, \beta}$ such that $\|x\|_2 \geq \|b_i\|_2$ for all $i \in [n]$, we have that
$$\frac{\|Bx\|_1}{\|x\|_2^2} = \sum_{i = 1}^n \frac{|\langle x, b_i\rangle|}{\|x\|_2^2} \leq k \frac{\beta}{\alpha}.$$
\end{lemma}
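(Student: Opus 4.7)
The plan is to mirror the proof of Lemma~\ref{Lem:InnerProdByLength}, but replace the Cauchy--Schwarz step with Hölder's inequality so as to exploit the uniform upper bound $\beta$ on entries of the $b_i$, and then to convert $\|x\|_1$ into $\|x\|_2^2$ using the uniform lower bound $\alpha$ on nonzero entries of $x$. The resulting bound $k\beta/\alpha$ is sharper than the $k\sqrt{k}$ of Lemma~\ref{Lem:InnerProdByLength} whenever $\beta/\alpha = o(\sqrt{k})$.

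The first step is to bound a single inner product: for each $i \in [n]$, Hölder's inequality gives
\[
  |\langle x, b_i\rangle|
  = |\langle x_{|\supp{b_i}},\, b_i\rangle|
  \le \|x_{|\supp{b_i}}\|_1\,\|b_i\|_\infty
  \le \beta \,\|x_{|\supp{b_i}}\|_1 ,
\]
since every nonzero entry of $b_i$ lies in $[\alpha,\beta]$.

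The second step is to sum this bound over $i$ and exploit the $k$-sparsity of the columns of $B$: each coordinate $j \in [k]$ lies in $\supp{b_i}$ for at most $k$ choices of $i$, hence
\[
  \sum_{i=1}^n \|x_{|\supp{b_i}}\|_1 \;\le\; k\,\|x\|_1 .
\]

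The third (and key) step converts the $\ell_1$ norm of $x$ into an $\ell_2^2$ norm using the structural hypothesis $x \in \Cc^{k}_{\alpha,\beta}$: for every $j \in \supp{x}$ we have $|x_j| \ge \alpha$, so $x_j^2 \ge \alpha\,|x_j|$, and summing yields $\|x\|_2^2 \ge \alpha\,\|x\|_1$. Combining the three steps,
\[
  \sum_{i=1}^n \frac{|\langle x, b_i\rangle|}{\|x\|_2^2}
  \;\le\; \frac{\beta \cdot k\,\|x\|_1}{\alpha\,\|x\|_1}
  \;=\; k\,\frac{\beta}{\alpha},
\]
which is the desired inequality. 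I do not anticipate any real obstacle: the only subtlety is in choosing the correct Hölder pairing (placing $\|\cdot\|_\infty$ on $b_i$ so that $\beta$ appears, and $\|\cdot\|_1$ on $x$ so that $\alpha$ can cancel against one power of $\|x\|_2$), and noting that the hypothesis $\|x\|_2 \ge \|b_i\|_2$ is not actually needed in this proof — the bounded-range assumption alone suffices.
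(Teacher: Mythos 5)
Your proof is correct and follows essentially the same route as the paper's: bound each $|\langle x,b_i\rangle|$ by $\beta\,\|x_{|\supp{b_i}}\|_1$, use column $k$-sparsity to get $k\|x\|_1$, and lower-bound $\|x\|_2^2 \ge \alpha\|x\|_1$ from the entrywise lower bound on $x$ — the paper's one-line chain compresses exactly these three steps. Your side remark that the hypothesis $\|x\|_2 \ge \|b_i\|_2$ is not actually used here is accurate; it appears in the statement only for parallelism with Lemmas~\ref{Lem:InnerProdByLength} and~\ref{Lem:InnerProdSquaredByLength} and the way the estimator analysis invokes them.
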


\begin{proof}
By a direct calculation using the sparsity of $B$, 
\begin{align*}
  \sum_{i = 1}^n \frac{|\langle x, b_i\rangle|}{\|x\|_2^2}
  \leq \sum_{j=1}^k \frac{ \abs{x_j}\cdot \beta k }{ \alpha \|x\|_1 } 
  =  k \frac{\beta}{\alpha} .
\end{align*}
\end{proof}

\subsection{Analyzing the Estimator}\label{sec:estimatorAnalysis}
We now prove that the importance sampling estimator $Y(A)$ given in
Section~\ref{sec:ImpSamplingEstimator} is an unbiased estimator with a small variance.
In addition to analyzing the estimator for all $k$-sparse matrices,
we provide in Theorem~\ref{Thm:SpecialMatrices} improved bounds
for two special families of $k$-sparse matrices:
(i) Laplacians of undirected graphs and (ii) matrices whose non-zero entries lie in an interval $[\alpha, \beta]$ for parameters $0 < \alpha \leq \beta$.

\begin{theorem}\label{Thm:GeneralMatrices}
For every $p \in 2 \mathbb{Z}_{\geq 2}$ and a $k$-sparse matrix $A \in \R^{n \times n}$,
the estimator $Y(A)$ given in Section~\ref{sec:ImpSamplingEstimator} satisfies
$\Exp{}{Y(A)} = \|A\|_{S_p}^p$ and
$\Var(Y(A)) \leq O_p(k^{\frac{3p}{2} - 4}) \|A\|_{S_p}^{2p}$.
\end{theorem}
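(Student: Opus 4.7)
The unbiasedness is immediate: the values $\set{\tau_{(i_1,\ldots,i_{q-1})}}_{(i_1,\ldots,i_{q-1})\in\Sc}$ form a probability distribution on $\Sc$, and by Equation~\eqref{eqn:Schatten-p-Scaled} we have $\sum_{(i_1,\ldots,i_{q-1})\in\Sc} z_{(i_1,\ldots,i_{q-1})} = \|A\|_{S_p}^p$, so Theorem~\ref{thm:ImportanceSampling} already yields $\Exp{}{Y(A)} = \|A\|_{S_p}^p$. For the variance, Theorem~\ref{thm:ImportanceSampling} reduces the problem to exhibiting $\lambda = O_p(k^{3p/4-2})$ such that
$$\tau_{(i_1,\ldots,i_{q-1})} \;\geq\; \frac{|z_{(i_1,\ldots,i_{q-1})}|}{\lambda\cdot \|A\|_{S_p}^p} \qquad \forall\,(i_1,\ldots,i_{q-1})\in\Sc,$$
which then yields $\Var(Y(A)) \leq \lambda^2 \|A\|_{S_p}^{2p} = O_p(k^{3p/2-4})\|A\|_{S_p}^{2p}$, as claimed.

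The plan is to expand $|z_{(i_1,\ldots,i_{q-1})}| \leq q \cdot |\sigma(i_1,\ldots,i_{q-1})| \cdot \sum_{i_q\in N_S^{i_1}(i_1)} |\ip{i_{q-1}}{i_q}\ip{i_q}{i_1}|$ using $c(\cdot)\leq q$, and to observe that the $q-2$ inner-product factors in $|\sigma(i_1,\ldots,i_{q-1})|=\prod_{t=2}^{q-1}|\ip{i_{t-1}}{i_t}|$ are precisely the numerators of the $q-2$ factors comprising $\tau_{(i_1,\ldots,i_{q-1})}$. After this cancellation,
$$\frac{|z_{(i_1,\ldots,i_{q-1})}|}{\tau_{(i_1,\ldots,i_{q-1})}} \;\leq\; q \cdot \frac{\sum_j \|a_j\|_2^p}{\|a_{i_1}\|_2^p} \cdot \prod_{t=2}^{q-1}\Bigl(\sum_{l\in N_S^{i_1}(i_{t-1})}|\ip{i_{t-1}}{l}|\Bigr) \cdot \sum_{i_q\in N_S^{i_1}(i_1)}|\ip{i_{q-1}}{i_q}\ip{i_q}{i_1}|.$$
I will bound each of the $q-2$ middle sums by $k^{3/2}\|a_{i_1}\|_2^2$ via Lemma~\ref{Lem:InnerProdByLength}, and the closing sum by $k\|a_{i_1}\|_2^4$ via Cauchy--Schwarz followed by Lemma~\ref{Lem:InnerProdSquaredByLength} on each of the two resulting factors. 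Multiplying out, the exponent of $\|a_{i_1}\|_2$ becomes $2(q-2)+4-p = 2q-p = 0$, so those powers cancel cleanly; Fact~\ref{fact:lengthLeqSchatten-p} then bounds the surviving $\sum_j\|a_j\|_2^p$ by $\|A\|_{S_p}^p$. Tracking the powers of $k$ yields $\lambda = q\cdot k^{3(q-2)/2+1} = O_p(k^{3p/4-2})$, as needed.

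The main obstacle is that Lemmas~\ref{Lem:InnerProdByLength} and~\ref{Lem:InnerProdSquaredByLength} require the center vector $x$ to dominate the summed rows $\set{b_l}$ in $\ell_2$-length, whereas in our sums the center is $a_{i_{t-1}}$ while the rows $\set{a_l}_{l\in N_S^{i_1}(i_{t-1})}$ are only guaranteed to be dominated by $a_{i_1}$, not by $a_{i_{t-1}}$ itself. The remedy is to note that both lemmas' proofs only use the hypothesis to upper bound $\|b_l\|_2$ by a common scalar; replacing that scalar by an external reference length $R=\|a_{i_1}\|_2$, which also upper bounds $\|a_{i_{t-1}}\|_2$ since $(i_1,\ldots,i_{q-1})\in\Gamma_S^{i_1}(i_1,q-1)$, preserves the inequalities and gives $\sum_l|\ip{x}{b_l}|\leq k^{3/2}R^2$ and $\sum_l \ip{x}{b_l}^2 \leq kR^4$. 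With this mild generalization in hand, the computation sketched above proceeds cleanly and yields the claimed variance bound.
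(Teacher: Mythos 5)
Your proposal is correct and follows essentially the same route as the paper: cancel $\sigma(i_1,\ldots,i_{q-1})$ against the $\tau$-numerators, bound the $q-2$ intermediate sums by $k^{3/2}\|a_{i_1}\|_2^2$ via Lemma~\ref{Lem:InnerProdByLength}, bound the closing sum by $k\|a_{i_1}\|_2^4$ via Lemma~\ref{Lem:InnerProdSquaredByLength}, and finish with Fact~\ref{fact:lengthLeqSchatten-p}, yielding $\lambda=O_p(k^{3p/4-2})$. The only cosmetic difference is that you close the last sum with Cauchy--Schwarz and two invocations of Lemma~\ref{Lem:InnerProdSquaredByLength}, whereas the paper applies Young's inequality $|ab|\le\tfrac12(a^2+b^2)$ and then the same lemma twice; both give the identical factor $qk$. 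One point worth highlighting: you explicitly note that Lemmas~\ref{Lem:InnerProdByLength} and~\ref{Lem:InnerProdSquaredByLength} as stated require the center $a_{i_{t-1}}$ to dominate the summed rows, while the estimator only guarantees domination by $\|a_{i_1}\|_2$, and you correctly observe that the lemmas' proofs carry over verbatim with an external reference length $R=\|a_{i_1}\|_2$ bounding both sides; the paper relies on this same mild strengthening (invoking ``the fact that $\|a_{i_{t-1}}\|_2\le\|a_{i_1}\|_2$'') but does not make the generalization explicit.
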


\begin{proof}
We will use the importance sampling framework of Theorem \ref{thm:ImportanceSampling}. In order to do so we must first argue that the values $\tau_{(i_1, \ldots, i_{q-1})}$ for $(i_1, \ldots, i_{q-1}) \in \Sc$ indeed form a probability distribution. It is easy to see that the probabilities of sampling the first row form a distribution over $[n]$. Similarly, for every $2 \leq t \leq q - 1$, the values $p^{i_1}_{i_{t-1}}( \cdot )$ indeed form a probability distribution over the rows in $N_S^{i_1}(i_{t-1})$. The argument for $\tau_{(i_1, \ldots, i_{q-1})}$ follows by the law of total probability.

Per Theorem \ref{thm:ImportanceSampling}, it is sufficient to prove that for all $(i_1, \ldots, i_{q-1}) \in \Sc$,
\begin{align}\label{Exp:lambdaEstimator}
\frac{1}{\tau_{(i_1, \ldots, i_{q-1})}}  \cdot \left| z_{(i_1, \ldots, i_{q-1})} \right| \leq O_p(k^{\frac{3}{4}p - 2}) z
\end{align}

Fix a sequence of indices $(i_1, \ldots, i_{q-1}) \in \Sc$. Inequality \eqref{Exp:lambdaEstimator} can be shown as follows,
\begin{align*}
\frac{\left| z_{(i_1, \ldots, i_{q-1})} \right|}{\tau_{(i_1, \ldots, i_{q-1})}} &= \frac{\sum_{j} \|a_j\|_2^p}{\|a_{i_1}\|_2^p} \prod_{t = 2}^{q - 1} \frac{1}{p^{i_1}_{i_{t - 1}} (i_t)} \left| \sum_{ i_{q} \in N^{i_1}_S(i_1) } c(i_1, \ldots, i_q)\sigma(i_1, \ldots, i_q)\ip{i_q}{i_1} \right| \\
&\leq  \frac{\sum_{j}\|a_j\|_2^p}{\|a_{i_1}\|_2^p} \ \frac{\prod_{t = 2}^{q-1} \sum_{l \in N_S^{i_1}(i_{t-1})}|\ip{i_{t-1}}{l}|}{|\sigma(i_1, \ldots, i_{q-1})|} \sum_{ i_{q} \in N^{i_1}_S(i_1) } c(i_1, \ldots, i_q)\left|\sigma(i_1, \ldots, i_q)\ip{i_q}{i_1}\right|  \\
&=  \frac{\sum_{j}\|a_j\|_2^p}{\|a_{i_1}\|_2^p}  \left( \prod_{t = 2}^{q-1} \sum_{l \in N_S^{i_1}(i_{t-1})}|\ip{i_{t-1}}{l}|   \right)  \sum_{ i_q \in N^{i_1}_S(i_1) } c(i_1, \ldots, i_q)\left|\ip{i_{q-1}}{i_q}\ip{i_q}{i_1}\right| \\
\intertext{By Young's Inequality for products of numbers and the bound on $c(i_1, \ldots, i_q)$,}
&\leq  \frac{q}{2}\frac{\sum_{j}\|a_j\|_2^p}{\|a_{i_1}\|_2^p}  \left( \prod_{t = 2}^{q-1} \sum_{l \in N_S^{i_1}(i_{t-1})}|\ip{i_{t-1}}{l}|   \right)  \left( \sum_{ i_q \in N^{i_1}_S(i_1) } \ip{i_{q-1}}{i_q}^2 +  \ip{i_q}{i_1}^2 \right)\\
&=  \frac{q}{2}\sum_{j}\|a_j\|_2^p  \left(\frac{\prod_{t = 2}^{q-1} \sum_{l \in N_S^{i_1}(i_{t-1})}|\ip{i_{t-1}}{l}|}{\|a_{i_1}\|_2^{p - 4}}   \right)  \left( \sum_{ i_q \in N^{i_1}_S(i_1) } \frac{\ip{i_{q-1}}{i_q}^2 + \ip{i_{q}}{i_1}^2}{\|a_{i_1}\|_2^4} \right)\\
\intertext{By applying Lemma \ref{Lem:InnerProdSquaredByLength} to the two inner-most summations and the fact that $\|a_{i_{q-1}}\|_2 \leq \|a_{i_1}\|_2$,}
&\leq q k \cdot \sum_{j}\|a_j\|_2^p  \left(\frac{\prod_{t = 2}^{q-1} \sum_{l \in N_S^{i_1}(i_{t-1})}|\ip{i_{t-1}}{l}|}{\|a_{i_1}\|_2^{p - 4}}   \right)  \\
\intertext{By applying Lemma \ref{Lem:InnerProdByLength} and the fact that $\|a_{i_{t-1}}\|_2 \leq \|a_{i_1}\|_2$ for any $2\leq t \leq q-1$, }
&\leq qk \sum_{j}\|a_j\|_2^p \left( \prod_{t = 2}^{q-1} k \sqrt{k} \right)
= qk^{\frac{3p}{4} - 2} \sum_{i}\|a_i\|_2^p \leq \frac{pk^{\frac{3p}{4} - 2}}{2}  \spn{A}{p}
\end{align*}
where the last inequality follows from Fact \ref{fact:lengthLeqSchatten-p}.
\end{proof}

\begin{theorem}\label{Thm:SpecialMatrices}
For every $p \in 2 \mathbb{Z}_{\geq 2}$ and a $k$-sparse Laplacian matrix $A \in \Lc_n$,
the estimator $Y(A)$ given in Section~\ref{sec:ImpSamplingEstimator}
satisfies $\Var(Y(A)) \leq O_p(k^{p/2-1}) \|A\|_{S_p}^{2p}$.
If instead the $k$-sparse matrix is $A\in \Cc^{n\times n}_{\alpha, \beta}$
for some $0 < \alpha \leq \beta$, then 
$\Var(Y(A))
  \leq O_p(k^{p/2 - 2}\left({\beta}/{\alpha} \right)^{p/2-2}) \|A\|_{S_p}^{2p}$.
\end{theorem}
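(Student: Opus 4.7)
The plan is to adapt the proof of Theorem~\ref{Thm:GeneralMatrices} line-by-line, substituting the two family-specific projection lemmas of Section~\ref{sec:ProjLemmas} (Lemma~\ref{Lem:GraphInnerProdByLength} and Lemma~\ref{Lem:ConstantsInnerProdByLength}) in place of the generic Lemma~\ref{Lem:InnerProdByLength} wherever it is invoked. Because the estimator $Y(A)$, its sampling distribution $\{\tau_{(i_1,\ldots,i_{q-1})}\}$, and the decomposition $z = \sum_{(i_1,\ldots,i_{q-1}) \in \Sc} z_{(i_1,\ldots,i_{q-1})} = \spn{A}{p}$ are identical to those in the general case, the unbiasedness $\Exp{}{Y(A)} = \spn{A}{p}$ follows immediately from Theorem~\ref{thm:ImportanceSampling} and requires no reproof. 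Only the parameter $\lambda$ that controls the variance needs a sharper estimate.

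I would resume the argument at the point in the proof of Theorem~\ref{Thm:GeneralMatrices} just before Lemma~\ref{Lem:InnerProdByLength} is invoked $q-2$ times, i.e.\ at the bound
\[
\frac{|z_{(i_1,\ldots,i_{q-1})}|}{\tau_{(i_1,\ldots,i_{q-1})}}
\leq qk \cdot \sum_j\|a_j\|_2^p \cdot
\frac{\prod_{t=2}^{q-1}\sum_{l\in N_S^{i_1}(i_{t-1})}|\ip{i_{t-1}}{l}|}{\|a_{i_1}\|_2^{p-4}},
\]
which already uses Lemma~\ref{Lem:InnerProdSquaredByLength} and Young's inequality on the innermost sum and is valid for any $k$-sparse matrix. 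The monotonicity $\|a_{i_{t-1}}\|_2\leq\|a_{i_1}\|_2$ built into the definition of $N_S^{i_1}(\cdot)$ is exactly the hypothesis needed to apply the specialized projection lemmas to each outer factor.

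For $A\in\Lc_n$, invoking Lemma~\ref{Lem:GraphInnerProdByLength} replaces each factor of $k\sqrt{k}\,\|a_{i_{t-1}}\|_2^2$ by $2k\,\|a_{i_{t-1}}\|_2^2$, saving a factor of $\sqrt{k}/2$ per step for a total gain of $k^{(q-2)/2}$. Using $\|a_{i_{t-1}}\|_2\le\|a_{i_1}\|_2$ to cancel the $\|a_{i_1}\|_2^{p-4}$ in the denominator and applying Fact~\ref{fact:lengthLeqSchatten-p} to convert $\sum_j\|a_j\|_2^p\leq\spn{A}{p}$ yields the stated bound after squaring $\lambda$ via Theorem~\ref{thm:ImportanceSampling}. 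For $A\in\Cc^{n\times n}_{\alpha,\beta}$, the same scheme with Lemma~\ref{Lem:ConstantsInnerProdByLength} instead replaces each outer factor $k\sqrt{k}$ by $k\beta/\alpha$, so the saving becomes $(\beta/\alpha)^{q-2}/k^{(q-2)/2}$ per step, producing the $(\beta/\alpha)^{p/2-2}$ in the final variance bound.

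The main obstacle I anticipate is purely combinatorial bookkeeping: aligning powers of $\|a_{i_1}\|_2$ that appear from the repeated use of the monotonicity $\|a_{i_{t-1}}\|_2\le\|a_{i_1}\|_2$ against the $\|a_{i_1}\|_2^{p-4}$ denominator so that they cancel exactly, and tracking the exponents of $k$ and $\beta/\alpha$ to arrive at the precise constants claimed. Since the algebraic template is already worked out in Theorem~\ref{Thm:GeneralMatrices} and the specialized lemmas plug in at exactly the same places as the generic one, no new inequality or new application of the importance sampling framework is required.
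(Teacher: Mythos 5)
For the Laplacian family your plan coincides with the paper's: re-run the variance bound of Theorem~\ref{Thm:GeneralMatrices} with Lemma~\ref{Lem:GraphInnerProdByLength} substituted for Lemma~\ref{Lem:InnerProdByLength}. For $\Cc^{n\times n}_{\alpha,\beta}$, however, you miss a second ingredient that the paper explicitly invokes and that is essential to the claimed exponents. Since every non-zero entry of $A$ lies in $[\alpha,\beta]\subset(0,\infty)$, every inner product $\ip{i}{j}$ is non-negative, so every summand $z_\omega$ in the decomposition \eqref{eqn:Schatten-p-Scaled} is non-negative. In that regime the importance-sampling framework gives the sharper bound $\Var(\hat{Z})\le\Exp{}{\hat{Z}^2}=\sum_\omega z_\omega^2/\tau_\omega\le\bigl(\max_\omega z_\omega/\tau_\omega\bigr)\sum_\omega z_\omega\le\lambda z\cdot z=\lambda z^2$, rather than the generic $(\lambda z)^2$. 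Your proposal carries $(\lambda z)^2$ through unchanged, so after the lemma substitution it yields $\Var(Y(A))\le O_p\bigl(k^{p-2}(\beta/\alpha)^{p-4}\bigr)\|A\|_{S_p}^{2p}$, not the claimed $O_p\bigl(k^{p/2-2}(\beta/\alpha)^{p/2-2}\bigr)\|A\|_{S_p}^{2p}$; without the positivity observation the exponents are roughly doubled. Note also that the paper uses Lemma~\ref{Lem:ConstantsInnerProdByLength} to replace the inner sum previously handled by Lemma~\ref{Lem:InnerProdSquaredByLength}, not only the $q-2$ outer factors.

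A smaller, bookkeeping concern: you report $\lambda=O_p(k^{p/2-1})$ for the Laplacian case and then assert that ``squaring $\lambda$ via Theorem~\ref{thm:ImportanceSampling} yields the stated bound.'' Squaring actually gives $\Var\le O_p(k^{p-2})\|A\|_{S_p}^{2p}$, not the stated $O_p(k^{p/2-1})\|A\|_{S_p}^{2p}$. Carry the exponents through to the end rather than asserting agreement; as written, your $\lambda$ is computed correctly but the final step misstates what $(\lambda z)^2$ evaluates to, and a pure lemma substitution (which is all the paper's one-sentence proof describes for this case) does not by itself close that $\lambda$-versus-$\lambda^2$ gap.
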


\begin{proof}
The bound for $\Lc_n$ (Laplacians) follows the above proof of
Theorem~\ref{Thm:GeneralMatrices} but bounding the summations
using Lemma~\ref{Lem:GraphInnerProdByLength} 
instead of Lemma~\ref{Lem:InnerProdByLength}. 

The bound for $\Cc^{n\times n}_{\alpha, \beta}$ uses a special case of the importance sampling lemma. Using the notation from Theorem~\ref{thm:ImportanceSampling}, if $z_i > 0$ for all $i \in [n]$ then one can bound the variance by $\lambda (z)^2$.
Using this, the proof follows the same argument as that of Theorem~\ref{Thm:GeneralMatrices}
but using Lemma~\ref{Lem:ConstantsInnerProdByLength}
to bound the summations bounded by Lemmas~\ref{Lem:InnerProdSquaredByLength}
and~\ref{Lem:InnerProdByLength}.
\end{proof}

\section{Implementing the Estimator: Row-Order and Turnstile Streams}\label{sec:Implement}

In this section we show how to implement the importance sampling estimator defined in Section \ref{sec:ImpSamplingEstimator} in two different streaming models, row-order and turnstile streams. We start by stating two theorems that bound the space complexity of implementing the estimator in row-order streams.
The first one is our main result from the Introduction,
and applies to all $k$-sparse matrices. 
The second theorem considers special families of $k$-sparse matrices.

\rowOrderAlgGeneralThm*

\begin{restatable}{theorem}{rowOrderAlgSpecialThm}
\label{thm:row-orderAlgSpecial}
There exists an algorithm that, given $p \in 2\Z_{\geq 2}$, $\epsilon > 0$, and a $k$-sparse matrix $A \in \Lc_n$ streamed in row-order, makes $\floor{p/4}+1$ passes over the stream using $O_p({\epsilon^{-2}}k^{p/2})$ words of space, and outputs $\bar{Y}(A)$  that  $(1 \pm \epsilon)$-approximates $\spn{A}{p}$ with probability at least $2/3$.
If instead the $k$-sparse matrix $A$ is from $\Cc^{n\times n}_{\alpha, \beta}$
for $0 < \alpha \leq \beta$, 
then the space bound is $O_p(\epsilon^{-2}k^{p/2 - 1}\left({\beta}/{\alpha} \right)^{p/2-2})$ words.

\end{restatable}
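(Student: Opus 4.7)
The plan is to reuse the row-order implementation developed for Theorem~\ref{thm:row-orderAlgGeneral} and only change the number of parallel repetitions so as to exploit the sharpened variance bounds of Theorem~\ref{Thm:SpecialMatrices}. The three ingredients are: unbiasedness of the estimator $Y(A)$ from Section~\ref{sec:ImpSamplingEstimator} (given by Theorem~\ref{Thm:GeneralMatrices}); the improved variance bounds themselves; and a standard Chebyshev-plus-median boosting argument.

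First I would define $\bar{Y}(A)$ as the median of $O(1)$ independent empirical means, where each empirical mean is the average of $r$ i.i.d.\ copies of $Y(A)$. By Chebyshev's inequality, choosing $r = \Theta(\Var(Y(A))/(\epsilon\spn{A}{p})^2)$ makes each empirical mean a $(1\pm\epsilon)$-approximation of $\spn{A}{p}$ with probability at least $3/4$, and the median then boosts the success probability to $2/3$. Substituting the bounds from Theorem~\ref{Thm:SpecialMatrices} gives $r = O_p(\epsilon^{-2}k^{p/2-1})$ in the Laplacian case $A\in\Lc_n$, and $r = O_p(\epsilon^{-2}k^{p/2-2}(\beta/\alpha)^{p/2-2})$ in the case $A\in\Cc^{n\times n}_{\alpha,\beta}$.

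Next I would invoke the single-copy row-order implementation from Theorem~\ref{thm:row-orderAlgGeneral}: one copy of $Y(A)$ can be produced in $\floor{p/4}+1$ passes using $O_p(k)$ words of space. This implementation uses weighted reservoir sampling in the first pass to pick $i_1$ with probability proportional to $\|a_{i_1}\|_2^p$, adaptive reservoir sampling in the subsequent passes for $i_2,\dots,i_{q-1}$, and a final pass that accumulates the inner sum over $i_q\in N_S^{i_1}(i_1)$, while the double $k$-sparsity caps the sizes of all relevant neighborhoods. Multiplying the $r$ parallel copies by the per-copy $O_p(k)$ space then yields totals $O_p(\epsilon^{-2}k^{p/2})$ for Laplacians and $O_p(\epsilon^{-2}k^{p/2-1}(\beta/\alpha)^{p/2-2})$ for $\Cc^{n\times n}_{\alpha,\beta}$, matching the theorem; the pass count $\floor{p/4}+1$ is inherited unchanged from the single-copy implementation.

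The main obstacle---already handled by Theorem~\ref{thm:row-orderAlgGeneral}---is orchestrating the adaptive multi-pass reservoir sampling so that only $O_p(k)$ words of auxiliary state sit alongside the (sparse) stored rows. Once that machinery is available, the improvement for $\Lc_n$ and $\Cc^{n\times n}_{\alpha,\beta}$ is purely a variance computation: the tighter projection bounds of Lemmas~\ref{Lem:GraphInnerProdByLength} and~\ref{Lem:ConstantsInnerProdByLength} plug directly into the proof template of Theorem~\ref{Thm:GeneralMatrices} to shave off the extra $k$-power, and this reduction propagates cleanly through Chebyshev into the final space estimate.
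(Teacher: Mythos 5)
Your proposal is essentially the paper's own argument: Algorithm~\ref{Alg:row-order} (with the $\floor{p/4}+1$-pass refinement of Section~\ref{sec:reducingPasses}) is run $m$ times in parallel, each copy uses $O_p(k)$ words to store the current row, and $m$ is chosen via Chebyshev to beat the variance bounds of Theorem~\ref{Thm:SpecialMatrices} (which in turn rely on Lemmas~\ref{Lem:GraphInnerProdByLength} and~\ref{Lem:ConstantsInnerProdByLength}), giving exactly the stated space bounds. The only cosmetic difference is your median-of-means wrapper, which is unnecessary here (the paper just averages, and a constant-probability guarantee already follows from Chebyshev alone), and the phrase ``boosts the success probability to $2/3$'' is backwards since $3/4>2/3$; neither affects correctness.
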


We also show that the estimator defined in Section \ref{sec:ImpSamplingEstimator} can be implemented in turnstile streams in $p/2 + 3$ passes over the stream.

\begin{restatable}{theorem}{turnstileAlgThm}\label{thm:turnstileAlg}
There exists an algorithm that, given $p \in 2\Z_{\geq 2}$, $\epsilon > 0$ and a $k$-sparse matrix $A \in \R^{n \times n}$ streamed in a turnstile fashion, makes $p/2 + 3$ passes over the stream using \newline ${O}_p(k^{3p-6}n^{1 - \frac{2}{p}}(\epsilon^{-1}\log n)^{O(p)})$ words of space, and outputs  $\bar{Y}(A)$ that $(1 \pm \epsilon)$-approximates $\spn{A}{p}$ with probability at least $2/3$.
\end{restatable}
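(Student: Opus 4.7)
The plan is to implement the importance sampling estimator from Section~\ref{sec:ImpSamplingEstimator} in the turnstile model by replacing the weighted reservoir sampling (which crucially relies on row-order arrival) with two turnstile primitives: an approximate cascaded $L_{p,2}$-sampler to pick the seed row $i_1$, and a Count-Sketch structure used to recover, given an index $j\in[n]$, the sparse row $a_j$ needed in subsequent steps.

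First, in a single initialization pass I would build in parallel (i) an approximate cascaded $L_{p,2}$-norm sampler, constructed by composing the approximate $L_p$-sampler of~\cite{MW10} at the outer ``row'' level with $L_2$-estimators at the inner ``entry'' level, so as to return an index $i_1\in[n]$ with probability $(1\pm\epsilon')\|a_{i_1}\|_2^p / \sum_j\|a_j\|_2^p$; (ii) a Count-Sketch data structure with enough columns and repetitions that each $k$-sparse row can be recovered (almost) exactly from its index, and (iii) an $F_p$-estimator of $\sum_j \|a_j\|_2^p$ to normalize the probability $\tau_{(i_1,\ldots,i_{q-1})}$. The budget for (i) is the usual $\tilde O(\epsilon^{-O(p)} n^{1-2/p})$; the Count-Sketch adds an $\tilde O(k\,\epsilon^{-O(1)})$ term; and (iii) is cheap.

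Next I would extract $i_1$ and $a_{i_1}$ from the initial sketches, and then use $q-2$ consecutive streaming passes to sample $i_2,\dots,i_{q-1}$ adaptively. In the $t$-th of these passes, having already stored $a_{i_{t-1}}$, I would run weighted reservoir sampling over the row-update stream to draw $i_t$ with probability proportional to $|\ip{i_{t-1}}{i_t}|$, while filtering to $i_t\in N_S^{i_1}(i_{t-1})$ by comparing each candidate row's estimated $\ell_2$-norm to $\|a_{i_1}\|_2$. Simultaneously within the same pass I would accumulate a fresh Count-Sketch so that $a_{i_t}$ can be recovered at the end of that pass and handed to the next one. A final pass recomputes, once $i_1,\ldots,i_{q-1}$ are all fixed, the inner sum $z_{(i_1,\ldots,i_{q-1})}=\sum_{i_q\in N^{i_1}_S(i_1)}c\cdot\sigma(i_1,\ldots,i_q)\ip{i_q}{i_1}$ by filtering the stream to rows $i_q$ with support intersecting $\supp{a_{i_{q-1}}}\cup \supp{a_{i_1}}$. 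Combined with one pre-pass to estimate $\sum_j\|a_j\|_2^p$ for normalization, this is $1 + 1 + (q-2) + 1 = q+1 \le p/2+3$ passes.

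The main obstacle is error control: Theorem~\ref{thm:ImportanceSampling} requires the exact probabilities $\tau$, whereas our implemented distribution $\tilde\tau$ is only known multiplicatively up to $(1\pm\epsilon')^{O(p)}$ and further uses the approximate seed sampler. The fix is to set $\epsilon'=\Theta(\epsilon/p)$ so that $\tilde\tau_{(i_1,\ldots,i_{q-1})} = \Theta(\tau_{(i_1,\ldots,i_{q-1})})$ and the importance-sampling bound from Theorem~\ref{Thm:GeneralMatrices}, $\Var(Y(A))\le O_p(k^{3p/2-4})\|A\|_{S_p}^{2p}$, is preserved up to a further $k^{O(p)}$ factor once Count-Sketch approximation of each row (into an inner product) is absorbed. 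Standard median-of-means boosting then requires $O(\Var/\mathbb{E}^2)=O_p(k^{3p/2-3}\epsilon^{-2})$ independent copies to drive the error to $(1\pm\epsilon)$ with probability $2/3$; multiplying the $\tilde O(\epsilon^{-O(p)}n^{1-2/p})$ per-copy cost by this number of copies yields the claimed $O_p\!\bigl(k^{3p-6}n^{1-2/p}(\epsilon^{-1}\log n)^{O(p)}\bigr)$ words of space. The extra $k^{3p-6}$ factor (rather than the $k^{3p/2-3}$ of the row-order bound) is exactly the price of having to estimate each of the $p/2-1$ inner-product sampling probabilities from Count-Sketched rows instead of reading them off directly from the stream.
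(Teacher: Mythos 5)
Your high-level plan (seed row via an approximate cascaded $L_{p,2}$-sampler, Count-Sketch to recover sparse rows, then re-run the importance-sampling analysis) is the right one and matches the paper's. But two of your implementation steps break in the turnstile model, and the error budget is set too loosely.

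First, and most seriously, you propose to draw each intermediate index $i_t$ by ``weighted reservoir sampling over the row-update stream'' with weights $|\ip{i_{t-1}}{i_t}|$, after filtering rows by comparing their $\ell_2$-norms to $\|a_{i_1}\|_2$. Reservoir sampling requires rows to arrive as completed units and all weights to be computable online as each row finishes; in the turnstile model entries of $A$ arrive as arbitrary $(i,j,\delta)$ updates in arbitrary order (possibly interleaved and with cancellations), so you never ``see a row'' and can neither compute $\ip{i_{t-1}}{i_t}$ nor $\|a_{i_t}\|_2$ on the fly. This is precisely why the paper's Algorithm~\ref{Alg:turnstile} instead Count-Sketches the $O(k)$ \emph{columns} in $\supp{a_{i_{t-1}}}$ during one pass, reconstructs the (partial) neighboring rows offline, samples $i_t$ from that reconstruction, and then needs a \emph{second} pass per stage to Count-Sketch the full row $a_{i_t}$ (whose index is only known after the first pass of that stage). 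Your plan also has a chicken-and-egg problem at exactly this point: you claim to ``accumulate a fresh Count-Sketch so that $a_{i_t}$ can be recovered at the end of that pass'', but you do not know $i_t$ until the pass ends, and Count-Sketching all of $A$ is too expensive. Consequently your pass count $q+1$ is an undercount: the $L_{p,2}$-sampler of Theorem~\ref{thm:ApproximateLp2sampling} itself takes $2$ passes, recovering $a_{i_1}$ takes one more, and each adaptive stage takes $2$. The natural count is $p+1$, and the paper only reaches $p/2+3$ by a separate idea you omit, namely the two-path stitching of Section~\ref{sec:reducingPasses}.

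Second, you set the $L_{p,2}$-sampler's relative error to $\epsilon'=\Theta(\epsilon/p)$. That is far too loose: the sampler error introduces a \emph{bias} in the estimator, and to bound $|\Exp{}{\bar Y}-\spn{A}{p}|$ by $O(\epsilon)\spn{A}{p}$ one must control $\sum |z_{(i_1,\dots,i_{q-1})}|$, which by the argument of Theorem~\ref{Thm:GeneralMatrices} costs a factor $O_p(k^{3p/4-2})$. The paper therefore sets the relative error to $O\!\bigl(\epsilon/k^{3p/4-2}\bigr)$, which is the real source of the extra $k^{\Theta(p)}$ blow-up in space relative to row-order. Your attribution of that blow-up to ``estimating each of the $p/2-1$ inner-product sampling probabilities from Count-Sketched rows'' is incorrect: since each row and column is exactly $k$-sparse, $\textsc{Count-Sketch}_k$ recovers rows \emph{exactly} (with high probability), so the intermediate transition probabilities carry no error once the rows are recovered; all the extra $k$ dependence originates from the seed-sampling bias.
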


\paragraph{Outline.} 
At a high level, the algorithms in all three theorems are similar,
and compute multiple copies of the estimator defined in Section~\ref{sec:ImpSamplingEstimator} in parallel and output their average (to reduce the variance). The algorithms differ in the number of copies, derived from Theorems~\ref{Thm:GeneralMatrices} and~\ref{Thm:SpecialMatrices}. Here, and in Sections~\ref{sec:Row-orderImportanceSampling} and~\ref{sec:TurnstileImportanceSampling}, we describe how to implement each estimator in $p/2$ stages, and in Section~\ref{sec:reducingPasses} we show how to reduce the number of stages to $\floor{p/4}+1$. The first stage samples and stores a ``seed'' row which we will denote by $a_{i_1}$. Each subsequent stage $1 < t < q$ stores two values: a row index $i_t$ (and row $a_{i_t}$ itself) and an interim estimate $Y_t  \eqdef \sigma(i_1, \ldots, i_t)$. The final stage $q$ computes and outputs  $\sum_{i_q \in N_S^{i_1}(i_1)} Y_{q-1} \cdot \ip{i_q}{i_1}c(i_1, \ldots, i_q)$, where $1 \leq c(i_1, \ldots, i_q) \leq q$ is as defined in \eqref{eqn:Schatten-p-Scaled}.

The estimator is relatively easy to implement in row-order streams using $p/2$ passes and $O_p(\epsilon^{-2}k^{3p/2 - 3})$ words of space as shown in Section \ref{sec:Row-orderImportanceSampling}. In turnstile streams however, the estimator is more difficult to implement. The technical roadblock is sampling the first, ``seed'' row $i_1 \in [n]$ with probability proportional to $\frac{\|a_{i_1}\|_2^p}{\sum_j \|a_j\|_2^p}$. We use approximate samplers for turnstile streams to get around this roadblock. For a vector $x \in \R^n$ updated in a turnstile fashion, one can sample an index $i$ with probability approximately $x_i^t/\|x\|_t^t$ for various $t \in [0, \infty)$. Such algorithms are called $L_t$-samplers and have been studied thoroughly, see e.g. \cite{CH19}. Approximate samplers introduce a multiplicative (relative) error and an additive error in the sampling probability, which need to be accounted for when analyzing the algorithm that uses the sampler.  

Thus, in order to sample rows proportional to the quantities we want, we build two subroutines in the turnstile model:
\begin{enumerate}
\item Cascaded $L_{p, 2}$-norm sampler for $A$, used to sample the seed row $i_1$ with probability approximately $\|a_{i_1}\|_2^p$. It runs in \emph{$2$-passes}, has relative error $O(\epsilon)$ and uses space $\tilde{O}_p(\epsilon^{-2}n^{1 - 2/p})$.
\item Compute inner products between a given row and its neighbors in space $\tilde{O}(k^2)$.
\end{enumerate}
Using the two subroutines we can implement the estimator in Section \ref{sec:ImpSamplingEstimator} in $p+1$ passes of the stream in space $O_p(k^{3p - 6}n^{1 -{2/p}}(\epsilon^{-1}\log n)^{O(p)})$. The additional $\tilde{O}(n^{1 - 2/p})$ space complexity factor is introduced by the approximate $L_{p, 2}$-sampler. We remark that this factor is actually unavoidable for algorithms that compute $\spn{A}{p}$ in the turnstile model, since there is an $\Omega(n^{1 - 2/p})$ lower bound for computing the $l_p$-norm of vectors in $\R^n$ (in turnstile streams), even if the algorithm is allowed multiple passes. The additional $O(k^{3p/2 - 3})$ factor in the space complexity for turnstile streams compared to row-order streams is due to the bias introduced in estimating the sampling probability of the first, ``seed'' row.

As mentioned earlier, a slightly improved version runs in $\floor{p/4}+1$ and $p/2+3$ passes for row-order and turnstile streams respectively, with the same space complexities (up to constant factors). The idea is to build two parallel paths from the same seed row and eventually ``stitch'' the two into one cycle.

\subsection{Row-Order Streams}\label{sec:Row-orderImportanceSampling}
In this section we show how to easily implement the estimator defined in Section \ref{sec:ImpSamplingEstimator} in $q = p/2$ passes over a row-order stream, i.e. a sligthly weaker version of Theorem \ref{thm:row-orderAlgGeneral}. As mentioned, in Section \ref{sec:reducingPasses} we explain how to reduce the number of passes to $\floor{p/4}+1$ using a small adjustment to the algorithm. Algorithm \ref{Alg:row-order}, computes multiple copies of the estimator in parallel using space $O(k)$ for each copy.

\begin{algorithm}[H]
\caption{Algorithm for Schatten $p$-Norm of $k$-Sparse Matrices for $p \in 2\Z_{\geq 2}$ in Row-Order Streams}\label{Alg:row-order}

\begin{algorithmic}[1]
\Statex \textbf{Input}: $A \in \R^{n \times n}$ streamed in row-order, $p \in 2\Z_{\geq 2}$, $\epsilon > 0$, $m \in \Z^+$.
\Statex
\Parallel{$m$} \Comment{Each copy is a ``walk''}
	\State $i_1, \ldots, i_q \leftarrow 0$, $Y_1, \ldots, Y_q \leftarrow 0$
	\Pass{$1$}
		\State sample \emph{one} row $i_1 \in [n]$ with probability $\frac{\|a_{i_1}\|_2^p}{\sum_{j} \|a_j\|_2^p}$ \Comment{Using Reservoir Sampling}
		\State $Y_1 \leftarrow \frac{\sum_{j} \|a_j\|_2^p}{\|a_{i}\|_2^p}$
	\EndPass
	\Pass{$2 \leq t \leq q-1$}	
		\State sample \emph{one} row $i_t \in [n]$ with probability $p^{i_1}_{i_{t-1}}(i)$ \Comment{As defined in Section \ref{sec:ImpSamplingEstimator}}
		\State $Y_t \leftarrow Y_{t-1} \cdot \frac{\ip{i_{t-1}}{i_t}}{p^{i_1}_{i_{t-1}}(i)}$
	\EndPass
	\Pass{$q$}
		\State compute $Y_q \leftarrow Y_{q-1}\sum_{i_q \in N^{i_1}_S(i_1)}\ip{i_{q-1}}{i_q}\ip{i_q}{i_1}c(i_1, \ldots, i_{q})$
	\EndPass
\EndParallel
\State \textbf{return} average of the $m$ copies of $Y_q$
\end{algorithmic}
\end{algorithm}

\begin{proof}[Proof of Theorem~\ref{thm:row-orderAlgGeneral} (version with $p/2$ passes)]
Algorithm \ref{Alg:row-order} computes the estimator defined in Section \ref{sec:ImpSamplingEstimator} $m$ times in parallel and outputs the average which we will denote by $\bar{Y}(A)$. Since the variance of the estimator is at most $C_pk^{\frac{3p}{2} - 4}$ as per Theorem \ref{Thm:GeneralMatrices}, by setting $m = \frac{Ck^{\frac{3p}{2} - 4}}{\epsilon^2}$ and the constant $C$ appropriately, the guarantee on the estimate follows by an application of Chebyshev's Inequality to $\bar{Y}(A)$.

In pass $t$, each instance of the $m$ parallel instances store the row $a_{i_t}$ along with other estimates that can be stored in $O_p(1)$ words of space. Thus the total space complexity of the algorithm is $mk = O_p(\epsilon^{-2}k^{\frac{3p}{2} - 3})$ words.
\end{proof}

The proof of Theorem~\ref{thm:row-orderAlgSpecial} (version with $p/2$ passes) follows the above by adjusting $m$ according to Theorem~\ref{Thm:SpecialMatrices}.

\subsection{Turnstile Streams}\label{sec:TurnstileImportanceSampling}

\subsubsection{Preliminaries for Approximate Sampling}
We define approximate samplers which we will use in turnstile streams to implement our estimator. Approximate $L_p$ samplers have been studied extensively, see e.g. \cite{CH19}.

\begin{definition}[Approximate $L_t$ Sampler]\label{def:Lpsampler}
Let $x \in \R^n$ be a vector and $t \geq 0$. An \emph{approximate $L_t$ sampler} with relative error $\epsilon$, additive error $\Delta$, and success probability $1 - \delta$ is an algorithm that outputs each index $i \in [n]$ with probability 
$$p_i \in (1 \pm \epsilon) \frac{|x_i|^t}{\|x\|_t^t} \pm \Delta,$$
and with probability $\delta$ the sampler is allowed to output $\mathsf{FAIL}$.
\end{definition}

If an approximate sampler has no relative error and its additive error is less than $n^{-C}$, for arbitrarily large constant $C > 0$, then it is referred to as an \emph{exact} $L_p$-sampler.

Generalizing $L_p$-samplers, we define approximate $L_{p, q}$-samplers for matrices.

\begin{definition}[Weak Approximate $L_{t, q}$ Sampler]\label{def:Lpqsampler}
Let $t, q \geq 0$ be constants and $A \in \R^{n \times m}$ be a matrix with rows $a_1, \ldots, a_n$. An \emph{approximate $L_{t, q}$ sampler} with relative error $\epsilon$, additive error $\Delta$, and success probability $1 - \delta$ is an algorithm that, conditioned on succeeding, outputs each index $i \in [n]$ with probability 
$$p_i \in (1 \pm \epsilon) \frac{\|a_i\|_q^t}{\sum_{j \in [n]} \|a_j\|_q^t} \pm \Delta,$$
and on failing, which occurs with probability $\delta$, outputs any index.
\end{definition}

We draw the attention of the reader to the success condition of the $L_{p, q}$ sampler; unlike for $L_p$ samplers, the above definition is a weaker guarantee but is sufficient for our purpose since we can absorb the probability of failure for the sampler into the failure probability of the Schatten $p$-norm algorithm.

We recall some properties of higher powers of Gaussian distributions which we will use later in the analysis of sampling subroutines that we build. First, we give the higher moments of mean zero Gaussian random variables.

\begin{fact}\label{fact:GaussianRthMomentExp}For $t \geq 0$, $r \in 2\Z_{\geq 1}$ and a random variable $X \sim \mathcal{N}(0, t^2)$, we have $$ \Exp{}{|X|^r} = t^{r}(r - 1)!!.$$
\end{fact}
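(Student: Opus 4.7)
The plan is to reduce to the standard normal case and then apply integration by parts iteratively to derive a recursion that telescopes into the double factorial.

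First I would write $X = tZ$ for $Z \sim \mathcal{N}(0,1)$, so that $\Exp{}{|X|^r} = t^r \Exp{}{|Z|^r}$. Since $r$ is even, $|Z|^r = Z^r$, and the claim reduces to showing $\Exp{}{Z^r} = (r-1)!!$. The base case $\Exp{}{Z^0} = 1$ matches $(-1)!! = 1$ under the usual empty-product convention.

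The key step is to establish the recursion $\Exp{}{Z^r} = (r-1)\Exp{}{Z^{r-2}}$ via integration by parts. Writing $\phi(z) = (2\pi)^{-1/2} e^{-z^2/2}$ and observing that $\phi'(z) = -z\phi(z)$, one rewrites
$$\Exp{}{Z^r} = \int_{-\infty}^{\infty} z^{r-1} \cdot z\phi(z)\, dz = -\int_{-\infty}^{\infty} z^{r-1}\, d\phi(z),$$
and integrates by parts. The boundary term $[-z^{r-1}\phi(z)]_{-\infty}^{\infty}$ vanishes because $\phi$ decays faster than any polynomial, leaving $(r-1)\int_{-\infty}^{\infty} z^{r-2}\phi(z)\, dz = (r-1)\Exp{}{Z^{r-2}}$.

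Iterating this recursion from $r$ down to $0$ in steps of $2$ yields $\Exp{}{Z^r} = (r-1)(r-3)\cdots 3 \cdot 1 \cdot \Exp{}{Z^0} = (r-1)!!$, and multiplying by $t^r$ gives the claim. There is no genuine obstacle here; the only minor subtlety is checking that the boundary terms vanish, which is immediate from $|z|^{r-1}e^{-z^2/2} \to 0$ as $|z| \to \infty$. An equally valid alternative would be to extract the even moments from the moment generating function $\Exp{}{e^{sZ}} = e^{s^2/2} = \sum_{k \geq 0} s^{2k}/(2^k k!)$, reading off $\Exp{}{Z^{2k}} = (2k)!/(2^k k!) = (2k-1)!!$, but the integration-by-parts route is more elementary and self-contained.
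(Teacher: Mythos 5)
Your proof is correct. The paper states this as a recalled standard fact about Gaussian moments and does not supply its own proof, so there is nothing to compare against; your reduction to the standard normal, the integration-by-parts recursion $\Exp{}{Z^r} = (r-1)\Exp{}{Z^{r-2}}$, and the telescoping down to $(r-1)!!$ constitute the classical, self-contained derivation, and the boundary terms do vanish as you note.
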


We state a concentration property for polynomial functions of independent Gaussian/Rademacher random variables called Hypercontractivity Inequalities. For an introduction to the theory of hypercontractivity, see e.g. Chapter 9 of \cite{O'Donnell2014}.

\begin{proposition}[Hypercontractivity Concentration Inequality, Theorem 1.9 \cite{SS12}]\label{thm:hypercontractivityGaussian}
Consider a degree $d$ polynomial $f(Y) = f(Y_1, \ldots, Y_n)$ of independent centered Gaussian or Rademacher random variables $Y_1, \ldots, Y_n$. Denote the variance $\sigma^2 = \Var{(f(Y))}$, then,
$$  \forall \lambda \geq 0, \quad \Prob{\left| f(Y) - \Exp{}{f(Y)} \right| \geq \lambda} \leq e^2 \exp \left( -\left( \frac{\lambda^2}{R \cdot \sigma^2} \right)^\frac{1}{d} \right)$$
where $R = R(d) > 0$ depends only on $d$.
\end{proposition}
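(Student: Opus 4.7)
My plan is to derive Proposition~\ref{thm:hypercontractivityGaussian} as a direct corollary of the $(2,q)$-hypercontractivity (Bonami--Beckner) inequality for degree-$d$ polynomials of independent Gaussian or Rademacher variables, which I would take as a classical black box (see, e.g., Chapter~9 of \cite{O'Donnell2014}). Concretely, that inequality asserts that for every $q\geq 2$,
\begin{equation}\label{eq:hypcon}
\|f(Y)-\Exp{}{f(Y)}\|_{q} \;\leq\; (q-1)^{d/2}\,\|f(Y)-\Exp{}{f(Y)}\|_{2} \;=\; (q-1)^{d/2}\sigma,
\end{equation}
where the norms are with respect to the underlying product measure. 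This is the only step where the structural assumption (a polynomial of degree~$d$ in Gaussian/Rademacher variables) is used; everything downstream is a general moment-method argument.

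Given \eqref{eq:hypcon}, the tail bound follows from Markov applied to a high moment and an optimization in the exponent. For any $q\geq 2$, Markov's inequality applied to $|f(Y)-\Exp{}{f(Y)}|^{q}$ together with \eqref{eq:hypcon} gives
\[
\Prob{|f(Y)-\Exp{}{f(Y)}|\geq \lambda}
\;\leq\; \frac{\|f(Y)-\Exp{}{f(Y)}\|_{q}^{q}}{\lambda^{q}}
\;\leq\; \left(\frac{(q-1)^{d/2}\sigma}{\lambda}\right)^{\!q}.
\]
I would then tune $q$ so that $(q-1)^{d/2}\sigma \approx \lambda/e$, i.e.\ choose $q-1 = (\lambda^{2}/(e^{2}\sigma^{2}))^{1/d}$, which is the (near-)minimizer of the right-hand side in $q$. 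Substituting this back collapses the bound to roughly $\exp(-q)$, and after absorbing the additive constant and the $e^{\pm 2/d}$ factors arising from the substitution into a purely $d$-dependent constant $R(d)$, one recovers exactly the claimed $e^{2}\exp(-(\lambda^{2}/(R\sigma^{2}))^{1/d})$.

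The main thing to watch will be the bookkeeping around $R(d)$ and the $q\geq 2$ restriction at the crossover. In the small-deviation regime where the prescribed $q$ would fall below $2$---i.e.\ $\lambda$ is not much larger than $\sigma$---the inequality is trivially true because the prefactor $e^{2}$ already exceeds $1$, so no separate case analysis is needed there. Note also that the $(\cdot)^{1/d}$ exponent and the inability to take $R$ independent of $d$ are both inherited from the $(q-1)^{d/2}$ loss in \eqref{eq:hypcon}, so any sharpening would require reopening hypercontractivity itself. Since Proposition~\ref{thm:hypercontractivityGaussian} is quoted verbatim from Theorem~1.9 of \cite{SS12}, my plan is to sketch only this short derivation and otherwise invoke their result directly rather than re-prove the underlying hypercontractive estimate.
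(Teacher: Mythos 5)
The paper does not prove Proposition~\ref{thm:hypercontractivityGaussian}: it is imported verbatim as Theorem~1.9 of \cite{SS12} and used as a black box, so there is no in-paper argument to compare your write-up against. That said, your derivation is the standard one and it is correct. Starting from the $(2,q)$-hypercontractive moment bound $\|f-\Exp{}{f}\|_q\le (q-1)^{d/2}\sigma$, applying Markov to the $q$-th moment, and then tuning $q$ so that $(q-1)^{d/2}\sigma/\lambda=1/e$ gives the tail $e^{-q}=e^{-1}\exp\bigl(-(\lambda^2/(e^2\sigma^2))^{1/d}\bigr)$ whenever the tuned $q$ is at least $2$, and the claimed bound is vacuous (exceeds $1$) when it is not; this yields the stated inequality with $R=e^2$, which is in fact independent of $d$, strictly within the allowance $R=R(d)$. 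One small point worth being explicit about if you flesh this out: the $\|g\|_q\le (q-1)^{d/2}\|g\|_2$ bound for a \emph{non-homogeneous} degree-$\le d$ centered polynomial does not follow by summing chaos components term by term (that loses a $\sqrt d$), but rather by writing $g=T_\rho(T_\rho^{-1}g)$, applying hypercontractivity to $T_\rho^{-1}g$, and noting $\|T_\rho^{-1}g\|_2\le\rho^{-d}\|g\|_2$; for Rademacher variables one also first reduces to the multilinear case via $X_i^2=1$. With that caveat spelled out, your sketch is a faithful reconstruction of the cited theorem and there is no gap.
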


\subsubsection{Weak Sampler for Cascaded Norm $L_{p, 2}$}

Before giving our construction for approximate $L_{p, 2}$ samplers in the turnstile model (Theorem \ref{thm:ApproximateLp2sampling}), we recall some core results for $L_p$ samplers that will be the algorithmic workhorse of our subroutine for $L_{p, 2}$ sampling.

One can construct algorithms for approximate $L_p$ samplers in various computational models. We look specifically at $L_p$ samplers in the turnstile streaming model. The following algorithmic guarantees exist for approximate $L_p$ samplers of vectors in turnstile streams.

\begin{theorem}[Theorem 1.2 in \cite{MW10}]\label{thm:ApproximateLpsampling} For $\delta > 0$ and $p \in 2\Z^+$, there exists an $0$-relative-error $L_p$-sampler in turnstile streams, in \textbf{$2$-passes}, with probability of outputting \textsf{FAIL} at most $n^{-C}$ where $C > 0$ is an arbitrarily large constant. The algorithm uses $O_p(n^{1 - 2/p}\log^{O(p)}n)$ space. \footnote{The original theorem statement in the paper is for $p \in [0, 2]$ but it is well-known among experts that the result extends to $p > 2$.}
\end{theorem}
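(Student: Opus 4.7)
The plan is to use the precision-sampling/scaling framework: multiply each coordinate $x_i$ by an independent random factor so that the $L_p$-sampling problem reduces to identifying the (essentially unique) maximum of a transformed stream, which can then be found using a standard heavy-hitter data structure like CountSketch in small space.

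Concretely, I would first assign independent $u_i\sim \mathrm{Uniform}(0,1]$ (generated via a hash function with sufficient limited independence) and define $y_i \eqdef x_i / u_i^{1/p}$. A direct computation with the CDFs of the $u_i$ shows that $\Pr[\arg\max_i |y_i| = j] = |x_j|^p / \|x\|_p^p$, so identifying the argmax gives an exact $L_p$ sample. Since the scaling is coordinate-wise, any turnstile update $x_i \mathrel{+}= \delta$ translates to $y_i \mathrel{+}= \delta/u_i^{1/p}$, so $y$ can be maintained in a single turnstile pass over $x$.

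In the first pass I would feed $y$ into a CountSketch with $O(n^{1-2/p}\log^{O(p)} n)$ cells to locate the $L_2$-heavy hitters of $y$. The key observation (requiring an order-statistics calculation on the $u_i$) is that with high probability the maximum coordinate $|y_{i^*}|^2$ contributes an $\Omega(n^{2/p-1})$-fraction of $\|y\|_2^2$, so it qualifies as an $L_2$-heavy hitter at that threshold; the space bound for CountSketch at this heavy-hitter threshold is precisely $n^{1-2/p}$ up to polylogarithmic factors, matching the stated bound. In the second pass I would use the candidate $i^*$ produced by the first pass, re-estimate $y_{i^*}$ and $\|y\|_2$ precisely, and output $i^*$ only when the gap between $|y_{i^*}|$ and the sketch estimate of the runner-up is certifiably large; otherwise output \textsf{FAIL}. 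Driving the failure probability below $n^{-C}$ is done by boosting with $O(C\log n)$ independent parallel copies and taking the one that passes verification.

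The main obstacle is achieving \emph{zero} relative error in the output distribution: the rejection rule used in the second pass must be symmetric across coordinates (depending only on the order statistics of $|y_i|$, not on the identity of the winning coordinate) so that conditioning on ``not \textsf{FAIL}'' preserves the target distribution $|x_i|^p/\|x\|_p^p$ exactly, without introducing any multiplicative bias. A secondary technical point is derandomizing the scalings $u_i$: storing them explicitly is prohibitive, but a Nisan-style PRG (or the limited-independence analysis tailored to precision sampling) suffices to replace the $u_i$ while preserving the tail bounds needed for the heavy-hitter argument, at the cost of only another $\log^{O(1)} n$ factor in space.
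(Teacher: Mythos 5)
This statement is a cited result (Theorem 1.2 of \cite{MW10}); the paper does not re-prove it, so there is no in-paper proof to compare against. That said, the high-level plan you sketch --- coordinate-wise random scaling, CountSketch heavy-hitters to find the scaled maximum, a verification pass, boosting, and a Nisan-style derandomization --- is indeed the standard precision-sampling route to $L_p$-samplers, and the $n^{1-2/p}$ CountSketch budget is the right threshold: for $p>2$ one needs $\|y\|_2^2\lesssim n^{1-2/p}\|x\|_p^2$ (by H\"older) while the winning coordinate has $|y_{i^*}|^2\approx\|x\|_p^2$, so a $1/\sqrt{w}$-threshold heavy-hitter structure with $w=\Theta(n^{1-2/p})$ buckets suffices.

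There is, however, a genuine error in the first step. With $u_i\sim\mathrm{Uniform}(0,1]$ and $y_i=x_i/u_i^{1/p}$, it is \emph{not} true that $\Pr[\arg\max_i|y_i|=j]=|x_j|^p/\|x\|_p^p$. Writing $s_i:=u_i/|x_i|^p=|y_i|^{-p}$, each $s_i$ is uniform on $(0,1/|x_i|^p]$, and $\arg\max_i|y_i|=\arg\min_i s_i$; the law of $\arg\min$ of independent uniforms on different intervals is not proportional to the reciprocal lengths. A two-coordinate counterexample: take $|x_1|^p=1$ and $|x_2|^p=1/2$, so the target probabilities are $2/3$ and $1/3$; a direct integral gives $\Pr[\arg\min s_i=1]=\int_0^1(1-t/2)\,dt=3/4\neq 2/3$. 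Uniform scalers therefore yield only an \emph{approximate} sampler, which is what some precision-sampling papers actually prove; your ``direct computation with CDFs'' would reveal the discrepancy rather than confirm exactness.

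The fix is to use exponential rather than uniform scalers: draw $e_i\sim\mathrm{Exp}(1)$ independently and set $y_i:=x_i/e_i^{1/p}$. Then $|y_i|^p=|x_i|^p/e_i$ and the quantities $T_i:=e_i/|x_i|^p$ are independent exponentials with rates $\lambda_i=|x_i|^p$. The competing-clocks lemma gives exactly $\Pr[\arg\min_i T_i=j]=\lambda_j/\sum_i\lambda_i=|x_j|^p/\|x\|_p^p$, and in addition $\min_i T_i$ is \emph{independent} of $\arg\min_i T_i$. This last independence is precisely what rescues your final paragraph: a verification/rejection rule that depends only on $\min_i T_i$ (and on auxiliary randomness independent of the $e_i$) is automatically independent of the identity of the winner, so conditioning on ``not $\mathsf{FAIL}$'' leaves the target distribution unchanged and the sampler is genuinely zero relative error. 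No analogous clean independence holds for the Pareto variables induced by uniform scaling, so the argument as you wrote it does not close. The rest of your outline --- maintaining $y$ linearly in a turnstile pass, the CountSketch space bound, boosting to $n^{-C}$ failure probability with $O(C\log n)$ parallel repetitions, and replacing true randomness by a PRG at a polylogarithmic cost --- is sound once the scaler is corrected.
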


For a given vector $x \in \R^n$ whose entries are streamed in a turnstile fashion, we will denote \textsc{$L_p$-Sampler}$(x, \delta)$ to be the output of the algorithm in Theorem \ref{thm:ApproximateLpsampling} with failure probability at most $\delta$. We will use this algorithm in turnstile streams for $p \geq 2$ to give an $O(\epsilon)$ relative error $L_{p, 2}$ sampler and failure probability at most $\delta$ for any given $\delta > 0$. The algorithm is fairly simple and is described in Algorithm \ref{Alg:L2psampler}.

\begin{algorithm}[H]
\caption{Approximate $L_{p, 2}$ Sampling Algorithm in Turnstile Streams}\label{Alg:L2psampler}

\begin{algorithmic}[1]
\Statex \textbf{INPUT}: $A \in \R^{n \times n}$ as a turnstile stream, $p \in \Z_{\geq 2}$, $\hat{\delta} \in (0, 1), \epsilon > 0$.
\Statex
\State Set $\hat{C}_p > 0$, $m \leftarrow \frac{\hat{C}_p{\log^p n}}{\epsilon^2}$ \Comment{$\hat{C}_p$ depends only on $p$}
\State construct $G \in \R^{n \times m}$, with i.i.d standard Gaussian entries\Comment{drawn pseudorandomly}
\State compute matrix $X \leftarrow \frac{1}{(p - 1)!!} \cdot AG$
\State $(i, j) \leftarrow \textsc{$L_p$-Sampler}(x, \hat{\delta})$ \Comment{where $x \in \R^{n^2}$ is the ``flattened'' version of $X$}
\State \textbf{return} $i$ if $L_p$-sampler didn't output $\mathsf{FAIL}$ otherwise return any index
\end{algorithmic}
\end{algorithm}

The matrix $X$, defined on line 3 in the above algorithm, can be computed ``on the fly'' given updates to $A$ in the stream.  

We then give the following theorem for approximate $L_{p, 2}$ sampling in turnstile streams by arguing  for the vector $x$ defined in Algorithm \ref{Alg:L2psampler}, the average of the $p^{\text{th}}$ power of the entries corresponding to row $i$ is tightly concentrated around $\| a_i \|_2^p$.

\begin{theorem}\label{thm:ApproximateLp2sampling} For every $\epsilon, C > 0, \delta \in (0, 1)$ and $p \in 2\Z_{\geq 2}$, Algorithm \ref{Alg:L2psampler} is an $O(\epsilon)$ relative error and $O(n^{-C})$ additive error $L_{p, 2}$ weak sampler in turnstile streams with failure probability at most $\delta$. The algorithm uses $O_p( n^{1 - 2/p}\epsilon^{-2}\log(\frac{1}{\delta})\log^{O(p)}(n))$ words of space.
\end{theorem}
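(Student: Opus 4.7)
The plan is to analyze the random matrix $X$ produced by Algorithm~\ref{Alg:L2psampler}, show via hypercontractivity that both its row-wise $p$-th power sums $N_i\eqdef\sum_{j\in[m]}X_{ij}^p$ and the total $D\eqdef\|X\|_p^p=\sum_{i}N_i$ concentrate tightly around their expectations, and then invoke the $L_p$-sampler guarantee to conclude that the output row index is distributed approximately proportionally to $\|a_i\|_2^p$. Since $g_j\sim\mathcal{N}(0,I_n)$ are independent across $j$, each entry $X_{ij}=\frac{1}{(p-1)!!}\langle a_i,g_j\rangle$ is a Gaussian $\mathcal{N}(0,\|a_i\|_2^2/((p-1)!!)^2)$, the columns $X_{\cdot,j}$ are independent, and by Fact~\ref{fact:GaussianRthMomentExp}, $\Exp{}{X_{ij}^p}=\|a_i\|_2^p/((p-1)!!)^{p-1}$. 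Hence $\Exp{}{N_i}/\Exp{}{D}=\|a_i\|_2^p/T$ where $T\eqdef\sum_i\|a_i\|_2^p$, which is precisely the target $L_{p,2}$ sampling probability; the scaling factor $1/(p-1)!!$ is immaterial for the ratio but keeps the expectation calculation explicit.

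To lift these expectation identities to a high-probability pointwise statement I will apply the Hypercontractivity Concentration Inequality (Proposition~\ref{thm:hypercontractivityGaussian}) with $d=p$. For any fixed row $i$, $N_i$ is a degree-$p$ polynomial in the $m$ \emph{independent} Gaussians $X_{i1},\ldots,X_{im}$; a direct second-moment calculation from Fact~\ref{fact:GaussianRthMomentExp} gives $\Var(N_i)=O_p(m\|a_i\|_2^{2p})$, so with $\lambda=\epsilon\Exp{}{N_i}=\Theta_p(\epsilon\,m\,\|a_i\|_2^p)$ one obtains $\lambda^2/\Var(N_i)=\Omega_p(\epsilon^2 m)$. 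For the denominator I decompose $D=\sum_j Y_j$ with $Y_j\eqdef\sum_i X_{ij}^p$, which are independent across $j$, and use Cauchy--Schwarz in the form $\Exp{}{X_{i1}^pX_{k1}^p}\leq\sqrt{\Exp{}{X_{i1}^{2p}}\Exp{}{X_{k1}^{2p}}}$ to bound $\Var(Y_1)\leq O_p(T^2)$, hence $\Var(D)=O_p(mT^2)$ and again $\lambda^2/\Var(D)=\Omega_p(\epsilon^2 m)$. Choosing the constant $\hat{C}_p$ in the algorithm large enough so that $m=\hat{C}_p\log^p(n)/\epsilon^2$ makes $(\epsilon^2 m/R)^{1/p}\geq C'\log n$, Proposition~\ref{thm:hypercontractivityGaussian} drives each per-row and the denominator concentration failure probabilities below $n^{-C-2}$; a union bound over $[n]$ then guarantees $N_i\in(1\pm\epsilon)\Exp{}{N_i}$ for every $i$ and $D\in(1\pm\epsilon)\Exp{}{D}$ simultaneously, with probability at least $1-2n^{-C-1}$.

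Conditioned on this concentration event and on the $L_p$-sampler of Theorem~\ref{thm:ApproximateLpsampling} not returning \textsf{FAIL} (which costs at most $n^{-C'}$ in failure probability), the algorithm returns row $i$ with probability exactly $\sum_j|X_{ij}|^p/\|X\|_p^p=N_i/D\in(1\pm O(\epsilon))\|a_i\|_2^p/T$, establishing the claimed $O(\epsilon)$ relative error and $O(n^{-C})$ additive error. Running $O(\log(1/\delta))$ independent parallel copies and returning the first non-\textsf{FAIL} output drives the overall failure probability below $\delta$. For space, $X$ is a linear sketch of $A$ with $nm$ coordinates that can be maintained on the fly in a turnstile stream using pseudorandom Gaussians (with seed length $\operatorname{polylog}(n)$), and feeding its flattened version in $\R^{nm}$ to the $L_p$-sampler uses $O_p((nm)^{1-2/p}\log^{O(p)}(nm))=O_p(n^{1-2/p}\epsilon^{-2}\log^{O(p)}n)$ words; multiplying by the $O(\log(1/\delta))$ amplification copies yields the stated bound. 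The main obstacle in this plan is the hypercontractivity application to $D$: its summands $N_i$ are correlated across $i$, and the right move is the \emph{column-wise} decomposition $D=\sum_j Y_j$ into independent terms combined with a cross-moment Cauchy--Schwarz estimate; this is what keeps the $\log^p n$ scaling of $m$ tight instead of the substantially worse bound one would get from treating $D$ naively as a polynomial in all $mn$ Gaussians.
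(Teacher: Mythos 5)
Your argument is correct and, for the numerator $N_i=\sum_j X_{ij}^p$, matches the paper exactly: per-row hypercontractivity via Proposition~\ref{thm:hypercontractivityGaussian} with $m = \Theta_p(\epsilon^{-2}\log^p n)$, then a union bound over $i\in[n]$, then combine with the $L_p$-sampler guarantee of Theorem~\ref{thm:ApproximateLpsampling}. Where you diverge is the denominator $D = \|X\|_p^p$: you flag the correlation of the $N_i$'s across $i$ as ``the main obstacle'' and route around it with a second hypercontractivity argument built on the column-wise decomposition $D=\sum_j Y_j$ and a Cauchy--Schwarz cross-moment bound. That step is sound but superfluous, and the ``obstacle'' is not there. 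On the union-bounded event $\{N_i\in(1\pm\epsilon)\Exp{}{N_i}\text{ for all }i\}$, which you already establish, the identity $D=\sum_i N_i$ gives $D\in(1\pm\epsilon)\Exp{}{D}$ for free, because multiplicative error is preserved when summing nonnegative quantities (each $N_i\geq 0$ since $p$ is even). This is exactly how the paper passes from per-row concentration directly to $\Prob{i^*=i}=\frac{(1\pm O(\epsilon))\|a_i\|_2^p}{(1\pm O(\epsilon))\sum_l\|a_l\|_2^p}\pm O(n^{-C})$, with no separate concentration argument for $D$. So your column-wise analysis, while correct, buys nothing here. One thing you do get right that the paper glosses over: $\Exp{}{X_{ij}^p}=\|a_i\|_2^p/((p-1)!!)^{p-1}$ rather than the paper's stated $\|a_i\|_2^p$, a harmless slip since, as you observe, the constant cancels in every ratio that matters.
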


\begin{proof}
For a fixed $i \in [n]$, notice that $x_{i, 1}, \ldots, x_{i, m}$ are independent and identically distributed as $\mathcal{N} \left(0, \frac{\|a_i\|_2^2}{((p-1)!!)^2} \right)$. Using Fact \ref{fact:GaussianRthMomentExp}, $\Exp{}{x_{i, j}^p} = \|a_i\|_2^p$ for all $j \in [m]$ since $p$ is even.

Let $i^*$ be the output of Algorithm \ref{Alg:L2psampler}. From the guarantee for $L_{p}$-samplers by Theorem \ref{thm:ApproximateLpsampling}, conditioning on the $L_p$ sampler succeeding, and setting the additive error sufficiently low, the probability that $i^* = i$ is
$$\Prob{i^* = i} = \sum_{j = 1}^m\frac{x_{i, j}^p}{\|x\|_p^p} \pm O\left(n^{-C}\right).$$
We will first show that, for a fixed $i \in [n]$, the quantity $\sum_{j = 1}^m x_{i, j}^p$ is tightly concentrated around $m\|a_i\|_2^p$ with high probability \emph{over the randomness of the Gaussian sketch}.

Set the polynomial $f : \R^m \rightarrow \R$ on the random variables $\{x_{i, j}\}_{j = 1}^m$ to be $f(x_{i, 1}, \ldots, x_{i, m}) = \sum_{j = 1}^m x_{i, j}^m$. Since the random variables $\{x_{i, j}\}_{j = 1}^m$ are independent, $$\Var(f(x_{i, 1}, \ldots, x_{i, m})) = m\Var(x_{i, *}^p) = m\|a_i\|_2^{2p}\frac{(2p-1)!! - ((p-1)!!)^2}{((p-1)!!)^2}$$
for even $p > 2$. Using this to apply the Hypercontractivity Concentration Inequality for Gaussian random variables given in Proposition \ref{thm:hypercontractivityGaussian} gives us,
$$\Prob{ \left| \sum_{j = 1}^m x_{i, j}^p - m\|a_i\|_2^p \right| \geq \epsilon m\|a_i\|_2^p } \leq e^2 \exp \left(- \left( \frac{\epsilon^2 m}{C_p} \right)^{\frac{1}{p}} \right)$$ where $C_p$ is a constant only dependent on $p$.

By setting $\hat{C}_p$ in Algorithm \ref{Alg:L2psampler} appropriately, we can apply the the union bound over all $i \in [n]$ to obtain,
$$\Prob{i^* = i} = \frac{(1 \pm O(\epsilon))\|a_{i}\|_2^p}{(1 \pm O(\epsilon))\sum_{l = 1}^n \|a_l\|_2^p}  \pm O(n^{-C})\ \ \ \ \ \ \ \ \text{for all $i \in [n]$}$$ with probability at least $1 -
\hat{\delta} -n^{-\hat{c}}$ (where $\hat{c}$ is dependent on $\hat{C_p}$). Setting $\hat{\delta}$ appropriately in Algorithm \ref{Alg:L2psampler} gives us the theorem.
\end{proof}

\subsubsection{Recovering Rows and Their Neighbors}
\label{sec:NeighborsRecoveryTurnstile}
We also give some subroutines to recover rows and their neighbors so that we can compute inner-products between rows, sample neighbors and compute the probabilities for the estimator. The algorithmic core for these subroutines will be sparse-recovery algorithms which can be implemented using the Count-Sketch data structure described below.

\begin{theorem}[Count-Sketch \cite{CCF04}]\label{thm:Count-Sketch}
For all $w, n \in \Z^+$ and $\delta \in (0,1)$, there is a randomized linear function $M : \R^n \leftarrow \R^s$ with $S = O(w\log(n/\delta))$ and a recovery algorithm $A$ satisfying the following. For input $x \in \R^n$, algorithm $A$ reads $Mx$ and outputs a vector $\tilde{x} \in \R^n$ such that
$$\forall x \in \R^n, \quad \Prob{\|x - \tilde{x}\|_\infty \leq \frac{1}{\sqrt{w}} \min_{{x'} : \|x'\|_0 = w} \|x - x'\|_2} \geq 1 - \delta.$$
\end{theorem}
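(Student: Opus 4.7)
I plan to prove the Count-Sketch guarantee by constructing a standard sketch with $d \cdot b$ counters, where $b = \Theta(w)$ is the number of buckets per row and $d = \Theta(\log(n/\delta))$ is the number of independent repetitions, and then analyzing a single-row estimator with a per-row error bound of the desired form, which I then boost via the median trick. Concretely, for each repetition $t \in [d]$ I take a pairwise (actually $2$-wise suffices) independent hash $h_t : [n] \to [b]$ and a $4$-wise independent sign $s_t : [n] \to \{-1,+1\}$, store the linear sketch $C_t[j] \eqdef \sum_{i : h_t(i) = j} s_t(i) x_i$, and set the per-repetition estimator $\tilde{x}_i^{(t)} \eqdef s_t(i) \cdot C_t[h_t(i)]$. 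The final estimate is $\tilde{x}_i \eqdef \mathrm{median}_{t} \tilde{x}_i^{(t)}$.

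The key step is showing that a single repetition achieves the $\ell_\infty / \ell_2$ tail error with constant probability. Fix $i \in [n]$ and let $T \subseteq [n]$ be the indices of the top-$w$ entries of $x$ by absolute value (breaking ties arbitrarily and excluding $i$ if it belongs there), and write $x_{-T}$ for the vector $x$ with coordinates in $T \cup \{i\}$ zeroed out. Decompose $\tilde{x}_i^{(t)} - x_i = E_{\mathrm{head}} + E_{\mathrm{tail}}$, where $E_{\mathrm{head}} = \sum_{j \in T,\, h_t(j)=h_t(i)} s_t(i)s_t(j) x_j$ and $E_{\mathrm{tail}}$ is the analogous sum over $j \notin T \cup \{i\}$. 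For the head term, each $j \in T$ collides with $i$ with probability $1/b$, so by a union bound and $b = \Theta(w)$, with probability at least $5/6$ over $h_t$ we have $E_{\mathrm{head}} = 0$. For the tail term, using pairwise independence of the signs $s_t$ (conditional on $h_t$) the cross terms vanish in expectation and $\mathbb{E}[E_{\mathrm{tail}}^2 \mid h_t] = \sum_{j \notin T \cup \{i\}, h_t(j)=h_t(i)} x_j^2$; taking expectation over $h_t$ gives $\mathbb{E}[E_{\mathrm{tail}}^2] \leq \|x_{-T}\|_2^2 / b$, and Markov upgrades this to $E_{\mathrm{tail}}^2 \leq \|x_{-T}\|_2^2 / w$ with probability at least $5/6$.

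Combining the two events by a union bound, a single repetition satisfies $|\tilde{x}_i^{(t)} - x_i| \leq \frac{1}{\sqrt{w}} \|x_{-T}\|_2 \leq \frac{1}{\sqrt{w}} \min_{x' : \|x'\|_0 = w}\|x - x'\|_2$ with probability at least $2/3$, where the last inequality uses that $x_{-T}$ is precisely the optimal $w$-sparse tail. I then boost to $1 - \delta/n$ per coordinate by taking the median over $d = \Theta(\log(n/\delta))$ independent repetitions: a standard Chernoff bound on the indicator of ``success'' in each repetition shows the median is a successful estimate except with probability $\delta/n$. A final union bound over the $n$ coordinates gives the uniform $\ell_\infty$ guarantee with probability $1 - \delta$, and the total sketch size is $s = b \cdot d = O(w \log(n/\delta))$ as claimed.

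The only subtle step is the conditioning in the variance calculation for $E_{\mathrm{tail}}$: I fix $h_t$, use $2$-wise independence of $s_t$ to kill cross terms, then take expectation over $h_t$; this is clean because $s_t$ is independent of $h_t$. A minor bookkeeping issue is that the theorem quantifies over all $x \in \R^n$ with a single probability $1-\delta$, but since the sketch is chosen before seeing $x$, the per-coordinate union bound suffices as long as it is applied to the (fixed but unknown) coordinates of $x$; no uniformity over $x$ is actually needed beyond what the union bound over $i \in [n]$ provides.
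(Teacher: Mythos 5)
The paper does not prove Theorem~\ref{thm:Count-Sketch}: it is imported verbatim from \cite{CCF04} and used as a black box for the sparse-recovery subroutines in Section~\ref{sec:NeighborsRecoveryTurnstile}, so there is no in-paper argument to compare against. Your reconstruction is the standard Count-Sketch analysis and is correct: the head/tail decomposition around the top-$w$ entries, the $2$-wise sign cancellation plus Markov for the tail, the $5/6$-$5/6$-$2/3$ union bound per repetition, the median boost to $1-\delta/n$, and the final union bound over coordinates all go through, giving $s = b\cdot d = O(w\log(n/\delta))$. Two cosmetic slips worth noting: you write ``pairwise (actually $2$-wise suffices)'' for the hash, which is vacuous since pairwise and $2$-wise are synonyms (presumably you meant to flag that $4$-wise is unnecessary); and you posit $4$-wise independent signs but your variance calculation only invokes $2$-wise independence of $s_t$, so $2$-wise signs already suffice for the argument as written. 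Your observation that the statement is a per-input (not uniform-over-$x$) guarantee is the right reading of the $\forall x$ quantifier outside the probability, and your union bound over $i\in[n]$ is exactly what is needed.
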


Denote the output of a Count-Sketch algorithm on vector $x \in \R^n$ with parameter $w \in \Z^+$ and failure probability $\delta \geq 0$ to be $\textsc{Count-Sketch}_w(x, \delta)$. Notice that if it is guaranteed that $x$ is $k$-sparse, i.e. $\|x\|_0 \leq k$, then the output $\textsc{Count-Sketch}_k(x, \delta)$ recovers the vector $x$ exactly with probability at least $1 - \delta$ because $\min_{\tilde{x} : \|\tilde{x}\|_0 = k} \|x - \tilde{x}\|_2 = 0$ for every $k$-sparse vector $x$.

Reverting to our setting of $k$-sparse matrices in turnstile streams, given a target index $i \in [n]$, it is clear how to recover row $a_i$ using $\tilde{O}(k)$ space using the Count-Sketch algorithm stated. Given a row $a_i$, we can recover the neighboring rows $\{a_j : j \in N(i)\}$ by running $\textsc{Count-Sketch}_k(A_{*, j}, \tilde{\delta})$ for each $j \in \supp{a_i}$ (where $A_{*, j}$ corresponds to the $j^{\text{th}}$ column of $A$). Since each column and row is $k$-sparse, with $\tilde{O}(k^2)$ space, we can recover the neighbors of row $a_i$ given access to $a_i$. In addition, by setting the failure probability to $\frac{\delta}{k+1}$ in the above calls to $\textsc{Count-Sketch}_k$, our recovery subroutine will succeed with probability at least $1 - \delta$.

\subsubsection{Algorithm for Turnstile Streams}

We are now ready to present the algorithm implementing the estimator stated in Section \ref{sec:ImpSamplingEstimator} for turnstile streams. We note that unlike in row-order streams, we cannot recover the probability of sampling the first row exactly in turnstile streams. Since the output probability of the samplers is approximate, it introduces some bias in the estimator which we will have to bound. Therefore, the proof of correctness for this algorithm slightly deviates from that given in Theorem \ref{thm:ImportanceSampling} but uses the same underlying ideas.

Let us introduce notation for the subroutines we will need. Denote by $\textsc{$L_{p, 2}$-Sampler}(A, \epsilon, \delta)$ the output of the approximate $L_{p, 2}$ sampler defined in Algorithm \ref{Alg:L2psampler} with relative error $\epsilon$, and failure probability $\delta$. Additionally, we will need to estimate the cascaded norm $L_{p, 2}$ of $A$ in order to bias the quantity we sample in our importance sampling estimator. Denote by $\textsc{$L_{p, 2}$-NormEstimator}(A, \epsilon, \delta)$ the output of an algorithm for estimating the $L_{p, 2}$-norm of $A$ with relative error $\epsilon$ and failure probability $\delta$, such as in Section 4 of \cite{JW09}.

We describe our algorithm for turnstile streams in Algorithm \ref{Alg:turnstile}, which runs $p+1$ passes over the data, i.e.  a sligthly weaker version of Theorem \ref{thm:turnstileAlg}. As mentioned, the number of passes can be reduced to $\floor{p/2}+3$ using the extra insight of Section \ref{sec:reducingPasses}. 

\begin{algorithm}[H]
\caption{Algorithm for Schatten $p$-norm of $k$-Sparse Matrices for $p \in 2\Z_{\geq 2}$ in Turnstile Streams}\label{Alg:turnstile}

\begin{algorithmic}[1]
\Statex \textbf{Input}: $A \in \R^{n \times n}$ in a stream with turnstile updates, $p \in 2\Z_{\geq 2}$, $\epsilon > 0$, $m \in \Z^+$.
\Statex
\Parallel{$m$}	
	\State $i_1, \ldots, i_q \leftarrow 0$, $Y_1, \ldots, Y_q \leftarrow 0$
	\Stage{$1$} \Comment{takes 3 passes} \label{step:stage1turnstile}
		\State $i_1 \leftarrow \textsc{$L_{p, 2}$-Sampler}(A, \frac{\epsilon}{k^{3p/4 - 2}}, \frac{1}{100})$
		\State $\tilde{a}_{i_1} \leftarrow \textsc{Count-Sketch}_k(a_{i_1}, \frac{1}{100})$
		\State $D_1 \leftarrow \textsc{$L_{p, 2}$-NormEstimator}(A, \epsilon, \frac{1}{100})$\label{step:l_p_2Sampler}
		\State $Y_1 \leftarrow \frac{D_1}{\|\tilde{a}_{i_1}\|_2^p}$
	\EndStage
	\Stage{$2 \leq t \leq q-1$}	\Comment{each stage takes 2 passes}
		\State $\tilde{C}_{t-1} \leftarrow \{\textsc{Count-Sketch}_k(A_{*, j}, \frac{1}{100kq}) : j \in \supp{\tilde{a}_{i_{t-1}}}\}$
		\State reconstruct rows $\tilde{R}_{t-1} \leftarrow \{r_j : \text{row $j$ has support in } \tilde{C}_{t-1} \text{ and has $l_2$-norm less than $\tilde{a}_{i_1}$}\}$.
		\State $D_t \leftarrow \sum_{j \in \tilde{R}_{t-1}} \langle \tilde{a}_{i_{t-1}}, r_j \rangle$
		\State sample row index $i_t \in \supp{\tilde{R}_{t-1}}$ with probability $\frac{\langle \tilde{a}_{i_{t-1}}, r_{i_t} \rangle}{D_t}$
		\State $\tilde{a}_{i_t} \leftarrow \textsc{Count-Sketch}_k(a_{i_t}, \frac{1}{100q})$
		\State $Y_t \leftarrow Y_{t-1} \cdot \frac{D_t}{\tildeip{i_{t-1}}{i_{t}}} \cdot \tildeip{i_{t-1}}{i_{t}}$
	\EndStage
	\Stage{$q$}
		\State $\tilde{C}_{q-1} \leftarrow \{\textsc{Count-Sketch}_k(A_{*, j}, \frac{1}{100k}) : j \in \supp{\tilde{a}_{i_{q-1}}}\}$	
		\State reconstruct rows $\tilde{R}_{q-1} \leftarrow \{r_j : \text{row $j$ has support in } \tilde{C}_{q-1} \text{ and has $l_2$-norm less than $\tilde{a}_{i_1}$}\}$.
		\State compute $$Y_q \leftarrow Y_{q-1}\sum_{r_j \in \tilde{R}_{q-1}}{\langle \tilde{a}_{i_{q-1}}, r_j \rangle}{\langle r_j, \tilde{a}_{i_1} \rangle}c(i_1, \ldots, i_{q-1}, j)$$
	\EndStage
\EndParallel
\State \textbf{return} average of the $m$ copies of $Y_q$
\end{algorithmic}
\end{algorithm}

\begin{proof}[Proof of Theorem~\ref{thm:turnstileAlg} (version with $p+1$ passes)]
Recall from Section~\ref{sec:NeighborsRecoveryTurnstile} that $\textsc{Count-Sketch}_k$ will recover all the entries of a $k$-sparse vector exactly with high probability. By setting the failure probability of each call to $\textsc{Count-Sketch}_k$ to be sufficiently low, we can apply a union bound over all executions and assume that the algorithm recovers all the rows denoted by $\tilde{a}$ and $r$.

Let us assume that the $L_p$-sampler and Count-Sketch routines succeed and argue that taking the expectation over the randomness of the Gaussian sketch in the $L_{p, 2}$-Sampler algorithm, the $\textsc{$L_{p, 2}$-NormEstimator}$ and the importance sampling estimator gives us that $|\Exp{}{\bar{Y}(A)} - \spn{A}{p}| \leq O_p(\epsilon)\spn{A}{p}$.

Recall that the algorithm invokes an $O \left( \frac{\epsilon}{k^{3p/4 - 2}} \right)$ relative error $L_{p, 2}$-sampler in line 4. Since the additive error is less than $n^{-C}$ for arbitrary $C \geq 0$, we can simply absorb it in the failure probability of the algorithm. We thus get,
\begin{align*}
\Exp{}{\bar{Y}(A)} &= \sum_{\substack{(i_1, \ldots, i_{q- 1}) \\ \in \Sc}} \left( 1 \pm \frac{O(\epsilon)}{k^{3p/4 - 2}} \right)  \frac{\|a_{i_1}\|_2^p}{\sum_{j} \|a_j\|_2^p}\frac{\Exp{}{D_1}}{\|a_{i_1}\|_2^p} \sum_{i_q \in N^{i_1}_S(i_1)} \sigma(i_1, \ldots, i_q, i_1)c(i_1, \ldots, i_q)  \\
\intertext{Since \textsc{$L_{p, 2}$-NormEstimator} is an unbiased estimator for the $L_{p, 2}$-norm, i.e. $\Exp{}{D_1} = \sum_j \|a_j\|_2^p$, we get}
\left | \Exp{}{\bar{Y}(A)} - \spn{A}{p} \right| &\leq \sum_{\substack{(i_1, \ldots, i_{q- 1}) \\ \in \Sc}} \frac{O(\epsilon)}{k^{3p/4 - 2}} \left|  \sum_{i_q \in N^{i_1}_S(i_1)}  \sigma(i_1, \ldots, i_q, i_1)c(i_1, \ldots, i_q) \right|
\end{align*}
We can upper bound the second term as we did in bounding the variance of the estimator in Theorem \ref{thm:ImportanceSampling} to get $\left|\Exp{}{\bar{Y}(A)} - \spn{A}{p} \right| \leq O_p(\epsilon)\spn{A}{p}$

It is left to bound the variance of $\bar{Y}(A)$. Again, we assume that the $L_p$-Sampler and Count-Sketch routines succeed and recall that that for a sequence $(i_1, \dots, i_{q-1}) \in \Sc$, we define $z_{(i_1, \dots, i_{q-1})} = \sum_{i_q \in N^{i_1}_S(i_1)} \sigma(i_1, \ldots, i_q, i_1)c(i_1, \ldots, i_q)$. Given the guarantee of $L_{p, 2}$ sampling in Theorem \ref{thm:ApproximateLp2sampling}, the variance of the estimate $\bar{Y}(A)$ is
\begin{align*}
\Var{(\bar{Y}(A))} &\leq \frac{1}{m}\sum_{\substack{(i_1, \ldots, i_{q- 1}) \\ \in \Sc}} (1 \pm \frac{O(\epsilon)}{k^{3p/4 - 2}})\frac{1}{\sum_{j} \|a_j\|_2^p}\frac{\Exp{}{D_1^2}}{\|a_{i_1}\|_2^p} \prod_{t = 2}^{q-1} \frac{1}{p_{i_{t-1}}^{i_1}(i_t)} \left(z_{(i_1, \dots, i_{q-1})} \right)^2 \\
\intertext{By the accuracy guarantee of \textsc{$L_{p, 2}$-NormEstimator} and Fact \ref{fact:lengthLeqSchatten-p},}
&\leq \frac{1}{m} \sum_{\substack{(i_1, \ldots, i_{q- 1}) \\ \in \Sc}} (1 \pm O(\epsilon))\frac{\spn{A}{p}}{\|a_{i_1}\|_2^p} \prod_{t = 2}^{q-1} \frac{1}{p_{i_{t-1}}^{i_1}(i_t)} \left(z_{(i_1, \dots, i_{q-1})} \right)^2
\end{align*}
Bounding this identically as we did in Theorem \ref{thm:ImportanceSampling} and setting $m = \frac{Ck^{3p/2- 4}}{\epsilon^2 }$ give us $\Var(\bar{Y}(A)) \leq C_p\epsilon\|A\|_{S_p}^{2p}$ where $C_p$ is a constant dependent only on $p$.

The $\textsc{$L_{p, 2}$-Sampler}$ with $O \left(\frac{\epsilon}{k^{3p/4 - 2}} \right)$ relative error takes space $\tilde{O}_p(k^{\frac{3p}{2} - 4}n^{1-\frac{2}{p}}(\epsilon^{-1}\log n)^{O(p)})$ and the $\textsc{$L_{p, 2}$-NormEstimator}$ takes space $\tilde{O}_p(n^{1-\frac{2}{p}}(\epsilon^{-1}\log n)^{O(p)})$. In addition, storing the rows recovered from Count-Sketch requires $\tilde{O}(k^2)$ space. Thus, the space complexity of repeating the estimator $m = \frac{Ck^{3p/2- 4}}{\epsilon^2}$ times is $\tilde{O}_p(k^{3p - 6}n^{1 - \frac{2}{p}}(\epsilon^{-1}\log n)^{O(p)})$. We note that in stage 1, the sampler takes two passes, followed by another pass for Count-Sketch and the norm estimator. The subsequent stages requires two passes each giving a total of $3 + 2(q - 1) = p + 1$ passes.
\end{proof}

\subsection{Fewer Passes}
\label{sec:reducingPasses}

As mentioned earlier, we can slightly modify the way we implement the estimator from Section~\ref{sec:ImpSamplingEstimator} to reduce the number of passes that Algorithm~\ref{Alg:row-order} and Algorithm~\ref{Alg:turnstile} make to $\floor{\frac{p}{4}}+1$ and $\frac{p}{2} + 3$, respectively.
This is explained below and completes the proofs of
Theorems~\ref{thm:row-orderAlgGeneral}, \ref{thm:row-orderAlgSpecial} and \ref{thm:turnstileAlg}. 

The idea is to sample each sequence $(i_1, \ldots, i_q) \in \Sc$ in a different way albeit with the same probability. Assume for simplicity that $p \equiv 0\pmod4$. After sampling the first row $i_1 \in [n]$, we sample independently two ``paths'' of length $p/4 - 1$, each starting at $i_1$, with probabilities identical to the ones in the estimator. We then sum over the common neighbors of the endpoints of the two paths, using each of them to complete a cycle of length $p/2$. Formally, sample independently two sequences of rows $(i_1,l_1,\ldots,l_{q/2-1}), (i_1,j_1,\ldots, j_{q/2-1}) \in \Gamma_S^{i_1}(i_1, q/2-1)$.
Denote by $r$ the sequence of rows $(l_{q/2-1},\ldots,l_1,i_1,j_1,\ldots,j_{q/2-1})$ then the following estimator is equivalent to the estimator described in Section~\ref{sec:ImpSamplingEstimator}
(slightly abusing the notation therein for concatenating two sequences of rows):
$$Y\eqdef\frac{1}{\tau_{r}} \sum_{\substack{m \in N^{i_1}_S(l_{q/2-1}) \\ \cap N^{i_1}_S(j_{q/2-1})}} c(r,i_q) \sigma(r)\ip{l_{q/2-1}}{m}\ip{j_{q/2-1}}{m}.$$
It is easy to verify that this estimator is unbiased, and that its variance can be bounded using the proof steps of Section \ref{sec:ImpSamplingEstimator}.
This estimator can be implemented algorithmically similarly to
the description in Sections~\ref{sec:Row-orderImportanceSampling} and~\ref{sec:TurnstileImportanceSampling} using less passes over the stream.
Specifically, the above approach decreases the number of ``path'' stages (i.e. all but the ``seed'' sampling stage) by a factor of (roughly) $2$, and the space complexity remains the same up to constant factors. Therefore, we reduce the number of passes over the streams of Algorithm \ref{Alg:row-order} and Algorithm \ref{Alg:turnstile} to $\floor{\frac{p}{4}}+1$ and $\frac{p}{2} + 3$, respectively. This concludes the proofs of Theorems~\ref{thm:row-orderAlgGeneral}, \ref{thm:row-orderAlgSpecial} and \ref{thm:turnstileAlg}.

\section{Pass-Space Trade-off}\label{sec:passSpaceTrade-off}
Very often streaming problems have a sharp transition in space complexity when comparing a single pass to multiple passes over the data. However, it turns out that for the Schatten $p$-norm  of sparse matrices, the space dependence on the number of passes is smooth, allowing one to pick the desired pass-space trade-off. Specifically, for any parameter $s\geq 2$, one can $(1\pm\epsilon)$-approximate the Schatten $p$-norm in $\floor{\frac{p}{2(s+1)}}+1$ passes using ${O}_{p,s}(\epsilon^{-3}k^{2ps}n^{1-\frac{1}{s}})$ words of space.

Recall the Schatten $p$-norm formulation of \eqref{eqn:Schatten-p-Scaled}. This in can be interpreted as partitioning the (contributing) length-$q$ cycles according to their heaviest row, denoted here by $i_1$. Analogously, for any parameter $s \in [2,p-1]$, we split the cycle into $s+1$ segments of hop-distance roughly $\frac{q}{s+1}$, and further partition the cycles according to the heaviest row in each such segment. The idea is to ``cover" a cycle by sampling $s$ rows, where each sampled row is the heaviest among its segment. More precisely, each sample ``covers'' its segment, except for the heaviest row in the entire cycle that will ``cover'' two segments. Then, to evaluate the entire cycle we need $\floor{\frac{q}{s+1}}+1$ passes. The total space needed by the algorithm is ${O}_{p,s}(\epsilon^{-3}k^{2ps}n^{1-1/s})$ words of space, mostly as it computes multiple copies of the estimator (to reduce the variance), similarly to Section \ref{sec:Implement}.

In the first subsection we focus on the case $s=2$ and present a BFS-based algorithm, followed by a brief explanation how to improve the dependence on $k$ by replacing the BFS with adaptive sampling as in the previous sections. In the second subsection we generalize the result to any $s\geq 2$.

\subsection{The Basic Case $s=2$ ($\floor{\frac{p}{6}}+1$ Passes)}
As mentioned, \eqref{eqn:Schatten-p-Scaled} can be interpreted as considering only cycles that ``start'' from the heaviest row of the cycle (by ``rotating'' the cycle). We suggest a variation on this idea. Given a $q$-cycle ``starting'' at the heaviest row $i$, we identify the row $j$ that is the heaviest among the rows at least $q/3$ cycle-hops away from $i$. In other words, if the cycle is $(i=i_1,\ldots,i_q)$, then $j$ is the heaviest among (roughly) $i_{q/3},\ldots,i_{2q/3}$. Therefore, our aim is to sample rows $i$ and $j$ and then to connect four paths: two starting from $i$ and two starting from $j$, each of hop-distance at most $q/3$. As we don't know in advance the hop-distance to row $j$, we store all possible options and only later decide which paths to stich together into a cycle.

Formally, we augment the notation of paths presented in Section \ref{sec:ImpSamplingEstimatorPrelim}. For indices $i,j,i_1 \in [n]$ and integers $t'\leq t''\leq t$, define
$$\Gamma_S^{(i,j;t',t'')}(i_1,t) \eqdef$$
$$\left\{(i_1,\ldots, i_q):\ (i_1,\ldots,i_{t'}) \in \Gamma_S^i(i_1,t'), (i_{t'},\ldots,i_{t''}) \in \Gamma_S^j(i_{t'},t''-t'+1), (i_{t''},\ldots,i_t) \in \Gamma_S^i(i_t'',t-t''+1)\right\}.$$
As we are actually interested in the special case where $t'=\floor{\frac{q}{3}}+1$ and $t''=q-\floor{\frac{q}{3}}$, we shall omit $t',t''$ from the superscript in this special case.

Recall that we focus on cycles in which $i_1=i$, i.e. the heaviest row is the starting of the cycle. Furthermore, we want $j=i_l$ for some $l\in \{\floor{\frac{q}{3}}+2,\ldots,q-\floor{\frac{q}{3}}\}$, i.e. $j$ is part of the cycle, and is at least $\floor{\frac{q}{3}}$ cycle-hops away from $i$.
Accordingly, we can rewrite the Schatten $p$-norm as
\begin{equation}\label{eqn:Schatten-p-twoSet}
\spn{A}{p}=\sum_{i,j}\sum_{\floor{\frac{q}{3}}+2 \leq l \leq q-\floor{\frac{q}{3}}}\sum_{\substack{(i, i_2,\ldots, i_{q}) \\ \in \Gamma_S^{(i,j)}(i):\ i_l=j}} c(i, i_2, \ldots, i_q)\sigma(i, i_2, \ldots, i_q, i).
\end{equation}

We are now ready to present our estimator and an algorithm implementing it. In the algorithm, instead of summing over all $i,j\in [n]$, we sample two multisets $I,J$ and do a BFS of depth $\floor{q/3}$ from each $i\in I$ and $j\in J$, and eventually enumerate over all cycles involving these $i,j$ as in \eqref{eqn:Schatten-p-twoSet}. 
\begin{algorithm}[H]
\caption{Two-Set based Algorithm for Schatten $p$-Norm of $k$-Sparse Matrices for $p \in 2\Z_{\geq 2}$ in Row-Order Stream}\label{Alg:passSpace}

\begin{algorithmic}[1]
\Statex \textbf{Input}: $A \in \R^{n \times n}$ streamed in row-order, $p \in 2\Z_{\geq 2}$, $\epsilon > 0$.
\Statex
\State $r\gets O(\epsilon^{-3}q^{5/2}k^{3p-6}\sqrt{n})$, $Y \gets 0$.
\Parallel{$2r$}
		\Pass{$1$}
			\State sample a row $i \in [n]$ with probability $\tau_i=\frac{\|a_{i}\|_2^q}{\sum_{m} \|a_m\|_2^q}$ \Comment{Using Reservoir Sampling} \label{step:sampleFirstRow}
		\EndPass
		\Pass{$2 \leq t \leq \floor{q/3}+1$}	
			\State store all rows of hop-distance at most $t-1$ from $i$ that have $l_2$-norm smaller than row $i$
		\EndPass
\EndParallel
\State let multisets $I$ and $J$ contain the first and last $r$ samples (from line \ref{step:sampleFirstRow}), respectively
\ForAll{$(i,j)\in I\times J$ such that $\left(\frac{\epsilon}{qk^{2\ceil{q/2}}}\right)^{3/p} \|a_i\|_2\leq \|a_j\|_2 \leq \|a_i\|_2$} \label{step: update}
$$Y\pluseq \frac{1}{\tau_i \cdot \tau_j}\sum_{\floor{\frac{q}{3}}+2 \leq l \leq q-\floor{\frac{q}{3}}}\sum_{\substack{(i, i_2,\ldots, i_{q}) \\ \in \Gamma_S^{(i,j)}(i):\ i_l=j}} c(i, i_2, \ldots, i_q)\sigma(i, i_2, \ldots, i_q, i)$$
\EndFor
\State \textbf{return} $\bar{Y}=\frac{1}{r^2}Y$
\end{algorithmic}
\end{algorithm}

\begin{theorem} \label{thm:TwoSetSchattenAlg}
There exists an algorithm that, given $p \in 2\Z_{\geq 2}$, $\epsilon > 0$ and a $k$-sparse matrix $A \in \R^{n \times n}$ that is streamed in row-order, makes $\floor{\frac{p}{6}}+1$ passes over the stream using at most $O_{p}(\epsilon^{-3}k^{4p}\sqrt{n})$ words of space, and then outputs $\bar{Y}(A)$ that $(1\pm 2\epsilon)$-approximates $\spn{A}{p}$ with probability at least $2/3$.
\end{theorem}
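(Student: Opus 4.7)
The plan is to mirror the template established in Theorem~\ref{thm:row-orderAlgGeneral} (expectation $+$ variance $+$ Chebyshev $+$ resource count), but adapted to the fact that $\bar Y$ is now an average of $r^2$ dependent pair-estimates $X_{k,\ell} \eqdef \frac{1}{\tau_{i_k}\tau_{j_\ell}}\cdot\mathbb{1}[\text{eligible}]\cdot (\text{inner double sum in line \ref{step: update}})$ driven by only $2r$ i.i.d.\ samples, and that the range restriction on line \ref{step: update} introduces a small truncation bias that must be separately controlled.

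\emph{Expectation and truncation error.} By construction the first $r$ samples $I$ and the last $r$ samples $J$ are each i.i.d.\ under $\tau$, and independent of each other, so that $\Exp{}{\bar Y} = \sum_{i,j\text{ eligible}} f(i,j)$ where $f(i,j)$ is the bracketed quantity in \eqref{eqn:Schatten-p-twoSet}. The full sum over \emph{all} pairs $(i,j)$ equals $\spn{A}{p}$ by \eqref{eqn:Schatten-p-twoSet}, so the bias consists of cycles whose secondary heaviest row $j$ satisfies $\|a_j\|_2 < (\epsilon/qk^{2\ceil{q/2}})^{3/p}\|a_i\|_2$. On each such cycle the two inner products touching $a_j$ can be bounded by $\|a_j\|_2\cdot(\text{neighbor length})$ via Cauchy-Schwarz, extracting a factor $\|a_j\|_2^2$; reusing the chain of projection arguments from Theorem~\ref{Thm:GeneralMatrices} (Lemmas~\ref{Lem:InnerProdByLength}--\ref{Lem:InnerProdSquaredByLength} on the two half-paths from $i$) reduces the tail to a constant multiple of $\|a_i\|_2^{p-2}\|a_j\|_2^2 \cdot k^{O(p)}$ summed over $(i,j)$, and then Fact~\ref{fact:lengthLeqSchatten-p} together with the threshold on $\|a_j\|_2/\|a_i\|_2$ yields a bias of at most $\epsilon\spn{A}{p}$.

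\emph{Variance bound -- the main obstacle.} The $r^2$ pair-estimates are not jointly independent, but $X_{k,\ell}\perp X_{k',\ell'}$ whenever $k\neq k'$ and $\ell\neq \ell'$. Expanding $r^4\Var(\bar Y)=\sum \mathrm{Cov}(X_{k,\ell},X_{k',\ell'})$ leaves only three non-vanishing types: $r^2$ diagonal terms, and two families of $r^2(r-1)$ covariance terms that share either the $I$-sample or the $J$-sample. After division by $r^4$ the diagonal scales like $r^{-2}$ while the two covariance families scale like $r^{-1}$, so the latter dominate. Each covariance family is essentially a one-dimensional importance-sampling variance of the form $\sum_i g(i)^2/\tau_i$ (resp.\ $\sum_j h(j)^2/\tau_j$) with $g(i) \eqdef \sum_{j\text{ eligible}} f(i,j)$; I would bound it by repeating the inner-product telescoping of Theorem~\ref{Thm:GeneralMatrices} along the two half-paths of hop-length at most $\lfloor q/3\rfloor$ rooted at $i$ (using Lemma~\ref{Lem:InnerProdByLength} per hop) and applying Young's inequality on the two ``gluing'' edges $\ip{i_{l-1}}{j}\ip{j}{i_{l+1}}$ followed by Lemma~\ref{Lem:InnerProdSquaredByLength}. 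The eligibility condition $\|a_j\|_2 \geq (\epsilon/qk^{2\ceil{q/2}})^{3/p}\|a_i\|_2$ is essential here: without it, the $1/\tau_j$ factor could explode relative to $1/\tau_i$, and the threshold is calibrated precisely to absorb that blow-up into the $k^{O(p)}$ factor. A secondary technical point is that the sampling weights use the exponent $q=p/2$ while the relevant moment for the heaviest row of a length-$p/2$ cycle is $p$; converting between $\|a_i\|_2^q$ and $\|a_i\|_2^p$ via Cauchy-Schwarz introduces the extra $\sqrt n$ factor that appears in the choice $r=O(\epsilon^{-3}q^{5/2}k^{3p-6}\sqrt n)$.

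\emph{Chebyshev and resource count.} With the bias bounded by $\epsilon\spn{A}{p}$ and the variance by $O(\epsilon^2)\spn{A}{2p}$, Chebyshev's inequality yields a $(1\pm 2\epsilon)$-approximation with probability at least $2/3$. For the pass count: pass~$1$ performs weighted reservoir sampling of the $2r$ seed rows, and passes $2,\ldots,\floor{q/3}+1$ each extend by one hop the BFS tree of rows $a_v$ with $\|a_v\|_2\leq \|a_{\text{seed}}\|_2$ rooted at each seed, for a total of $\floor{q/3}+1=\floor{p/6}+1$ passes. Since the BFS depth is $\floor{q/3}$ and each row has at most $k$ eligible neighbors, each of the $2r$ parallel copies stores at most $k^{\floor{q/3}}$ rows of $O(k)$ words each, and the total space is $O(r\cdot k^{\floor{q/3}+1}) = O_p(\epsilon^{-3}k^{4p}\sqrt n)$ words, absorbing the $q^{5/2}\cdot k^{3p-5+p/6}$ dependence into $k^{4p}$. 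The inner sum in line~\ref{step: update} is computed on the fly from the stored BFS trees and contributes no additional space.
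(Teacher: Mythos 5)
Your overall strategy mirrors the paper's: bound the truncation bias from the eligibility window, decompose the variance of $\bar Y$ into a diagonal term plus two covariance families (shared $I$-sample and shared $J$-sample), apply Chebyshev, and count passes and space. Your variance decomposition is precisely what the paper abstracts into Lemma~\ref{thm:setSampling}, and your identification of the dominant $r^{-1}$ covariance families and of the source of the $\sqrt n$ factor (the $q$-vs-$p$ exponent mismatch in $\tau_i$, handled via the power-mean inequality $\|v\|_q \le n^{1/q-1/p}\|v\|_p$ rather than Cauchy--Schwarz per se) is on the mark.

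However, your truncation-bias bound has a genuine quantitative gap. You extract only the factor $\|a_j\|_2^2$ from the two gluing edges at $j$ and bound the remaining product by $\|a_i\|_2^{p-2}$, giving $|\sigma|\le \|a_i\|_2^{p-2}\|a_j\|_2^2$. But the algorithm's eligibility threshold is $\beta = \bigl(\epsilon/(qk^{2\ceil{q/2}})\bigr)^{3/p}$, so your bound on the omitted mass is of order $\beta^{2}k^{O(p)}\sum_i\|a_i\|_2^p$, and $\beta^2 = \bigl(\epsilon/(qk^{O(p)})\bigr)^{6/p}$ is \emph{not} $\le \epsilon/k^{O(p)}$ once $p>6$; the $\epsilon$-bias claim fails for larger $p$. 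The paper instead exploits that the entire middle segment of the cycle --- roughly $q/3$ consecutive rows, not just $j$ and its two incident edges --- has $l_2$-norm at most $\|a_j\|_2$, which gives the sharper bound $|\sigma|\le\|a_i\|_2^{2p/3}\|a_j\|_2^{p/3}$. That $p/3$ in the exponent is precisely what the $3/p$ in the threshold $\beta$ is calibrated to cancel, yielding $\beta^{p/3} = \epsilon/(qk^{O(p)})$ and hence $\epsilon$-bias uniformly in $p$. To repair your argument, replace the $\|a_j\|_2^2$ extraction by the segment-wide bound: every row of the cycle at cycle-distance at least $\lfloor q/3\rfloor$ from $i$ contributes a factor $\|a_j\|_2^2$ rather than $\|a_i\|_2^2$ (by the definition of $\Gamma_S^{(i,j)}$), and there are about $q/3$ such rows.

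Two smaller points. First, you propose to bound the covariance families by re-running the projection-lemma telescoping of Theorem~\ref{Thm:GeneralMatrices}; this would actually give a tighter $k$-dependence than the paper, which deliberately uses cruder counting arguments (it remarks that the projection lemmas would sharpen the exponent). Both end up inside $k^{4p}$, so this is fine, but be aware that the paper's Lemma~\ref{thm:setSampling} requires the specific normalization $\tau_i,\tau_j\ge\frac{1}{\lambda}\sqrt{|z_{i,j}|/z}$ (a square root, because the pair estimator has two sampled indices), not the linear condition $\tau_i\ge|z_i|/(\lambda z)$ of Theorem~\ref{thm:ImportanceSampling}; your sketch does not make this distinction explicit. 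Second, the degree bound $\Delta$ of the sparse matrix $\{z_{i,j}\}$ enters the $r^{-1}$ covariance term multiplicatively, and you should state and use it (the paper takes $\Delta\le k^{2\ceil{q/2}}$ from the observation that $j$ must lie within cycle-distance $\ceil{q/2}$ of $i$ for $z_{i,j}\neq 0$).
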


Before the proof, we state the following theorem, which can be viewed as a variant of the Importance Sampling lemma (Theorem \ref{thm:ImportanceSampling}).
\begin{lemma}[Two-Set Sampling]
\label{thm:setSampling}
Let $z=\sum_{i,j\in[n]} z_{i,j} > 0$ for $n\ge 1$,
and suppose the matrix defined by $\{z_{i,j}\}$ is $\Delta$-sparse.%
\footnote{$\Delta$ can be viewed as an upper bound on the in-degrees
  and out-degrees of the directed graph defined by edge weights $z_{ij}$
  on vertex set $[n]$.
}
Let $I,J \in [n]$ be two random multisets of size $r$,
where each of their $2r$ elements is chosen independently with replacement according to the distribution $(\tau_l: l\in [n])$.
Consider the estimator
$$
  Y
  = \frac{1}{r^2} \sum_{i\in I, j\in J} \frac{z_{i,j}}{\tau_{i}\cdot\tau_{j}}.
$$
If $\lambda>0$ satisfies that for all $i,j\in[n]$
both $\tau_i,\tau_j \geq \frac{1}{\lambda}\sqrt{\frac{\abs{z_{i,j}}}{z}}$, then
\begin{equation}\label{eqn:twoSetExpVar}
  \Exp{}{Y} = z
  \text{\ \ and \ \ }
  \Var(Y) \leq \left(\frac{\lambda^2}{r^2}
    + \frac{2\lambda\Delta}{r}\right) z \sum_{i,j\in [n]} \abs{z_{i,j}}.
\end{equation}

\end{lemma}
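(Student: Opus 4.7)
The plan is to verify unbiasedness by a direct computation and then analyze the variance by expanding $\mathbb{E}[Y^{2}]$ and classifying the resulting cross terms according to how the two sampled multisets overlap. Write $W_{a,b}\eqdef z_{I_{a},J_{b}}/(\tau_{I_{a}}\tau_{J_{b}})$, so that $Y=\frac{1}{r^{2}}\sum_{a,b}W_{a,b}$. Since $I_{a}$ and $J_{b}$ are independent and drawn according to the distribution $(\tau_{\ell})_{\ell}$, we have $\mathbb{E}[W_{a,b}]=\sum_{i,j}\tau_{i}\tau_{j}\cdot\frac{z_{i,j}}{\tau_{i}\tau_{j}}=z$, which immediately yields $\mathbb{E}[Y]=z$.

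For the variance, expand
$
\mathbb{E}[Y^{2}]=\frac{1}{r^{4}}\sum_{a,a',b,b'}\mathbb{E}[W_{a,b}W_{a',b'}]
$
and partition the quadruples $(a,a',b,b')$ into four types based on whether $a=a'$ and whether $b=b'$. When all four indices are distinct (Case IV, $r^{2}(r-1)^{2}$ quadruples), $I_{a},I_{a'},J_{b},J_{b'}$ are mutually independent, so $\mathbb{E}[W_{a,b}W_{a',b'}]=z^{2}$; these terms will cancel the $z^{2}$ subtracted by $\mathrm{Var}(Y)=\mathbb{E}[Y^{2}]-z^{2}$. When $(a,a')=(a,a)$ and $(b,b')=(b,b)$ (Case I, $r^{2}$ quadruples), $\mathbb{E}[W_{a,b}^{2}]=\sum_{i,j}z_{i,j}^{2}/(\tau_{i}\tau_{j})$; substituting the hypothesis $\tau_{i}\tau_{j}\geq|z_{i,j}|/(\lambda^{2}z)$ gives the bound $\lambda^{2}z\sum_{i,j}|z_{i,j}|$, contributing the $\lambda^{2}/r^{2}$ term in~\eqref{eqn:twoSetExpVar}.

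The heart of the argument is Cases II and III (shared $I_{a}$ or shared $J_{b}$, each contributing $r^{2}(r-1)$ quadruples). Taking Case II as representative, the shared index gives
$
\mathbb{E}[W_{a,b}W_{a,b'}]=\sum_{i}\frac{1}{\tau_{i}}\Bigl(\sum_{j}z_{i,j}\Bigr)^{2}.
$
Here the $\Delta$-sparsity enters: the inner sum over $j$ has at most $\Delta$ nonzero terms, so Cauchy--Schwarz yields $(\sum_{j}z_{i,j})^{2}\leq\Delta\sum_{j}z_{i,j}^{2}$. Combining this with the pointwise bound derived from the hypothesis (i.e., $\tau_{i}\geq\frac{1}{\lambda}\sqrt{|z_{i,j}|/z}$ for every $j$, whence $z_{i,j}^{2}/\tau_{i}$ is controlled by $|z_{i,j}|$ times a factor proportional to $\lambda$ and a power of~$z$) and then summing over $i,j$ while using $\sum_{j}\tau_{j}=1$, one obtains an upper bound of the form $\lambda\Delta\cdot z\sum_{i,j}|z_{i,j}|$ on $\mathbb{E}[W_{a,b}W_{a,b'}]$. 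Case III is symmetric, and together they contribute the $2\lambda\Delta/r$ term.

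The main obstacle is the Case II/III bookkeeping: several natural ways of applying Cauchy--Schwarz and the probability lower bound yield a looser dependence (e.g., $\lambda^{2}\Delta$) than the desired $\lambda\Delta$, so the right order of operations matters—one must first apply Cauchy--Schwarz to exploit $\Delta$-sparsity, then use the worst-case $j$ version of the hypothesis to convert $z_{i,j}^{2}/\tau_{i}$ into $|z_{i,j}|$ (paying a single factor of $\lambda$ and absorbing the $\sqrt{z}$ via $|z_{i,j}|\leq z$ coming from $z_{i,j}$ being a single summand), and finally use $\tau_{j}\leq 1$ and $\sum_{j}\tau_{j}=1$ to arrive at $\sum_{i,j}|z_{i,j}|$. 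The remaining step—assembling the four cases, absorbing Case IV against $z^{2}$, and weighting each case by the number of contributing quadruples divided by $r^{4}$—is bookkeeping and yields exactly the bound in~\eqref{eqn:twoSetExpVar}.
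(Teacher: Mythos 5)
Your proof follows essentially the same approach as the paper: the unbiasedness computation, the decomposition of $\mathbb{E}[Y^2]$ into four cases by whether the $I$-index and/or the $J$-index coincide, the cancellation of the all-distinct case against $z^2$, and the use of $\tau_i\tau_j\geq|z_{i,j}|/(\lambda^2 z)$ for the diagonal case are all identical. The only cosmetic difference is in Cases II/III: the paper expands $\bigl(\sum_j z_{i,j}\bigr)^2$ as a double sum $\sum_{j,j'}|z_{i,j}||z_{i,j'}|$, bounds $|z_{i,j'}|/\tau_i\leq\lambda z$, and uses $\Delta$-sparsity to count the number of $j'$ terms, whereas you apply Cauchy--Schwarz first ($(\sum_j z_{i,j})^2\leq\Delta\sum_j z_{i,j}^2$, valid by sparsity) and then bound $z_{i,j}^2/\tau_i\leq\lambda z|z_{i,j}|$; these are the same estimate in different order and give identical constants. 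Two small remarks: (i) the invocation of $\sum_j\tau_j=1$ in your Case II sketch is not actually needed once you have the pointwise bound $z_{i,j}^2/\tau_i\leq\lambda z|z_{i,j}|$; and (ii) your step ``absorbing $\sqrt{z}$ via $|z_{i,j}|\leq z$ coming from $z_{i,j}$ being a single summand'' is slightly loose as stated since $z_{i,j}$ may be negative, but the paper's proof uses the same bound $|z_{i,j}|/\tau_i\leq\lambda z$ without further comment, so this is inherited from the source rather than introduced by you.
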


The proof of Lemma \ref{thm:setSampling} is given in Appendix \ref{App:TwoSetProof}. We now proceed to the proof of Theorem \ref{thm:TwoSetSchattenAlg}, remarking that $k^{O(p)}$ factor can be improved by using the Projection Lemmas, but for simplicity we use more straightforward arguments.
\begin{proof}[Proof of Theorem \ref{thm:TwoSetSchattenAlg}]
First we remark that indeed in $\floor{q/3}+1$ passes all the needed rows of a cycle are kept. For any cycle, row $i$ needs to ``cover" $\floor{q/3}+1+(q-(q-\floor{q/3}))=2\floor{q/3}+1$ rows (including itself), which indeed happens as we do a BFS of size $\floor{q/3}$. Row $j$ must cover at most $q-\floor{q/3}-(\floor{q/3}+2)=q-2\floor{q/3}-2$ rows, including itself. As $\floor{q/3}+1\geq q-2\floor{q/3}-2$, we indeed again cover all possibly needed rows in the $\floor{q/3}+1$ passes. We now go on to prove the approximation bounds.
Let $\beta \eqdef \left(\frac{\epsilon}{qk^{p-2}}\right)^{3/p}$ and define for all $i,j\in [n]$
$$z_{i,j} \eqdef
\begin{cases}
\sum_{\floor{\frac{q}{3}}+2 \leq l \leq q-\floor{\frac{q}{3}}}\sum_{\substack{(i, i_2,\ldots, i_{q}) \\ \in \Gamma_S^{(i,j)}(i):\ i_l=j}} c(i, i_2, \ldots, i_q)\sigma(i, i_2, \ldots, i_q, i) & \quad \text{if } \|a_j\|_2\leq \|a_i\|_2;\\
$0$ & \quad \text{otherwise.}
\end{cases}$$
Then, by Equation \eqref{eqn:Schatten-p-twoSet}, $z' \eqdef \sum_{i,j}z_{i,j}=\spn{A}{p}$. Since line \ref{step: update} in the algorithm considers only pairs $(i,j)$ where $\frac{\|a_j\|_2}{\|a_i\|_2} \in [\beta,1]$, we further define
$$z \eqdef \sum_{i,j:\ \frac{\|a_j\|_2}{\|a_i\|_2} \in [\beta,1] } z_{i,j}.$$

Let us show that the omitted terms do not contribute much to $z'=\spn{A}{p}$, and thus the error introduced by omitting them is small. For simplicity assume $q/3\in \N$, then
\begin{align*}
\left|z'-z\right| &\leq \sum_i \sum_{j: \ \frac{\|a_j\|_2}{\|a_i\|_2}\leq \beta}\left|z_{i,j}\right| \\
&\leq \sum_i \sum_{j: \ \frac{\|a_j\|_2}{\|a_i\|_2}\leq \beta}\sum_{\floor{\frac{q}{3}}+2 \leq l \leq q-\floor{\frac{q}{3}}}\sum_{\substack{(i, i_2,\ldots, i_{q}) \\ \in \Gamma_S^{(i,j)}(i):\ i_l=j}} c(i, i_2, \ldots, i_q)\left|\sigma(i, i_2, \ldots, i_q, i)\right|
\intertext{As $c(i, i_2, \ldots, i_q)\leq q$, and using the conditions on $i$ and $j$ we get}
&\leq q\sum_i \sum_{j: \ \frac{\|a_j\|_2}{\|a_i\|_2}\leq \beta}\sum_{\floor{\frac{q}{3}}+2 \leq l \leq q-\floor{\frac{q}{3}}}\sum_{\substack{(i, i_2,\ldots, i_{q}) \\ \in \Gamma_S^{(i,j)}(i):\ i_l=j}} \|a_i\|_2^{2p/3}\|a_j\|_2^{p/3} \\
\intertext{As each row has at most $k^2$ ``neighboring" rows,}
&\leq k^{2(q-1)}q \beta^{p/3}\sum_i \|a_i\|_2^p = \epsilon \sum_i \|a_i\|_2^p.
\end{align*}
Therefore, using Fact \ref{fact:lengthLeqSchatten-p}, we conclude
\begin{equation}\label{eq:zVsz'}
\abs{z-\spn{A}{p}}\leq \epsilon\spn{A}{p}.
\end{equation}

We proceed to show that the standard deviation of our estimator is bounded by $\epsilon z$, meaning that w.h.p $\bar{Y}\in (1\pm \epsilon)z$, and together with \eqref{eq:zVsz'} this yields $\bar{Y} \in (1\pm 2\epsilon) \spn{A}{p}$. To this end, we want to use Lemma \ref{thm:setSampling} and thus wish to show that
\begin{equation}\label{eq:abs_zij}
\sum_{i,j}\abs{z_{i,j}} \leq 2qk^{2\ceil{q/2}}z
\end{equation}
and that $\lambda \eqdef \sqrt{2qk^{p-4}\frac{n}{\beta^{2p/3}}}=\sqrt{2}q^{3/2}k^{3p/2-4}\frac{\sqrt{n}}{\epsilon}$ satisfies
\begin{equation}\label{eq:var_zij}
\frac{\abs{z_{i,j}}}{z}\leq \lambda^2 \tau_j^2 \qquad \forall i,j\in [n].
\end{equation}
meaning that . We remark that \eqref{eq:var_zij} is indeed sufficient, as $\tau_j\leq \tau_i$, as otherwise $z_{i,j}=0$ and the inequality trivially holds.

To prove \eqref{eq:abs_zij}, we use similar arguments as above, together with \eqref{eq:zVsz'},
\begin{align*}
\sum_{i,j}\abs{z_{i,j}} &\leq q\cdot \sum_{i,j: \ \frac{\|a_j\|_2}{\|a_i\|_2} \in [\beta,1]}\sum_{\floor{\frac{q}{3}}+2 \leq l \leq q-\floor{\frac{q}{3}}}\sum_{\substack{(i, i_2,\ldots, i_{q}) \\ \in \Gamma_S^{(i,j)}(i):\ i_l=j}} \|a_i\|_2^p \\
&\leq qk^{p-2}\sum_{i}{\|a_i\|_2^p} \\
&\leq 2qk^{p-2}z.
\end{align*}

To prove \eqref{eq:var_zij}, fix $i,j$ such that $\frac{\|a_j\|}{\|a_i\|} \in [\beta,1]$, then by similar arguments, together with \eqref{eq:zVsz'} and Fact \ref{fact:lengthLeqSchatten-p},
\begin{align*}
\frac{\abs{z_{i,j}}}{z} &\leq \frac{1}{z}\sum_{\floor{\frac{q}{3}}+2 \leq l \leq q-\floor{\frac{q}{3}}}\sum_{\substack{(i, i_2,\ldots, i_{q}) \\ \in \Gamma_S^{(i,j)}(i):\ i_l=j}} c(i, i_2, \ldots, i_{q})|\sigma(i, i_2, \ldots, i_{q},i)| \\
&\leq \frac{1}{z}\sum_{\floor{\frac{q}{3}}+2 \leq l \leq q-\floor{\frac{q}{3}}}\sum_{\substack{(i, i_2,\ldots, i_{q}) \\ \in \Gamma_S^{(i,j)}(i):\ i_l=j}} q\|a_i\|_2^{2p/3}\|a_j\|_2^{p/3}\\
&\leq qk^{p-4}\frac{\|a_j\|_2^{p}}{\beta^{2p/3}z}\\
&\leq 2qk^{p-4}\frac{\|a_j\|_2^{p}}{\beta^{2p/3}\spn{A}{p}} \\
&\leq 2qk^{p-4}\frac{\|a_j\|_2^{p}}{\beta^{2p/3}\sum_{m}\|a_m\|_2^p}
\intertext{using norm properties (basically applying $\|v\|_q\leq n^{1/q-1/p}\|v\|_p$ to the vector $v=(\|a_1\|_2,\ldots,\|a_n\|_2))$, }
&\leq qk^{p-4}\frac{\|a_j\|_2^{p}}{\beta^{2p/3}(\sum_{m}\|a_m\|_2^q)^2/n} \\
&\leq 2qk^{p-4}\frac{n}{\beta^{2p/3}}\cdot \tau_j^2.
\end{align*}

We further note that for $z_{i,j}$ to be non-zero, row $j$ must be at distance at most $\ceil{q/2}$ from row $i$, and thus each row can participate in at most $k^{2\ceil{q/2}}$ different non-zero $z_{i,j}$, i.e., $\Delta\leq k^{p/2-2}$. Combining all the above, we conclude that setting $r=O(\epsilon^{-2}\lambda \Delta)\cdot 2qk^{p-2}=O(\epsilon^{-3} q^{5/2}k^{3p-6}\sqrt{n})$ will give w.h.p a $(1\pm 2\epsilon)$-approximation to the Schatten $p$-norm by Chebyshev's inequality.

As for each row in $I\cup J$ the algorithm stores neighborhoods of size $O\left((k^2)^{q/3}\right)$, and storing each row in the neighborhood takes $O(k)$ words, there is an extra factor of $k^{p/3+1}$. Thus the total space is $O(\epsilon^{-3}q^{5/2}k^{10p/3-5}\sqrt{n})$ words.
\end{proof}

\begin{remark}
As mentioned earlier, the BFS approach can be replaced with the adaptive sampling approach from previous sections. For the first $r$ samples (in $I$), the algorithm adaptively samples two paths of hop-distance (roughly) $q/3$, similarly to Section \ref{sec:reducingPasses}. For each of the last $r$ samples (in $J$), the algorithm chooses $\rho \in [q/3]$ uniformly at random (and independently of all other steps), and adaptively samples a path of hop-distance $\rho$ and a path of hop-distance (roughly) $\frac{q}{3}-\rho$. It then tries to ``stitch" these paths to create $q$-cycles. The bound on $\lambda$ (i.e. \eqref{eq:var_zij}) increases by a factor of $q/3$ due to $\rho$ (this can be viewed as replacing the BFS with multiple random paths), but as the algorithm does not keep the entire neighborhoods, a $k^{p/3}$ factor is shaved from the space complexity. This, together with a tighter analysis, can improve the dependence on $k$ in Theorem \ref{thm:TwoSetSchattenAlg} to $k^{19p/8+O(1)}$.
\end{remark}

\subsection{General $s$ (using $\floor{\frac{p}{2(s+1)}}+1$ Passes)}
We generalize the algorithm from the previous subsection, such that given some integer $s \in [2, p-1]$, the algorithm samples in parallel in the first pass $r\cdot s$ rows for $r=O_{p,\epsilon,s}(k^{4p}n^{1-1/s})$, where each ``seed" row $i$ is sampled with probability $\tau_i=\frac{\|a_i\|_2^{p/s}}{\sum_m \|a_m\|_2^{p/s}}$. In the following passes it runs a BFS of depth (roughly) $\frac{q}{s+1}$, keeping all the shorter rows (in $l_2$-norm) in the neighborhood of each seed. The first $r$ samples are denoted as multiset $I$, and the other samples are split into $s-1$ multisets of size $r$ denoted as $J_1,\ldots, J_{s-1}$. The algorithm then considers $s$-tuples $(i,j_1,\ldots,j_{s-1})$ where $i\in I$ and every row $j_u\in J_u$ has $l_2$-norm in the range $(\beta', 1)$ relative to that of row $i$, for $\beta'\approx\left(\frac{\epsilon}{sqk^p}\right)^{(s+1)/p}$. The estimator is formed by looking at the eligible $s$-tuples, and for each such tuple adding the contributions of all the $q$-cycles obtained by ``stitching" paths of hop-distance (roughly) $\frac{q}{s+1}$ passing through these seeds, as follows:
$$Y\pluseq \frac{1}{\tau_i\tau_{j_1}\cdots\tau_{j_{s-1}}}\sum_{\frac{q}{s+1}\leq l_1 \leq \frac{2q}{s+1}}\cdots \sum_{\frac{(s-1)q}{s+1} \leq l_{s-1} \leq \frac{s\cdot q}{s+1}}\sum_{\substack{(i, i_2,\ldots, i_{q}) \\ \in \Gamma_S^{(i,j_1,\ldots,j_{s-1})}(i): \\ i_{l_1}=j_1,\ldots, i_{l_{s-1}}=j_{s-1}}} c(i, i_2, \ldots, i_q)\sigma(i, i_2, \ldots, i_q, i).$$
The algorithm's final output is $\bar{Y}=\frac{1}{r^s}Y$.

\begin{theorem} \label{thm:sSetSchattenAlg}
There exists an algorithm that, given $p \in 2\Z_{\geq 2}$, $\epsilon > 0$,
an integer $s \in [2, p-1]$ and a $k$-sparse matrix $A \in \R^{n \times n}$
streamed in row-order,
makes $\floor{\frac{p}{2(s+1)}}+1$ passes over the stream
using $O_{p}\left(\epsilon^{-3}k^{2ps}n^{1-\frac{1}{s}}\right)$ words of space,
and outputs $\bar{Y}(A)$ that $(1\pm 2\epsilon)$-approximates $\spn{A}{p}$
with probability at least $2/3$. 
\end{theorem}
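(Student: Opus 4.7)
My plan is to follow the template established for the $s=2$ case in Theorem~\ref{thm:TwoSetSchattenAlg}, generalizing each ingredient to $s$ sets. The first step is to formulate and prove an $s$-set sampling lemma that extends Lemma~\ref{thm:setSampling}: for $z=\sum_{i_1,\ldots,i_s} z_{i_1,\ldots,i_s}>0$ where the underlying tensor has at most $\Delta$ nonzeros in any one-dimensional fiber, and for independent multisets $I_1,\ldots,I_s$ of size $r$ drawn from distribution $\tau$ satisfying $\tau_{i_u} \geq \lambda^{-1}(|z_{i_1,\ldots,i_s}|/z)^{1/s}$ for every $u$ and every nonzero entry, the estimator $Y=r^{-s}\sum_{i_u\in I_u}z_{i_1,\ldots,i_s}/\prod_u \tau_{i_u}$ is unbiased with variance bounded by a sum of $s$ terms of the form $\lambda^{2u}\Delta^{s-u} r^{-u} z\cdot \sum|z_{i_1,\ldots,i_s}|$ for $u=1,\ldots,s$. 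The proof expands the square of $Y-z$, groups summands according to how many coordinates collide between the two copies, and bounds each group; only this combinatorial bookkeeping is new relative to the $s=2$ case.

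Next I define, analogously to the $s=2$ case,
$$z_{i,j_1,\ldots,j_{s-1}} \eqdef \sum_{l_1,\ldots,l_{s-1}}\sum_{\substack{(i,i_2,\ldots,i_q)\in \Gamma_S^{(i,j_1,\ldots,j_{s-1})}(i):\\ i_{l_u}=j_u\ \forall u}} c(i,i_2,\ldots,i_q)\sigma(i,i_2,\ldots,i_q,i),$$
when all $\|a_{j_u}\|_2\leq \|a_i\|_2$, and $0$ otherwise, so $\sum z_{i,j_1,\ldots,j_{s-1}}=\spn{A}{p}$ by partitioning each contributing cycle according to the heaviest row in each of its $s$ segments. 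Let $z$ be the partial sum restricted to tuples with $\|a_{j_u}\|_2/\|a_i\|_2\in[\beta',1]$ where $\beta'\approx (\epsilon/(sqk^p))^{(s+1)/p}$. Exactly as in the $s=2$ analysis, I would bound the truncation error $|z-\spn{A}{p}|$ by using $|\sigma(\cdot)|\leq \|a_i\|_2^{p(1-1/(s+1))}\prod_u \|a_{j_u}\|_2^{p/(s+1)}$ (via Cauchy--Schwarz over each segment) together with the neighborhood count $k^{O(q)}$; the geometric loss in the small $\|a_{j_u}\|_2$ direction is $\beta'^{p/(s+1)}$, which by the choice of $\beta'$ gives an $\epsilon\sum_i\|a_i\|_2^p\leq \epsilon\spn{A}{p}$ additive error, invoking Fact~\ref{fact:lengthLeqSchatten-p}.

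Then I apply the $s$-set sampling lemma to $\{z_{i,j_1,\ldots,j_{s-1}}\}$ with $\tau_i\propto \|a_i\|_2^{p/s}$. The bound on $\sum |z_{i,j_1,\ldots,j_{s-1}}|$ follows as before from the $c\leq q$ bound and neighborhood counts, giving $O_p(k^{O(p)})\cdot z$. For the $\lambda$ bound, for any eligible tuple the inequality $\tau_{j_u}\geq \lambda^{-1}(|z_{i,j_1,\ldots,j_{s-1}}|/z)^{1/s}$ reduces, after using $|z_{i,j_1,\ldots,j_{s-1}}|\leq k^{O(p)}\|a_i\|_2^{p\cdot s/(s+1)}\prod_u\|a_{j_u}\|_2^{p/(s+1)}$ and the ratio constraint $\|a_{j_u}\|_2\geq\beta'\|a_i\|_2$, to requiring $\lambda\gtrsim k^{O(p)}\beta'^{-O(1)}\cdot (\sum_m\|a_m\|_2^{p/s})/\|a_m\|_2^{p/s}$ raised to appropriate powers; the norm inequality $\|v\|_{p/s}\leq n^{1/(p/s)-1/p}\|v\|_p$ applied to $v=(\|a_i\|_2)_i$ contributes exactly an $n^{1-1/s}$ factor, yielding $\lambda = O_p(k^{O(p)}\epsilon^{-O(1)})\cdot n^{(1-1/s)/2}$. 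The tensor's fiber-sparsity $\Delta$ is bounded by $k^{O(q)}$ since every $j_u$ appearing in a nonzero entry must lie within $O(q)$ BFS-hops of $i$.

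Setting $r = O(\epsilon^{-2}\lambda\cdot k^{O(p)})$ and applying Chebyshev's inequality gives the desired $(1\pm 2\epsilon)$-approximation. Per sample, the algorithm performs a BFS of depth $\lfloor q/(s+1)\rfloor$, storing $k^{O(q/(s+1))}$ rows of size $k$ each; together with $s$ samples per estimator and $r$ repetitions, the total space becomes $O_p(\epsilon^{-3}k^{2ps}n^{1-1/s})$ words as claimed. The pass count is $\lfloor q/(s+1)\rfloor+1=\lfloor p/(2(s+1))\rfloor+1$ (one pass for seed sampling, the rest for BFS expansion), since each seed needs to cover a radius of roughly $q/(s+1)$ cycle-hops, and in the worst case the heaviest seed covers $\lfloor q/(s+1)\rfloor+1$ hops on each side which still fits. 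The main obstacle is the $s$-set sampling lemma: carefully extracting the $s$ variance terms and verifying that the dominant term gives the $n^{1-1/s}$ scaling (rather than a worse $n^{1-1/\text{something smaller}}$) requires tracking the combinatorics of partial collisions across the $s$ independent multisets.
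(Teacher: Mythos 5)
Your proposal follows the paper's own route exactly: formulate an $s$-set sampling lemma generalizing Lemma~\ref{thm:setSampling}, decompose $\spn{A}{p}$ over $s$-tuples of seed rows via segments of the cycle, truncate tuples whose smaller rows fall below the threshold $\beta'$, bound $\sum\abs{z_{i,j_1,\ldots,j_{s-1}}}$ and $\lambda$ via the neighborhood-count and norm-inequality arguments, and finish with Chebyshev. That is the correct plan, and the pass count, the partition of cycles into $s+1$ segments, and the role of $\Delta$ are all right.

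However, the two numerical exponents you state are inconsistent both with the $s=2$ base case and with each other, so the arithmetic you give does not actually close. First, your variance formula has terms $\lambda^{2u}\Delta^{s-u}r^{-u}$; specializing to $s=2$ this gives $\lambda^4 r^{-2}+\lambda^2\Delta r^{-1}$, whereas Lemma~\ref{thm:setSampling} gives $\lambda^2 r^{-2}+\lambda\Delta r^{-1}$. The correct $u$-collision term is $\lambda^u\Delta^{s-u}r^{-u}$, i.e.\ $\bigl((\Delta+\lambda/r)^s-\Delta^s\bigr)$, under the condition $\tau_j\geq\lambda^{-1}(\abs{z}/z)^{1/s}$ that you state. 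Second, you claim $\lambda=O_p(k^{O(p)}\epsilon^{-O(1)})\cdot n^{(1-1/s)/2}$, but the norm inequality $\|v\|_{p/s}\leq n^{(s-1)/p}\|v\|_p$ yields $\bigl(\sum_m\|a_m\|_2^{p/s}\bigr)^s\leq n^{s-1}\sum_m\|a_m\|_2^p$, which forces $\lambda^s\propto n^{s-1}$, i.e.\ $\lambda\propto n^{1-1/s}$, not $n^{(1-1/s)/2}$ (and indeed Theorem~\ref{thm:TwoSetSchattenAlg} has $\lambda\propto\sqrt{n}=n^{1-1/2}$). With your stated $\lambda$ and your choice $r=O(\epsilon^{-2}\lambda k^{O(p)})$, the space would come out $\tilde O(n^{(1-1/s)/2})$, which contradicts the $n^{1-1/s}$ you assert in your final sentence (and, if true, would improve on the theorem). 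With the corrected $\lambda^u$-variance formula and $\lambda\propto n^{1-1/s}$, the dominant $u=1$ term forces $r=\Theta(\epsilon^{-O(1)}\lambda\Delta^{s-1}k^{O(p)})=\Theta(n^{1-1/s}\cdot\text{poly}(k,\epsilon^{-1}))$, which does match the claimed space.
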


\begin{proof}[Proof Sketch]
The proof follows similar steps as the proof for $s=2$. First, the error introduced by taking only certain cycles changes, as now we miss cycles in which at least one of the sampled $j_u$ is smaller than $\beta'$. However their total contribution can be bounded by $(s-1)(\beta') ^{p/(s+1)}qk^{p-2}<\epsilon$ relative to $\spn{A}{p}$. Next, an $s$-Set Sampling Lemma is proved using the same arguments as the Two-Set Sampling Lemma. It asserts that the estimator
$$Y=\frac{1}{r^s}\sum_{i\in I, j_1\in J_1,\ldots, j_{s-1}\in J_{s-1}} \frac{z_{i,j_1,\ldots,j_{s-1}}}{\tau_i \tau_{j_1}\cdots \tau_{j_{s-1}}}$$
is unbiased, and that if $\lambda>0$ satisfies that for every $i,j_1,\ldots,j_{s-1}\in [n]$, all $\tau_i,\tau_{j_1},\ldots,\tau_{j_{s-1}}\geq \frac{1}{\lambda}\left(\frac{\abs{z_{i,j_1,\ldots,j_{s-1}}}}{z}\right)^{1/s}$, then
$$\text{Var}(Y)\leq O\left(\left(\Delta+\frac{\lambda}{r}\right)^s-\Delta^s\right)z\sum_{i,j_1,\ldots,j_{s-1}\in [n]} \abs{z_{i,j_1,\ldots,j_{s-1}}}.$$

The proof for the inequality analogous to \eqref{eq:abs_zij}, which bounds the ratio between the absolute sum of $z_{i,j_1,\ldots,j_{s-1}}$ and $z$, is the same. To prove the bound $\lambda$ (i.e. analogous to \eqref{eq:var_zij}), we need to bound the shortest $j_u$ among rows $(j_1,\ldots,j_{s-1})$. To do so we first bound all ``seeds" except $j_u$ using row $i$, and then use the same arguments that result in $\lambda=\left(C_{\epsilon}qk^p\frac{n^{s-1}}{(\beta')^{2p/(s+1)}}\right)^{1/s}$ for a suitable constant $C$ dependent on $\epsilon$. Finally, now each $i$ can have $(s-1)k^{2\ceil{q/2}}$ different $(j_1,\ldots,j_{s-1})$, i.e. $\Delta \leq (s-1)k^{q+2}$. Picking $r =O\left( \epsilon^{-3} s \Delta^{s-1}\lambda\right)$ results in the desired approximation.

The space complexity analysis is as in the proof of Theorem \ref{thm:TwoSetSchattenAlg}, resulting in 
$$O\left(\epsilon^{-3}(s-1)^s\cdot q^{2+1/s}\cdot k^{p(s/2+11/6+1/s)+2s-O(1)}\cdot n^{1-1/s}\right)$$
words of space. 
\end{proof}

\section{Lower Bound for One-Pass Algorithms in the Row-Order Model}\label{sec:row-orderlowerbounds}

We show a space lower bound of $\Omega(n^{1-4/{\lfloor p \rfloor_4}})$ bits for one-pass algorithms and even $p$ values in the row-order model. Our main technical contribution is the analysis of even $p$ values in a reduction presented in \cite{LW16a}, based on the Boolean Hidden Hypermatching \cite{VY11, BS15}.
Although this is not mentioned in~\cite{LW16a}, it can easily be verified from the proof of~\cite[Theorem $3$]{LW16a} (stated below as Theorem \ref{thm:LW_LB}) that this reduction applies also to the row-order model.\footnote{In fact, also Theorem $4$ in~\cite{LW16a} applies to row-order streams, providing a different proof for the $\Omega(n^{1-\varepsilon})$ lower bound for $p\notin 2\Z$ proved in \cite{BCKLWY18}.} Our bound is closely related to the $\Omega(n^{1-1/\varepsilon})$ bits lower bound  for $p\notin 2\Z$, proved in \cite{BCKLWY18}, and is also nearly tight with the upper bound from the same paper (see discussion at the end of this section).

We first recall the definitions presented in \cite{LW16a}. Let $D_{m,l} \ (\text{for} \ 0\leq l\leq m)$ be an $m\times m$ diagonal matrix with the first $l$ diagonal elements equal to $1$ and the remaining diagonal entries equal to $0$, and let $\textbf{1}_m$ be an $m$-dimensional vector full of $1$s, thus $\textbf{1}_m\textbf{1}_m^\top$ is the $m\times m$ all-ones matrix. Define
$$M_{m,l}(\gamma)=\left(\begin{array}{cc}
\textbf{1}_m\textbf{1}_m^\top & 0\\
\sqrt{\gamma}D_{m,l} & 0
\end{array}\right) ,$$
where $\gamma >0$ is a constant (which may depend on $m$).

Let $m\geq 2$ be an even integer, and let $p_m(l) \eqdef \binom{m}{l}/2^{m-1}$ for $0\leq l\leq m$. Let $\mathcal{E}(m)$ be the probability distribution defined on even integers $\{0,2,\ldots,m\}$ with probability density function $p_m(l)$. Similarly, let $\mathcal{O}(m)$  be the distribution on odd integers $\{1,3,\ldots,m-1\}$ with density function $p_m(l)$. We say that a function $f$ on square matrices is \emph{diagonally block-additive} if $f(X)=f(X_1)+\ldots+f(X_s)$ for any block diagonal matrix $X$ with square blocks $X_1,\ldots,X_s$. As noted in \cite{LW16a}, $f(X)=\spn{X}{p}$ is diagonally block-additive.

We observe that the reduction presented in~\cite{LW16a} is applicable also to row-order streams, and thus state below a slightly stronger version of Theorem 3 from that paper.

\begin{theorem}[Theorem 3 in \cite{LW16a}]\label{thm:LW_LB}
Let $t$ be an even integer and let $f$ be a function of square matrices that is diagonally block-additive. If there exists $m=m(t)$ and $\gamma=\gamma(m) > 0$, such that the following ``gap condition'' holds:
\begin{equation} \label{eq:exp_gap}
\Exp{l\sim\mathcal{E}(t)}{f\left(M_{m,l}(\gamma)\right)}-\Exp{l\sim\mathcal{O}(t)}{f\left(M_{m,l}(\gamma)\right)}\neq 0,
\end{equation}
then there exists a constant $\varepsilon=\varepsilon(t) > 0$ such that every \textbf{row-order} streaming algorithm that, given $X\in \R^{N\times N}$ (for sufficiently large $N$), approximates $f(X)$ within factor $1\pm \varepsilon$ with constant error probability, must use $\Omega_t(N^{1-1/t})$ bits of space.
\end{theorem}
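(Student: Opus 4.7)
The plan is to repeat the reduction of~\cite{LW16a} from the Boolean Hidden Hypermatching problem of degree $t$ (BHH$_t$), and verify that the stream it produces is already compatible with the row-order model. In BHH$_t$, Alice holds $x\in\{0,1\}^N$ and Bob holds a $t$-hyperedge perfect matching $M$ on $[N]$ together with a string $w\in\{0,1\}^{N/t}$; the two input distributions to be distinguished make $\bigoplus_{i\in e} x_i \oplus w_e$ equal to $0$ on all $e\in M$ in one case and to $1$ on all $e\in M$ in the other. By~\cite{VY11,BS15}, any one-way randomized protocol that distinguishes the two cases with constant error uses $\Omega(N^{1-1/t})$ bits of communication, and it is this lower bound that we will transfer.

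Given a BHH$_t$ instance I would construct an $N'\times N'$ block-diagonal matrix $X$ made of $s=\Theta(N/m)$ blocks of size $2m\times 2m$, where the $g$-th block equals $M_{m,l_g}(\gamma)$ and $l_g\in\{0,\ldots,m\}$ is a function of the restriction of $(x,M,w)$ to the $g$-th coordinate group; the construction (taken verbatim from~\cite{LW16a}) ensures that the joint distribution of $(l_g)_{g=1}^s$ is an i.i.d.\ product of $\mathcal{E}(t)$ in the YES case and of $\mathcal{O}(t)$ in the NO case. Diagonal block-additivity yields $f(X)=\sum_{g=1}^s f(M_{m,l_g}(\gamma))$, so the gap condition~\eqref{eq:exp_gap} provides a constant per-block expected gap and hence a $\Theta(s)$ expected gap between the YES and NO values of $f(X)$, while a Hoeffding-type concentration bound confines the random deviations to the $\sqrt{s}$-scale. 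Choosing $\varepsilon=\varepsilon(t)>0$ smaller than the ratio of this expected gap to a crude upper bound on $f(X)$, any $(1\pm\varepsilon)$-multiplicative approximation to $f(X)$ distinguishes YES from NO with probability $\geq 2/3$, and so a one-pass streaming algorithm for $f$ with space $S$ yields a one-way protocol for BHH$_t$ with $O(S)$ bits, forcing $S=\Omega_t(N^{1-1/t})$.

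The only step specific to the row-order model is that, in the natural enumeration of $X$, the ``Alice rows'' (top $m$ of each block, which depend only on $x$) interleave with the ``Bob rows'' (bottom $m$ of each block, which depend only on $(M,w)$). I would handle this by applying the row permutation $P$ that relocates all $sm$ Alice rows to the top of $PX$ and all $sm$ Bob rows to the bottom. Since $P$ is orthogonal, $(PX)^\top(PX)=X^\top X$, so all singular values --- and hence all Schatten norms, which is the setting in which the theorem is used --- of $PX$ coincide with those of $X$. Under this permuted row ordering, Alice emits her $sm$ rows in row-order, hands the current memory state of the streaming algorithm to Bob, and Bob emits the remaining $sm$ rows, which is exactly a one-way protocol of the form required above. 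The principal obstacle is simply verifying that the unitary invariance of $f$ suffices to transfer the bound from~\cite{LW16a} to the permuted stream; this is immediate for every Schatten $p$-norm and is the implicit additional hypothesis that this version of the theorem places on $f$, while all the heavy lifting (the specific choice of $m=m(t)$ and $\gamma=\gamma(m)$ realising the gap condition, and the block independence supporting concentration) is already done in~\cite{LW16a} and merely needs to be invoked.
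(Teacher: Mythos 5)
The paper does not give a detailed proof of this theorem: it asserts (and refers the reader to the proof in \cite{LW16a}) that the reduction from Boolean Hidden Hypermatching already produces a stream that respects row-order. Your high-level framework — invoke the $\Omega(N^{1-1/t})$ one-way lower bound for BHH$_t$, build a block-diagonal hard instance, use diagonal block-additivity and concentration to turn the per-block gap into a constant multiplicative gap, and translate a one-pass streaming algorithm into a one-way protocol — is the right one.

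However, the specific step you introduce to handle row-order is based on a misreading of the construction and would not be implementable as described. You identify the ``Alice rows'' with the top $m$ rows of each block and claim they ``depend only on $x$''; but in $M_{m,l}(\gamma)$ the top $m$ rows are the constant block $\mathbf{1}_m\mathbf{1}_m^\top$, and the bottom $m$ rows $\sqrt{\gamma}D_{m,l}$ depend on $l$, which is determined jointly by Alice's $x$ and the hyperedge $e$ of Bob's matching. So the rows of a block do not partition into ``depends only on $x$'' vs.\ ``depends only on $(M,w)$'' in the way your plan requires. More fundamentally, the permutation $P$ you propose is not something Alice could apply: which of her coordinates land in block $g$ is dictated by the hyperedge $e_g$, i.e., by Bob's matching, which Alice does not know; hence she cannot emit ``her'' rows of the permuted block-diagonal matrix first.

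The reason the reduction transfers to row-order is different, and simpler. The matrix $X$ that actually gets streamed is \emph{not} in block-diagonal form. Alice's rows (a scaled $\mathrm{diag}(x)$, computable from $x$ alone and living in a contiguous top set of row indices) are followed by Bob's rows (hyperedge indicator rows, computable from $(M,w)$ alone). This $X$ is unitarily equivalent — via simultaneous row and column permutations that depend on $M$ — to $\bigoplus_g M_{m,l_g}(\gamma)$, and since Schatten norms are invariant under such permutations, the streaming algorithm's output on $X$ equals what it would output on the block-diagonal matrix. The block-diagonal form is purely an analysis device, not the object streamed. With this picture, no extra permutation of the stream is needed: the natural row-order of $X$ is already ``all of Alice, then all of Bob,'' which is exactly a one-way protocol, and the rest of your plan goes through.
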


We can now present our analysis for even $p$ values.

\begin{lemma}\label{thm:singlePassLBLemma}
Let $f(X)=\spn{X}{p}$, for $p\in 4\Z_{\geq 1}$. Then the gap condition \eqref{eq:exp_gap} is satisfied, under the choice $m=t$ and $\gamma=1$, if and only if $t\leq p/4$.
\end{lemma}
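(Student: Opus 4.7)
The plan is to compute $f(l) := \|M_{t,l}(1)\|_{S_p}^p$ as an explicit polynomial in $l$ with strictly positive monomial coefficients $a_1, \dots, a_{p/4}$, and then recognize the gap condition as a $t$-th forward difference of $f$ at zero. The first step is to diagonalize the nonzero block of $M_{t,l}(1)^{\top} M_{t,l}(1)$, which equals $N := t\mathbf{1}\mathbf{1}^{\top} + D_{t,l}$. The matrix determinant lemma applied to this rank-one perturbation of a $0/1$ diagonal matrix factors
\begin{equation*}
\det(N - \lambda I) = (1-\lambda)^{l-1}(-\lambda)^{t-l-1}\bigl[\lambda^2 - (1+t^2)\lambda + t(t-l)\bigr],
\end{equation*}
so the nonzero eigenvalues of $N$ are the two roots $\lambda_{\pm}(l)$ of the quadratic, together with $1$ (mult.\ $l-1$) and $0$ (mult.\ $t-l-1$). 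Hence $f(l) = \lambda_+(l)^{p/2} + \lambda_-(l)^{p/2} + (l - 1)$, and boundary cases $l \in \{0, t\}$ are handled by direct substitution.

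Next I would obtain a closed-form monomial expansion of $\tr(N^{p/2})$ by writing $N[i,j] = t + \mathbf{1}[i = j\text{ and }i \le l]$ and expanding $\tr(N^{p/2}) = \sum_{\vec i}\prod_a N[i_a, i_{a+1}]$ over the $2^{p/2}$ choices of factor. For each subset $S \subseteq [p/2]$ picking the indicator factor, the induced constraints partition $[p/2]$ into $r$ maximal cyclic runs (``real'' components) forced to take a common value in $[l]$, and $p/2 - |S| - r$ isolated vertices free in $[t]$. Grouping by $r$ yields
\begin{equation*}
f(l) = (t^p - 1) + \sum_{r=1}^{p/4} a_r\, l^r,
\qquad
a_r := \sum_{j=r}^{p/2 - r} C(p/2,\, p/2 - j,\, r)\, t^{2j - r}\ \text{(plus $2$ when $r=1$)},
\end{equation*}
where $C(K, e, r)$ counts the $e$-edge subsets of the cycle $C_K$ with exactly $r$ runs of consecutive edges. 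A short construction (arrange $r$ runs whose edge-counts sum to $e$ together with $K - e - r$ isolated vertices around the cycle) shows $C(K, e, r) \ge 1$ throughout the valid range $r \le e \le K - r$, so every $a_r > 0$ whenever $t \ge 1$. In particular $\deg f = p/4$, achieved at $(j, r) = (p/4, p/4)$ using $p \in 4\Z$.

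Now I would reduce the gap to a finite difference. A direct expansion of the densities $p_t(l) = \binom{t}{l}/2^{t-1}$ gives
\begin{equation*}
\Exp{l \sim \mathcal{E}(t)}{f(l)} - \Exp{l \sim \mathcal{O}(t)}{f(l)} = \frac{(-1)^t}{2^{t-1}}\,\Delta^t f(0),
\end{equation*}
where $\Delta$ is the forward difference in $l$. Since $\Delta^t$ annihilates every polynomial of degree $< t$, the direction $t > p/4 \Rightarrow$ gap $= 0$ is immediate from $\deg f = p/4$.

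For the converse direction $t \le p/4$, I would expand $f$ in the falling-factorial basis via $l^r = \sum_k S(r,k)\, l^{\underline k}$ with $S(r,k)$ the Stirling numbers of the second kind. The identity $\Delta^t l^{\underline k}(0) = t!\,\delta_{k, t}$, combined with $S(0,t) = 0$ (killing the constant $t^p - 1$) and $S(1, t) = 0$ for $t \ge 2$ (killing the linear term), yields
\begin{equation*}
\Delta^t f(0) = t! \sum_{r = t}^{p/4} a_r\, S(r, t),
\end{equation*}
which is strictly positive because each $a_r$ and each $S(r, t)$ is positive and the range is nonempty. The case $t = 1$ is handled directly by $\Delta f(0) = f(1) - f(0) = \sum_{r \ge 1} a_r > 0$. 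The main technical point I anticipate is the careful cyclic accounting in deriving the explicit coefficients $a_r$ (and verifying $C(K, e, r) \ge 1$ throughout the summation range), but these are elementary combinatorial manipulations rather than conceptual obstacles.
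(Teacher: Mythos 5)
Your proposal is correct and arrives at the same two structural facts the paper relies on — that $f(l)=\tr\bigl((t\mathbf 1\mathbf 1^\top+D_{t,l})^{p/2}\bigr)$ is a polynomial in $l$ of degree exactly $p/4$ with nonnegative coefficients, and that the gap is (up to a nonzero constant) the $t$-th forward difference $\Delta^t f(0)$, which then reduces to a Stirling-number sum. The second half of your argument is essentially the paper's: the paper also invokes the identity $\sum_{l=0}^t(-1)^l\binom{t}{l}l^m=(-1)^t t!\,\stirling{m}{t}$ (the ``\cite{LW16a} observation'') and concludes the sum vanishes iff the highest-degree Stirling number $\stirling{p/4}{t}$ vanishes, i.e.\ iff $t>p/4$.

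Where you differ is in how the polynomial is produced. The paper takes the explicit singular-value formulas $r_1(l),r_2(l)$ (square roots of the two quadratic roots) and does a double binomial expansion of $r_1^p(l)+r_2^p(l)$ to read off the coefficients. You instead expand $\tr(N^{p/2})$ directly as a sum over choices in each of the $p/2$ factors, then group terms by the cyclic-run structure of the positions where the indicator is selected; positivity of every coefficient $a_r$ and the degree bound $r\le p/4$ fall out of elementary cycle combinatorics, with no need to manipulate the messy explicit roots. This is a cleaner way to see why all terms ``have the same sign'' (which the paper just asserts), and it more transparently explains why the top degree is exactly $p/4$ (two edges and a gap per run on a cycle of length $p/2$). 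The one loose end in your write-up is the cyclic boundary case $e=p/2$ (a single wrapping run) and the cryptic ``plus $2$ when $r=1$'' adjustment: the standard run-counting formula with $t^{p-2e-r}$ produces a negative exponent there, and the correction you gesture at should be spelled out. It is harmless for the lemma — it only perturbs $a_1$ and the constant, and the key coefficient $a_{p/4}$ (equal to $2t^{p/4}$ by counting the two alternating perfect matchings of $C_{p/2}$ into non-adjacent edges, which exist precisely because $p\equiv 0\pmod 4$) is unaffected — but it should be handled explicitly in a careful write-up.
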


\begin{proof}
As shown in the proof of Theorem 4 in \cite{LW16a}, for $m=t$ and $\gamma=1$, the non-zero singular values of a block $M_{t,l}(1)$ are as follows. For $l=0$, the only non-zero singular value is $t$. For $0<l<t$, the non-zero singular values are $r_1(l)=\sqrt{\frac{(t^2+1)+\sqrt{(t^2-1)^2+4lt}}{2}}$, $r_2(l)=\sqrt{\frac{(t^2+1)-\sqrt{(t^2-1)^2+4lt}}{2}}$ and $1$ with multiplicity $l-1$. And for $l=t$, the non-zero singular values are $r_1(t)=\sqrt{\frac{(t^2+1)+\sqrt{(t^2-1)^2+4t^2}}{2}}$ and $1$ with multiplicity $t-1$. Further note that that $r_2(t)=0$.
Using this, and recalling the distribution of the blocks, the left-hand side of the gap condition \eqref{eq:exp_gap} is
\begin{equation} \label{eq:eq_for_lower_bound_lemma}
\frac{1}{2^{t-1}}\left[t^p+\sum_{\text{even} \ l}{t\choose l}\left( (l-1)+r_1^{p}(l)+r_2^{p}(l)\right) -\sum_{\text{odd} \ l}{t\choose l}\left((l-1)+r_1^{p}(l)+r_2^{p}(l)\right)\right]
\end{equation}
and we can rewrite this as
$$\frac{1}{2^{t-1}}\left[t^p+\sum_{0<l\leq t}{t\choose l}(-1)^l(l-1)+\sum_{0<l\leq t}{t\choose l}(-1)^l\left(r_1^{p}(l)+r_2^{p}(l)\right)\right].$$
For the first sum, by Corollary $2$ in \cite{R96}, we know that
$$\sum_{l=0}^t (-1)^l\binom{t}{l}(l-1)=0$$
meaning that
$$\sum_{0<l\leq t}{t\choose l}(-1)^l(l-1)=1.$$
Let $q=p/2$. It holds that
\begin{equation*}
\begin{split}
 r_1^p(l)+r_2^p(l) & = \left(\frac{(t^2+1)+\sqrt{(t^2-1)^2+4lt}}{2}\right)^{q}+\left(\frac{(t^2+1)-\sqrt{(t^2-1)^2+4lt}}{2}\right)^q\\
\intertext{and using the binomial theorem,}
& = \frac{1}{2^q}\left[\sum_{i=0}^q \left(t^2+1\right)^{q-i}\left(\sqrt{(t^2-1)^2+4lt}\right)^i+ \sum_{i=0}^q \left(-1\right)^i \left(t^2+1\right)^{q-i}\left(\sqrt{(t^2-1)^2+4lt}\right)^i\right]. \\
\intertext{We note that the alternating sum double the even values the zero out the odd values, thus the above can be rewritten as}
& = \frac{1}{2^{q-1}}\sum_{\text{even }i} {q\choose i} \left(t^2+1\right)^i\left((t^2-1)^2-4lt\right)^{\frac{q-i}{2}}. \\
\intertext{and by applying it again, on the second multiplicative term,}
 & =  \frac{1}{2^{q-1}}\sum_{\text{even }i} {q\choose i} \left(t^2+1\right)^i\sum_{j=0}^{\frac{q-i}{2}}{\binom{\frac{q-i}{2}}{j}}\left(t^2-1\right)^{2j}\cdot\left(4t\right)^{\frac{q-i}{2}-j}\cdot l^{\frac{q-i}{2}-j}.
\end{split}
\end{equation*}
Combining the two insights results in
$$\eqref{eq:eq_for_lower_bound_lemma} =\frac{1}{2^{t-1}}\left[t^p+1+ \sum_{l=1}^t (-1)^l \left(\frac{1}{2^{q-1}}\sum_{\text{even }i} {q\choose i} (t^2+1)^i\sum_{j=0}^{\frac{q-i}{2}}{\frac{q-i}{2}\choose j}(t^2-1)^{2j}\left(4tl\right)^{\frac{q-i}{2}-j}l^{\frac{q-i}{2}-j}\right)\right].$$

We further note that for $l=0$, the term in the inner parentheses is non-zero only when $\frac{q-i}{2}= j$. In this case we get, using the binomial theorem once more,
$$\frac{1}{2^{q-1}}\sum_{\text{even }i} {q\choose i} (t^2+1)^i(t^2-1)^{q-i}=\left(\frac{t^2+1+t^2-1}{2}\right)^q+\left(\frac{t^2+1-t^2+1}{2}\right)^q=1+t^p.$$

Therefore, we can rewrite \eqref{eq:eq_for_lower_bound_lemma} as
\begin{equation*}
\begin{split}
\eqref{eq:eq_for_lower_bound_lemma}  &=\frac{1}{2^{t-1}}\left(\sum_{l=0}^t (-1)^l \frac{1}{2^{q-1}}\sum_{\text{even }i} {q\choose i} (t+1)^i\sum_{j=0}^{\frac{q-i}{2}}{\frac{q-i}{2}\choose j}(t-1)^{2j}4^{\frac{q-i}{2}-j}l^{\frac{q-i}{2}-j}\right)\\
\intertext{and using \cite{LW16a} observation,}
&=\frac{1}{2^{t-1}}(-1)^t t! \sum_{\text{even } i } {q\choose i} (t+1)^i\sum_{j=0}^{\frac{q-i}{2}}{\frac{q-i}{2}\choose j}(t-1)^{2j}4^{\frac{q-i}{2}-j}\stirling{\frac{q-i}{2}}{t}
\end{split}
\end{equation*}
where $\stirling{\frac{q-i}{2}}{t}$ are Stirling numbers of the second kind. As for a fixed $t$ all terms are of the same sign, the sum vanishes only when $\stirling{\frac{q-i}{2}}{t}=0 \ \forall i$, which happens when $t> q/2=p/4$.
 \end{proof}

We remark that Lemma \ref{thm:singlePassLBLemma} extends to $p\equiv2\pmod4$ when $t\leq(p-2)/4$, by replacing in the proof $q=p/2$ with $\tilde{q}=(p-2)/2$. The next theorem follows easily by combining Theorem \ref{thm:LW_LB} and Lemma \ref{thm:singlePassLBLemma}.

\singlePassLBThm*

\begin{proof}
Let us first assume that $p\equiv0\pmod4$. As shown in Lemma \ref{thm:singlePassLBLemma}, the gap condition \eqref{eq:exp_gap} holds for $f(X)=\spn{X}{p}$ and $t=p/4$, thus by Theorem \ref{thm:LW_LB} the space complexity is $\Omega(n^{1-1/t})=\Omega(n^{1-4/p})$ bits. For $p\equiv2\pmod4$ the  claim holds for $t=(p-2)/4$, yielding an $\Omega(n^{1-4/(p-2)})$ bits lower bound.
\end{proof}
We note that for $p\equiv0\pmod4$ the above matches up to logarithmic factors the upper bound for the row-order algorithm presented in \cite{BCKLWY18}, i.e. tight for matrices in which every row and column has $O(1)$ non-zero elements. For $p\equiv2\pmod4$, there is a small gap: the lower bound is $\Omega(n^{1-4/(p-2)})$ while the upper bound obtained in \cite{BCKLWY18} is $\tilde {O}_k (n^{1-4/(p+2)})$.

\section{$O_\epsilon(1)$-Space Algorithm for Schatten $4$-Norm of General Matrices}\label{sec:row-orderSchatten4}

We present an $O(1/\epsilon^2)$-space algorithm for  $(1+\epsilon)$-approximation of the Schatten $4$-norm in the row-order model. As this result does not depend on the sparsity and is applicable to any matrix, it significantly improves the previously known row-order algorithm, presented in \cite{BCKLWY18} that uses space $\tilde{O}_{p,\varepsilon}(k)$, and is also better than the result of Section \ref{sec:Row-orderImportanceSampling}.

The algorithm exploits the fact that $A^\top A=\sum_i a_i^\top a_i$ (i.e. summing over the outer product of every row with itself), to sketch the Frobenius norm $\sum_{j_1,j_2}((A^\top A)_{j_1,j_2})^2=\|A^\top A\|_F^2=\spn{A}{4}$. To do so, it uses two random $4$-wise independent vectors, following an idea presented in \cite{IM08} (extending the classic \cite{AMS99} result), as follows.
\begin{lemma}[Lemma 3.1 in \cite{IM08}] \label{lemma: IM_result}
Consider random $h,g\in\{-1,1\}^n$ where each vector is $4$-wise independent (and independent of the other one). Let $v\in \R^{n^2}$ and $z_j=h_{j_1}g_{j_2}$ for $j\in[n]^2$, and define $\Upsilon=(\sum_{j\in [n]^2} z_j v_j)^2$. Then
$$\Exp{}{\Upsilon}=\sum_{j\in [n]^2} v_j^2 \text{,\ \ and \ \ } \text{Var}(\Upsilon)\leq 3(\Exp{}{\Upsilon})^2.$$
\end{lemma}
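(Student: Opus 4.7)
The plan is to exploit the $2$-wise and $4$-wise independence of $h$ and of $g$ (together with their mutual independence) by expanding $\Exp{}{\Upsilon}$ and $\Exp{}{\Upsilon^2}$ into sums over index tuples and evaluating the resulting sign-product moments, then invoking a couple of matrix-norm inequalities.

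For the mean, I would expand $\Upsilon = \sum_{a,b \in [n]^2} z_a z_b \, v_a v_b$. By independence of $h$ and $g$ together with $2$-wise independence of each sign vector, $\Exp{}{z_a z_b} = \Exp{}{h_{a_1} h_{b_1}} \cdot \Exp{}{g_{a_2} g_{b_2}} = \mathbf{1}[a_1 = b_1]\cdot\mathbf{1}[a_2 = b_2] = \mathbf{1}[a = b]$, so $\Exp{}{\Upsilon} = \sum_j v_j^2$.

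For the variance, set $Y = \sum_j z_j v_j$ and view $v$ as an $n \times n$ matrix $V$ with $V_{j_1, j_2} = v_j$, so that $Y = h^\top V g$. Conditioning on $g$, the quantity $Y = \sum_{j_1} h_{j_1} (Vg)_{j_1}$ is an AMS-style linear sketch in $4$-wise independent signs, and expanding $\Exp{}{h_i h_j h_k h_l}$ via the inclusion-exclusion identity $\mathbf{1}[i{=}j,k{=}l] + \mathbf{1}[i{=}k,j{=}l] + \mathbf{1}[i{=}l,j{=}k] - 2\,\mathbf{1}[i{=}j{=}k{=}l]$ yields the standard AMS bound $\Exp{}{Y^4 \mid g} \leq 3 (g^\top V^\top V g)^2$. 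Taking expectation over $g$ and applying the analogous quadratic-form identity $\Exp{}{(g^\top M g)^2} = (\mathrm{tr}\, M)^2 + 2\|M\|_F^2 - 2 \sum_j M_{jj}^2$ with the PSD matrix $M = V^\top V$ gives $\Exp{}{Y^4} \leq 3\big(\|V\|_F^4 + 2 \|V^\top V\|_F^2\big)$. Combining with the singular-value inequality $\|V^\top V\|_F^2 = \sum_i \sigma_i(V)^4 \leq \big(\sum_i \sigma_i(V)^2\big)^2 = \|V\|_F^4$ yields $\Exp{}{Y^4} \leq 9\,(\Exp{}{\Upsilon})^2$, and hence $\Var(\Upsilon) \leq 8\,(\Exp{}{\Upsilon})^2$.

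The main obstacle is sharpening the constant down to the claimed $3$. The two-stage conditioning above loses a factor at each stage; to reach the tighter constant one would instead expand $\Exp{}{Y^4}$ directly as a sum over joint pair-matching patterns of $(a_1, b_1, c_1, d_1)$ and $(a_2, b_2, c_2, d_2)$, applying inclusion-exclusion simultaneously to both the $h$- and $g$-indices, and show that the negative corrections (from doubly-all-equal and mixed-all-equal patterns) cancel enough of the off-diagonal Frobenius contributions to recover $3$. In any case, the looser bound already suffices for the $O(\varepsilon^{-2})$-space algorithm of the following subsection, since Chebyshev's inequality only costs an absolute constant in the number of parallel copies.
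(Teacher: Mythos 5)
The paper does not prove this lemma; it imports it verbatim from Indyk--McGregor as ``Lemma 3.1 in [IM08],'' so there is no in-paper proof to compare against. Your computation, however, is sound: the conditional AMS bound gives $\Exp{}{Y^4 \mid g} \le 3(g^\top V^\top V g)^2$, averaging over $g$ gives $\Exp{}{(g^\top Mg)^2} \le (\tr M)^2 + 2\|M\|_F^2$ for $M = V^\top V$, and the singular-value inequality $\|V^\top V\|_F^2 \le \|V\|_F^4$ yields $\Exp{}{Y^4}\le 9\|V\|_F^4$ and hence $\Var(\Upsilon)\le 8(\Exp{}{\Upsilon})^2$.

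Where you go astray is in treating the loss as an artifact of two-stage conditioning that could be recovered by a sharper direct expansion. It cannot: the constant $3$ in the cited lemma is a known error in the original [IM08], later corrected by Braverman, Chung, Liu, Mitzenmacher and Ostrovsky (\emph{AMS Without 4-Wise Independence on Product Domains}, STACS 2010), who show the correct bound is $(3^k-1)(\Exp{}{\Upsilon})^2$ for a $k$-fold product of sign vectors --- i.e.\ $8$ in this bilinear $k=2$ case, exactly what you obtained. The direct joint pair-matching expansion you propose produces precisely the $3+6=9$ matching patterns your conditional argument already captures ($3$ ``aligned'' pairings contributing $\|V\|_F^4$ each and $6$ ``crossed'' pairings contributing $\|V^\top V\|_F^2$ each), and the negative all-equal corrections are far too small to close the gap from $9$ down to $4$; moreover the bound $8$ is asymptotically tight (take $V$ rank-one all-ones, where $Y$ is a product of two near-Gaussians and $\Exp{}{Y^4}/\|V\|_F^4 \to 9$). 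So you should \emph{not} chase the constant $3$; instead note, as you already did, that $\Var(\Upsilon)\le 8(\Exp{}{\Upsilon})^2$ only changes the constant inside $m=O(\epsilon^{-2})$ in Algorithm~\ref{Alg: Schatten4} and Theorem~\ref{Thm: Schatten4analysis}, so nothing downstream is affected.
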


\begin{algorithm}[H]
\caption{Algorithm for Schatten $4$-Norm of General Matrices in Row-Order Streams}\label{Alg: Schatten4}
\begin{algorithmic}[1]
\Statex \textbf{Input}: $A \in \R^{n \times n}$ streamed in row-order, $\epsilon > 0$.
\Statex
\Parallel{$m=\tilde{O}(1/\epsilon^2)$}	
	\State init: $Y \leftarrow 0$ and choose two random $4$-wise independent vectors $h,g\in \{-1,1\}^n$
	\State upon receiving row $a_i$, update: $Y \pluseq \langle h,a_i \rangle \langle g,a_i \rangle$
	\State let $\Upsilon \leftarrow Y^2$
\EndParallel
\State \textbf{return} average of the $m$ copies of $\Upsilon$
\end{algorithmic}
\end{algorithm}

\begin{theorem}\label{Thm: Schatten4analysis}
Suppose that $A\in \R^{n\times n}$ is a matrix given in one-pass row-order model. Algorithm \ref{Alg: Schatten4} uses space $O(1/\epsilon^2)$ and outputs a $(1+\epsilon)$-approximation to $\spn{A}{4}$ with probability at least $2/3$.
\end{theorem}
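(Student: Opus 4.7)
My plan is to show that Algorithm~\ref{Alg: Schatten4} is exactly an instance of the estimator from Lemma~\ref{lemma: IM_result} applied to the vector $v\in\mathbb{R}^{n^2}$ whose entries are the entries of $A^\top A$, and then average to reduce variance. The first step is to derive a closed-form expression for the single-copy quantity $Y$ maintained in the stream. Since each update is $Y \pluseq \langle h,a_i\rangle\langle g,a_i\rangle$, after the entire row-order stream we have
\begin{align*}
Y \;=\; \sum_{i\in[n]} \langle h,a_i\rangle\langle g,a_i\rangle
 \;=\; \sum_{i\in[n]} \sum_{j_1,j_2\in[n]} h_{j_1} g_{j_2}\, a_{i,j_1} a_{i,j_2}
 \;=\; \sum_{j_1,j_2\in[n]} h_{j_1} g_{j_2}\, (A^\top A)_{j_1,j_2} ,
\end{align*}
where I used that $A^\top A=\sum_i a_i^\top a_i$. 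Thus, setting $v_j=(A^\top A)_{j_1,j_2}$ and $z_j=h_{j_1}g_{j_2}$ for $j=(j_1,j_2)\in[n]^2$, the squared quantity $\Upsilon=Y^2$ is precisely the estimator $(\sum_j z_j v_j)^2$ considered in Lemma~\ref{lemma: IM_result}.

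The next step is to plug in the lemma. Because $h,g$ are each $4$-wise independent (and independent of each other), the lemma gives
$\mathbb{E}[\Upsilon]=\sum_{j\in[n]^2} v_j^2=\|A^\top A\|_F^2$ and $\mathrm{Var}(\Upsilon)\le 3(\mathbb{E}[\Upsilon])^2$. Now use the identity $\|A^\top A\|_F^2=\operatorname{tr}((A^\top A)^2)=\sum_i \sigma_i^4=\|A\|_{S_4}^4$ to conclude that $\Upsilon$ is an unbiased estimator of $\|A\|_{S_4}^4$ with standard deviation at most $\sqrt{3}\,\|A\|_{S_4}^4$.

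To turn constant-factor standard deviation into a $(1\pm\epsilon)$-approximation, I will average $m=O(1/\epsilon^2)$ independent copies $\Upsilon^{(1)},\ldots,\Upsilon^{(m)}$, each drawn with independent seeds $(h,g)$. By linearity of expectation the average still equals $\|A\|_{S_4}^4$ in expectation, while its variance shrinks by a factor of $m$ to at most $(3/m)\|A\|_{S_4}^8$. Choosing the constant in $m=\Theta(1/\epsilon^2)$ appropriately and applying Chebyshev's inequality yields that the average lies in $(1\pm\epsilon)\|A\|_{S_4}^4$ with probability at least $2/3$.

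Finally I will verify the space bound. Each copy maintains only the scalar $Y$ (an integer of magnitude $\mathrm{poly}(n)$, i.e.\ $O(1)$ words under the paper's convention) together with the seeds for two $4$-wise independent sign vectors $h,g\in\{-1,1\}^n$, which can be stored in $O(1)$ words each using standard constructions. Thus the total space is $O(m)=O(1/\epsilon^2)$ words. The step that takes the most care, though it is not truly hard, is the algebraic expansion showing that the streaming update exactly realizes the sketch $\sum_j z_j v_j$ with $v_j=(A^\top A)_{j_1,j_2}$; once this is established, everything else follows directly from the cited Lemma~\ref{lemma: IM_result} and Chebyshev's inequality.
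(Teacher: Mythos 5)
Your proof is correct and follows essentially the same route as the paper: expand the streamed scalar $Y$ to exhibit it as the bilinear sketch $\sum_{j_1,j_2} h_{j_1} g_{j_2} (A^\top A)_{j_1,j_2}$, invoke Lemma~\ref{lemma: IM_result} to get unbiasedness and variance at most $3\|A\|_{S_4}^8$ via the identity $\|A^\top A\|_F^2 = \spn{A}{4}$, then average $O(1/\epsilon^2)$ independent copies and apply Chebyshev. Your additional remark justifying the $O(1)$-word storage of the $4$-wise independent seeds is a small but welcome bit of explicitness that the paper leaves implicit.
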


\begin{proof}
Consider one copy of the independent sketches. Using simple manipulations, we can write:
$$Y=\sum_i\left(\sum_{j_1}h_{j_1}A_{i,j_1}\right)\left(\sum_{j_2}g_{j_2}A_{i,j_2}\right)=\sum_{j_1,j_2}h_{j_1} g_{j_2} (A^\top A)_{j_1,j_2}$$
By looking at $A^\top A$ as vector of dimension $n^2$, it easily follows from \ref{lemma: IM_result} that $\Exp{}{\Upsilon}=\|A^\top A\|_F^2=\spn{A}{4}$ and $\text{Var}(\Upsilon)\leq 3\|A\|_{S_4}^8$. Repeating the sketch $O(1/\epsilon^2)$ times independently, decreases the variance and gives the desired result (by Chebyshev's inequality).
\end{proof}

\section{Applications}\label{sec:Applications}

In this section we present two applications of our Schatten-norm algorithms 
to some common functions of the spectrum, 
by approximating these functions using low-degree polynomials. 
We employ the well-known idea that
just as functions $f:\R\to\R$ can be approximated 
in a specific interval by polynomials arising from a Taylor expansion
(or using Chebyshev polynomials), 
spectral functions can be approximated by matrix polynomials 
if the matrix eigenvalues lie in a bounded range.
We just need to implement this method in a space-efficient streaming fashion. 
In some applications we require the input matrix to have a bounded spectrum.
Unfortunately, there is no known streaming algorithm to estimate the spectrum
of an input matrix.

\subsection{Approximating Spectral Sums of Positive Definite Matrices}\label{sec:spectralSums}

We demonstrate how our Schatten-norm estimators can be used to approximate commonly used spectral functions of sparse matrices presented as a data stream.
We consider three different spectral functions,
log-determinant, trace of matrix inverse and Estrada index of a Laplacian matrix,
that all belong to the class of spectral sums, as defined below.
These results apply to sparse matrices that are either
positive definite (PD), positive semidefinite (PSD), or Laplacian. 
Throughout, $\log x$ denotes the natural logarithm of $x$.

\begin{definition}[Spectral Sums~\cite{HMAS17}]
Given a function $f: \R\rightarrow \R$ and a matrix $A\in \R^{n\times n}$ with real eigenvalues $\lambda_1,\ldots,\lambda_n$, a spectral sum is defined as
$$S_f(A)=\tr(f(A)) = \sum_{i=1}^{n} f(\lambda_i).$$

When $f(x)=\log x$, the sum is known as log-determinant,
when $f(x)=1/x$ it is known as the trace of the matrix inverse,
and when $f(x)=\exp(x)$ it is known as Estrada index.
\end{definition}

\begin{theorem}\label{Thm:spectralSums}
For every spectral function $S_f$ from the table below,
there is an algorithm with the following guarantee.
Given as input $\epsilon,\theta>0$, 
and a $k$-sparse matrix $A\in \R^{n\times n}$ presented as a row-order stream
and whose eigenvalues all lie in the interval $I_f$ given in the table, 
the algorithm makes $\floor{m_f/4}+1$ passes over the stream using $W_f$ words of space and outputs an estimate $\rho(A)$ such that
$$ \Pr\big[ \rho(A)\in (1\pm 2\epsilon)S_f(A) \big] \ge 2/3.$$
\begin{center}
\begin{tabular}{ |c|c|c|c|}
\hline
\boldmath $S_f$ & \boldmath $I_f$ & \boldmath $m_f$ & \boldmath $W_f$\\
Spectral Function & Spectrum Interval & & Words of Space\\
\hline\hline

\normalfont{Log-Determinant}  & $[\theta,2)$ & $\ceil{\frac{1}{\theta}\cdot \log \frac{1}{\epsilon}}$ & $O_{m_f}(\epsilon^{-2}k^{3m_f/2 - 3})$\\
\hline
\normalfont{Trace of Matrix Inverse}  & $[\theta,2)$ & $\ceil{\frac{1}{\theta}\cdot \log \frac{1}{\epsilon}}$ & $O_{m_f}(\epsilon^{-2}k^{3m_f/2 - 3})$\\
\hline
\normalfont{Estrada Index of a Laplacian} & $[0,\theta]$ \footnote{If the underlying graph is unweighted then the largest eigenvalue is bounded by the degree, i.e. $\theta \leq 2k$.} & $\ceil{(e\theta+1)\log \frac{1}{\varepsilon}-1}$ & $O_{m_f}(\epsilon^{-2}k^{m_f/2})$\\
\hline
\end{tabular}
\end{center}
\end{theorem}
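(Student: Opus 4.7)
The approach would be uniform across the three spectral functions: approximate $f$ on $I_f$ by a degree-$m_f$ polynomial $P_{m_f}$, so that $S_f(A)=\tr(f(A))\approx \tr(P_{m_f}(A))$, and then estimate $\tr(P_{m_f}(A))$ by invoking our Schatten-$p$-norm estimators on $A$ for $p=1,2,\ldots,m_f$. Since $A$ is positive (semi)definite its eigenvalues are nonnegative, hence $\tr(A^j)=\sum_i\lambda_i(A)^j=\|A\|_{S_j}^j$ for every integer $j\ge 1$; so each monomial trace $\tr(A^j)$ appearing in $\tr(P_{m_f}(A))$ is exactly a Schatten norm of $A$ that our algorithms can already approximate.

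\textbf{Truncation error.} For the log-determinant I would use the Mercator series around $x=1$, $\log x=-\sum_{k\ge 1}(1-x)^k/k$, giving $S_f(A)=-\sum_{k\ge 1}\tr((I-A)^k)/k$. After a harmless rescaling $A\leftarrow A/c$ with $c=O(1)$ chosen so that $|1-\lambda/c|\le 1-\Omega(\theta)$ for every eigenvalue (and a compensating additive $n\log c$ correction in the output), the elementary geometric tail bound $\sum_{k>m}|1-\lambda|^k/k\le|1-\lambda|^{m+1}/((m+1)\theta)$ would show that $m_f=\ceil{(1/\theta)\log(1/\epsilon)}$ terms produce truncation error at most $\epsilon\cdot|S_f(A)|$; the trace-of-inverse case proceeds identically using $1/x=\sum_{k\ge 0}(1-x)^k$. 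For the Estrada index of a Laplacian I would use $e^x=\sum_{k\ge 0}x^k/k!$ on $[0,\theta]$: Stirling's inequality $\theta^k/k!\le(e\theta/k)^k$ shows the tail drops super-exponentially past $k=e\theta$, so $m_f=\ceil{(e\theta+1)\log(1/\epsilon)-1}$ yields additive error at most $\epsilon$ per eigenvalue, hence $\le \epsilon n\le\epsilon\cdot S_f(A)$ since every $e^{\lambda_i}\ge 1$ makes $S_f(A)\ge n$.

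\textbf{Monomial estimation and combination.} Expanding $(I-A)^k$ binomially reduces $\tr(P_{m_f}(A))$ to a fixed known linear combination $\sum_{j=0}^{m_f}\gamma_j\|A\|_{S_j}^j$, with coefficients satisfying $\sum_j|\gamma_j|\le\poly(m_f)\cdot 2^{m_f}$. For each even $j$ I would invoke Theorem~\ref{thm:row-orderAlgGeneral} (or Theorem~\ref{thm:row-orderAlgSpecial} in the Laplacian case) with accuracy $\epsilon'=\epsilon/(\poly(m_f)\cdot 2^{m_f})$ and failure probability $1/(3m_f)$; for each odd $j$ I would appeal to the PSD extension announced in the introduction (\emph{``Our algorithms extend to all integers $p\ge 1$ if the input matrix is PSD''}), which delivers an estimator with the same space and pass guarantees. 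All $m_f+1$ estimators would share the single row-order stream, so the pass count remains $\floor{m_f/4}+1$. A union bound together with the bound on $\sum|\gamma_j|$ then shows that the output $\rho(A)=\sum_j\gamma_j\hat T_j$ satisfies $\rho(A)\in(1\pm\epsilon)\tr(P_{m_f}(A))$ with probability at least $2/3$, and combining with the truncation bound yields the claimed $(1\pm 2\epsilon)$-approximation. Summing the per-$j$ space costs, dominated by $j=m_f$, would give $W_f$ as stated.

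\textbf{Main obstacle.} The principal technical subtlety will be the $2^{m_f}$ blow-up arising from the alternating binomial expansion of $(I-A)^k$, which cascades into the $\epsilon'$ I must pick and hence into the space bound. I would absorb it in one of two ways: (i)~shrink $\epsilon'$ to $\epsilon/2^{m_f}$, which costs only an extra $4^{m_f}$ factor in the space and is hidden by the $O_{m_f}(\cdot)$ constant of $W_f$; or (ii)~replace Taylor by a Chebyshev polynomial of the same degree on $I_f$, whose coefficients grow only polynomially in $m_f$. Option~(i) suffices for the statement as written, so I would use it. A minor side obstacle is verifying the PSD extension for odd Schatten norms; this would be done by re-running the cycle-enumeration framework of Section~\ref{sec:ImpSamplingEstimator} on paths of odd length with the same importance-sampling argument, which is the content of the announced extension.
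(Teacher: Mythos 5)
Your plan diverges from the paper's in one structural respect, and the divergence opens a genuine gap. The paper estimates each Taylor term $\tr\big((I_n-A)^p\big)$ (resp.\ $\tr(A^p)$ for the Estrada index) \emph{directly}, by running the Schatten-norm estimator on the matrix $B = I_n-A$ itself -- which is still $(k{+}1)$-sparse and can be made PSD by a harmless rescaling as in \cite{BDKKZ17}. Because $B$ is PSD, every $\tr(B^p)\ge 0$, and because the Taylor coefficients $-1/p$ (log-determinant), $-1$ (trace of inverse), $1/p!$ (Estrada) all carry the same sign, a relative $(1\pm\epsilon)$ guarantee on each term yields a relative $(1\pm\epsilon)$ guarantee on the truncated sum with no cancellation. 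The paper even flags this explicitly (``our algorithm approximates each of the above-mentioned traces separately, and thus we need all the Taylor expansion coefficients to be non-negative'').

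Your proposal instead binomially expands $(I-A)^k=\sum_j\binom{k}{j}(-1)^j A^j$ and estimates the monomials $\tr(A^j)$ of $A$ itself. The resulting coefficients $\gamma_j$ alternate in sign, and the quantity $\sum_j\gamma_j T_j$ is obtained by massive cancellation between terms of size up to $n\cdot 2^{m_f}$. A uniform relative error $\epsilon'$ on each $\hat T_j$ then produces an absolute error as large as $\epsilon'\sum_j|\gamma_j||T_j|=\epsilon'\cdot\Theta(n\,4^{m_f})$, while the target $|\log\det(A)|$ can be \emph{arbitrarily} small (e.g.\ when $A=(1-\delta)I$ it is $\Theta(n\delta)$ for tiny $\delta$). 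Consequently no matrix-independent choice $\epsilon'=\epsilon/2^{m_f}$ -- nor any fixed $\epsilon'$ -- controls the relative error, so ``Option (i)'' fails; it is not merely a constant factor that the $O_{m_f}(\cdot)$ can absorb. ``Option (ii)'' (Chebyshev) has the same defect, since its coefficients also alternate. For the Estrada index alone your expansion is benign because the $\gamma_j$ are already non-negative, but the argument as a whole does not establish the log-determinant and trace-of-inverse rows of the table. The fix is exactly the paper's: estimate $\tr(B^p)$ for $B=I-A$ (suitably normalized to be PSD) by running the Schatten-norm estimator on $B$, never expanding the power of $I-A$ into powers of $A$.
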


At a high level, the proof follows that of Boutsidis et al.~\cite{BDKKZ17}, 
who present a time-efficient algorithm for approximating the log-determinant
of PD matrices.
Besides extending their method to two other spectral sums,
the main difference is that we need to adapt their argument
to be space-efficient in the streaming model.
More specifically, the log-determinant of a PD matrix is approximated
in~\cite[Lemma 7]{BDKKZ17} 
by a truncated version (i.e., only the first summands) of its Taylor expansion
\begin{equation}\label{eq:TaylorLogDet}
  \log\det(A)=-\sum_{p=1}^{\infty}\tr((I_n-A)^p)/p.
\end{equation}
They then approximate the required matrix traces by multiplying the respective
matrix by a Gaussian vector (from left and right),
which can be implemented faster than matrix powering, 
by starting with the vector and repeatedly multiply it by a matrix. 
While this is time-efficient, it is not space-efficient when the input matrix is sparse, in which case our streaming algorithm has better space complexity.
One other difference to note is that our algorithm approximates
each of the above-mentioned traces \emph{separately},
and thus we need all the Taylor expansion coefficients to be non-negative,
which indeed applies for these three spectral functions.%
\footnote{Our method extends to alternating Taylor sums if the coefficients decrease by a constant factor, by bounding the approximation error difference of every two consecutive summands. One such an example is $\tr(\exp({-A}))$.
}

\begin{proof} 
We follow the proof framework of Lemma $8$ in \cite{BDKKZ17},
achieving the desired relative error of the desired spectral function using a truncated version of its Taylor expansion, consisting $m_f$ terms.
The first relative error is due to the tail of the series,
i.e. the terms that were not considered in the final estimate. 
For the log-determinant, the above-mentioned Lemma $8$ shows that it suffices to $(1 \pm \epsilon)$-approximate the first 
$m_f=\lceil \frac{1}{\theta}\cdot \log \frac{1}{\epsilon}\rceil$ 
terms of its Taylor expansion \eqref{eq:TaylorLogDet}
in order to obtain a $(1 \pm 2\epsilon)$-approximation of $\log \det(A)$. 
The same proof holds also for the Taylor expansion
$$\tr(A^{-1})=\sum_{p = 1}^{\infty} -\tr((I_n-A)^p)$$
and obtaining a $(1 \pm 2\epsilon)$-approximation of $\tr(A^{-1})$
(for the same value of $m_f$).
To achieve this error bound for the Estrada index of a Laplacian, 
the number of values of its Taylor series (see e.g.~\cite{DL11, GDR11})
\begin{equation}\label{eq:EstradaTaylor}
\tr(\exp(A))=\sum_{p=0}^{\infty}\tr(A^p)/{p!}
\end{equation}
that need to be approximated
is $m_f= \ceil{(e\theta+1)\log \frac{1}{\varepsilon}-1}$,
as shown in Appendix \ref{App:LaplacianEstradaIndexProof}.

We are left to explain how to $(1\pm \varepsilon)$-approximate the first $m_f$ values of the Taylor expansion (causing the other relative error). Recall that if a matrix $B$ is PSD then $\tr(B^p)=\sum \lambda_i^p=\spn{B}{p}$, where $\lambda_1,\ldots,\lambda_n\geq0$ are its eigenvalues. Furthermore, for such matrices our algorithm works for every integer $p\geq 2$, and therefore this relative error is immediate from Theorems \ref{thm:row-orderAlgGeneral} and \ref{thm:row-orderAlgSpecial}.

To conclude, we can compute the traces of $B^p$ in parallel for all integer $p=2,3,\ldots,m_f$ using Algorithm \ref{Alg:row-order}, while for $p=1$ one can compute $\spn{B}{1}=\tr(B)$ by directly summing the main diagonal entries.
These parallel executions take $\floor{m/4}+1$ passes and the total space is at most $O_{m}(\epsilon^{-2}k^{3m/2 - 3})$ words of space for log-determinant and trace of matrix inverse, and $O_{m}(\epsilon^{-2}k^{m/2})$ words of space for the Estrada index of a Laplacian matrix.
\end{proof}

\subsection{Approximating the Spectrum of PSD matrices}
We present an application of our algorithm to (weakly) estimate the spectrum of a matrix, with eigenvalues bounded in $[0, 1]$ using approximations of a ``few'' Schatten norms of the matrix. This is based on the work of Cohen-Steiner et al.~\cite{SKSV18} on approximating the spectrum of a graph which is in turn based on insightful work by Wong and Valiant \cite{WV16} on approximately recovering a distribution from its moments using the Moment Inverse method. 

Fix a PSD matrix $A \in \R^{n \times n}$ with eigenvalues $1 \geq \lambda_1 \geq \ldots \geq \lambda_n$ and define the $l$-th moment of the spectrum to be $\frac{1}{n}\spn{A}{l} = \frac{1}{n}\sum_{i \in [n]} \lambda_i^l$. Cohen-Steiner et al.\ show that estimating $O(1/\epsilon)$ moments of $A$ up to multiplicative error $O(\epsilon)$ is sufficient to estimate the spectrum of $A$ within earth-mover distance $O(\epsilon)$. It is well-known that the the $L_1$ distance between two sorted vectors of length $n$ is exactly $n$ times the earth-mover distance between the corresponding point-mass distributions (uniform probability on each of the $n$ indices). Hence, the recovery scheme of Cohen-Steiner et al.\ allows us to recover the spectrum within $L_1$ distance $O(\epsilon n)$ by estimating only $O\left( \frac{1}{\epsilon} \right)$ moments of the matrix $A$. Specifically, we get the following result,  

\begin{theorem}[Theorem 7 in \cite{SKSV18}]\label{thm:spectrumFromMoments}
Given a constant $\epsilon > 0$, there exists a parameter $s = \frac{C}{\epsilon}$ (where $C > 0$ is an absolute constant) and an algorithm $R$ such that, for a PSD matrix $A \in \R^{n \times n}$ with eigenvalues $\lambda = (\lambda_1, \dots, \lambda_n) \in [0,1]^n$ and a vector $y \in \R^s$ with the property that $y_i = \|\lambda\|_i^i \pm \exp(-C'\epsilon)$ for all $i \in [s]$ and absolute constant $C' > 0$, $R$ reads $y$ and outputs a vector $\hat{\lambda}$ such that $\|\lambda  - \hat{\lambda}\|_1 \leq \epsilon n$. 
\end{theorem}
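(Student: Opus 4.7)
The plan is to reformulate the task as recovering a probability measure on $[0,1]$ from its first $s$ (approximate) moments, and then to combine Kantorovich--Rubinstein duality with polynomial approximation. Associate to any sorted vector $\mu\in[0,1]^n$ the empirical measure $\hat{\mu}=\frac{1}{n}\sum_{i=1}^n \delta_{\mu_i}$. A standard rearrangement argument shows that for sorted vectors $\lambda,\hat\lambda\in[0,1]^n$, $\|\lambda-\hat\lambda\|_1 = n\cdot W_1(\hat{\lambda},\hat{\mu})$, where $W_1$ is the Wasserstein-$1$ (earth-mover) distance. Hence it suffices to exhibit a recovered sorted vector $\hat\lambda$ whose empirical measure lies within $W_1$-distance $\varepsilon$ of the true empirical measure $\hat\lambda^{\text{true}}$.

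The recovery algorithm $R$ will be an LP (or its discretization): given $y\in\R^s$, output any sorted $\hat\lambda\in[0,1]^n$ whose empirical moments $m_i(\hat\lambda)=\frac{1}{n}\sum_j \hat\lambda_j^i$ satisfy $|m_i(\hat\lambda)-y_i/n|\le \exp(-C'\varepsilon)/n$ for all $i\in[s]$. The true vector $\lambda$ is feasible by assumption, so $R$ succeeds. It then remains to bound $W_1(\hat\lambda,\hat\lambda^{\text{true}})$ for any two feasible solutions. By Kantorovich--Rubinstein duality,
\[
W_1(\hat\lambda,\hat\lambda^{\text{true}})
= \sup_{\substack{f:[0,1]\to\R \\ \operatorname{Lip}(f)\le 1}}
\Bigl| \tfrac1n\textstyle\sum_j f(\hat\lambda_j) - \tfrac1n\textstyle\sum_j f(\hat\lambda^{\text{true}}_j)\Bigr|.
\]
For a $1$-Lipschitz $f$ on $[0,1]$, Jackson's theorem provides a polynomial $p_f(x)=\sum_{i=0}^{s} c_i x^i$ of degree $s=C/\varepsilon$ with $\|f-p_f\|_\infty \le O(\varepsilon)$. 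Writing $p_f$ in the Chebyshev basis (on $[0,1]$) and converting to monomials, the coefficient vector satisfies a bound of the form $\sum_i |c_i|\le e^{O(s)}$; this exponential bound is exactly why the moment-error budget in the hypothesis is $\exp(-C'/\varepsilon)$-scale (which, with appropriately chosen constants, is the intended reading of the $\exp(-C'\varepsilon)$ bound). Hence
\[
\Bigl| \tfrac1n\textstyle\sum_j f(\hat\lambda_j) - \tfrac1n\textstyle\sum_j f(\hat\lambda^{\text{true}}_j)\Bigr|
\le 2\|f-p_f\|_\infty + \sum_{i=0}^{s} |c_i|\cdot |m_i(\hat\lambda)-m_i(\hat\lambda^{\text{true}})|
\le O(\varepsilon) + e^{O(s)}\cdot 2\exp(-C'/\varepsilon)
\le O(\varepsilon),
\]
provided $C'$ is chosen large enough relative to the Jackson constant and the hidden constants above. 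Taking the supremum over $f$ yields $W_1(\hat\lambda,\hat\lambda^{\text{true}})\le O(\varepsilon)$, and rescaling $\varepsilon$ by the absolute constant completes the argument.

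The main obstacle is the coefficient-blowup step: a naive Taylor or interpolation basis gives $|c_i|$ growing like $i!$ or worse, which would force an impractically small additive tolerance. The right move is to work in a basis (shifted Chebyshev on $[0,1]$) where the approximation error from Jackson's theorem is still $O(\varepsilon)$ with degree $O(1/\varepsilon)$, and where a careful change of basis shows $\sum |c_i|$ is at most a single exponential in $s$; this is exactly what the hypothesis $\exp(-C'/\varepsilon)$-per-moment-error is tailored to absorb. A secondary but purely mechanical obstacle is to ensure the LP/search output is a sorted length-$n$ vector rather than an abstract measure: this is handled by discretizing $[0,1]$ into $\text{poly}(n/\varepsilon)$ bins, searching for a histogram with the required moments, and then quantizing the histogram into sorted coordinates, which changes $W_1$ by at most an additional $O(\varepsilon)$.
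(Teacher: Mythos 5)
The paper does not prove this statement: it is cited verbatim from Cohen-Steiner, Kong, Sohler and Valiant~\cite{SKSV18} (which in turn rests on Wong and Valiant's moment-inverse method~\cite{WV16}) and then used as a black box. Your reconstruction is, up to small quantitative slack, the standard argument behind that result, and it is sound. You correctly reduce $L_1$ recovery of a sorted vector to $W_1$ recovery of the associated empirical measure, use a moment-matching feasibility program as the algorithm $R$, and bound $W_1$ by Kantorovich--Rubinstein duality together with Jackson-type polynomial approximation of degree $O(1/\epsilon)$. Your identification of the real obstacle --- the change from Chebyshev to monomial basis blows up the coefficient $\ell_1$-mass to $e^{\Theta(s)}$, which is exactly what the $\exp(-\Omega(1/\epsilon))$ per-moment error budget is designed to absorb --- is the crux of the original proof. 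You also correctly flag that the error tolerance written in the statement, $\exp(-C'\epsilon)$, cannot be right as written (it is $\Theta(1)$ for small $\epsilon$); it should read $\exp(-C'/\epsilon)$, and indeed the downstream space bound the paper quotes, $O(k^{3s/2-3}\exp(-C'\epsilon))$, appears to carry the same typo (it should scale like $\exp(\Theta(1/\epsilon))$ for the approximation error to be small enough). Two cosmetic remarks: in your moment-error bound you drop a helpful factor $1/n$, which is harmless since the estimate is still $O(\epsilon)$ for large enough $C'$; and, as you note, the LP/discretization step needs $\mathrm{poly}(1/\epsilon)$ bins (not $\mathrm{poly}(n/\epsilon)$) to add only $O(\epsilon)$ to $W_1$, after which quantizing the histogram into $n$ sorted coordinates adds another $O(1/n)=O(\epsilon)$ for the relevant regime. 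None of these affect correctness.
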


For an error parameter $\epsilon > 0$ and parameter $s = \frac{C}{\epsilon}$ (where $C > 0$ is an absolute constant) as defined in the above theorem, given a $k$-sparse PSD matrix $A \in \R^{n \times n}$ that is streamed in row-order and whose eigenvalues are in the range $[0,1]$, one can use Algorithm \ref{Alg:row-order} to compute the vector $y \in \R^s$ with the desired guarantee using space $O(k^{3s/2 - 3}\exp(-C'\epsilon))$ for some absolute constant $C' > 0$ and using $\floor{s/4}+1$ passes over the stream.

\section{Experiments}\label{sec:experiments}
In this section we present numerical experiments illustrating the performance of the row-order Schatten $p$-norm estimator described in Section \ref{sec:Row-orderImportanceSampling}. We simulate the row-order stream by reading the input matrix row by row. The results not only follow theoretical space bounds, showing that the algorithm is indeed independent of the matrix size, but are actually several orders of magnitude better. In addition, the experiments show that the algorithm is robust to noise, and these two results suggest that real-life behavior of the algorithm is significantly better than our theoretical bounds.

The inputs used are $\{0,1\}^{n\times n}$ matrices, representing collaboration network graphs (nodes represent scientists and edges represent co-authoring a paper) from the e-print arXiv for scientific collaborations in five different areas in Physics. The data was obtained from the Stanford Large Network Dataset Collection \cite{StanData} which was in-turn obtained from \cite{10.1145/1217299.1217301}. In order to study the effect of sparsity, we ``sparsify'' each (of five) matrix by sampling $10$ nonzero entries in each row uniformly at random (note that max column-sparsity can be larger than $10$).

In the first experiment, we use the arXiv General Relativity and Quantum Cosmology collaboration network which has $n=5242$ rows and columns; after ``sparsifying'' the matrix as mentioned, the max column-sparsity is $37$ and the average sparsity is $6.1$. We fix the value of $p$ to be $6$, and using our algorithm from Section \ref{sec:Row-orderImportanceSampling}, we vary number of estimators (walks) $t$ and compute the \emph{relative error} of the average of the $t$ walks. We repeat this process $10$ times for every value of $t$ and plot the mean and standard deviation in Figure \ref{fig:exp1}. 
In addition, we show in this figure the results of running the same experiment on a ``noisy'' version of the matrix, by adding to it an error matrix where $1/5$ of the entries are drawn independently from~$\mathcal{N}(0, 0.1^2)$\footnote{This value assures the $l_2$-norm of the error in a row is ``comparable'' to the $l_2$-norm of the data: $(0.1)^2 \times 5242 \times 0.2 \approx 10 =\text{max row-sparsity}$.}.

\begin{figure}[!htb]
  \centering
  \begin{minipage}[t]{0.49\textwidth}
    \includegraphics[width=\linewidth]{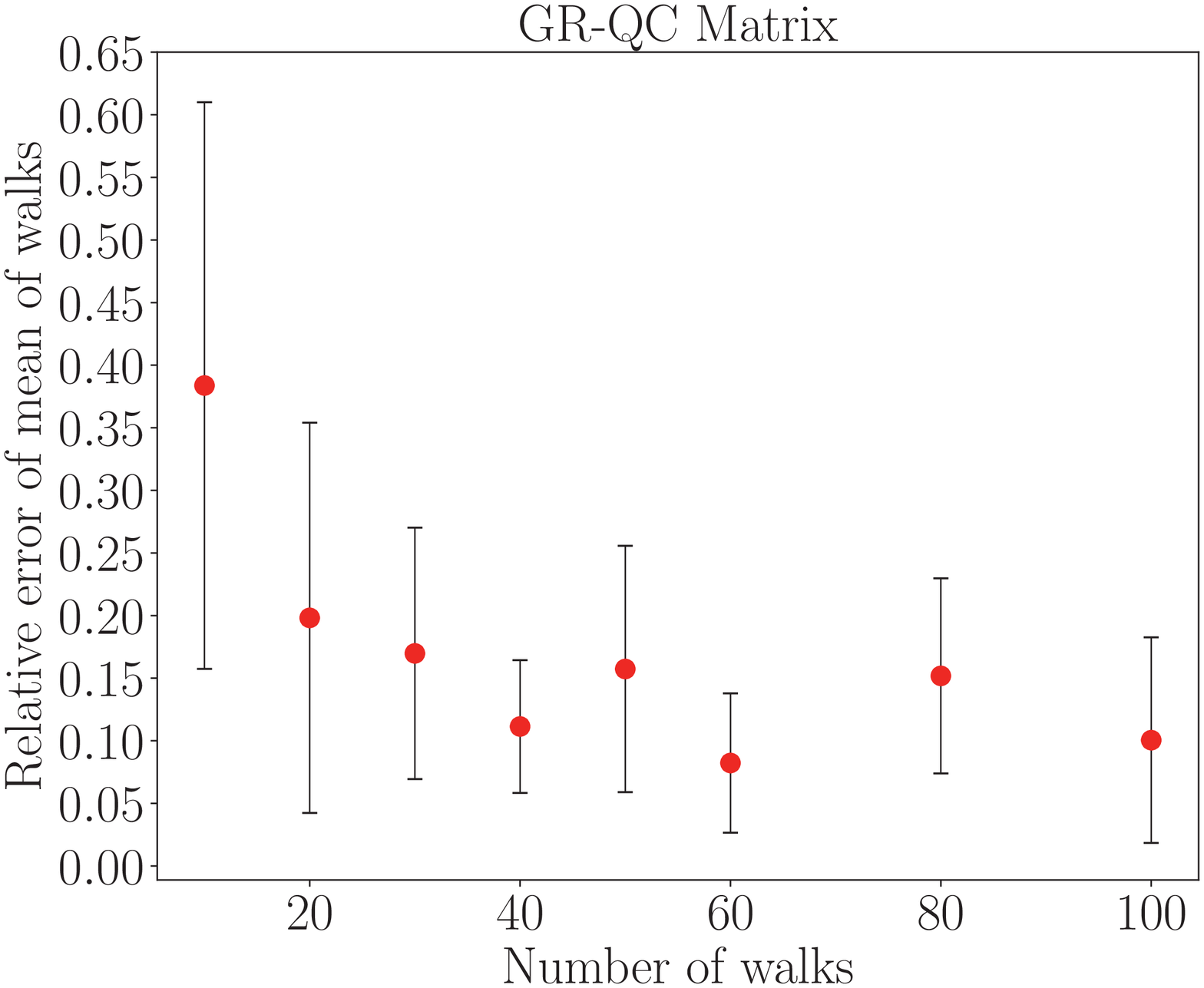}
  \end{minipage} 
  \begin{minipage}[t]{0.49\textwidth}
    \includegraphics[width=\linewidth]{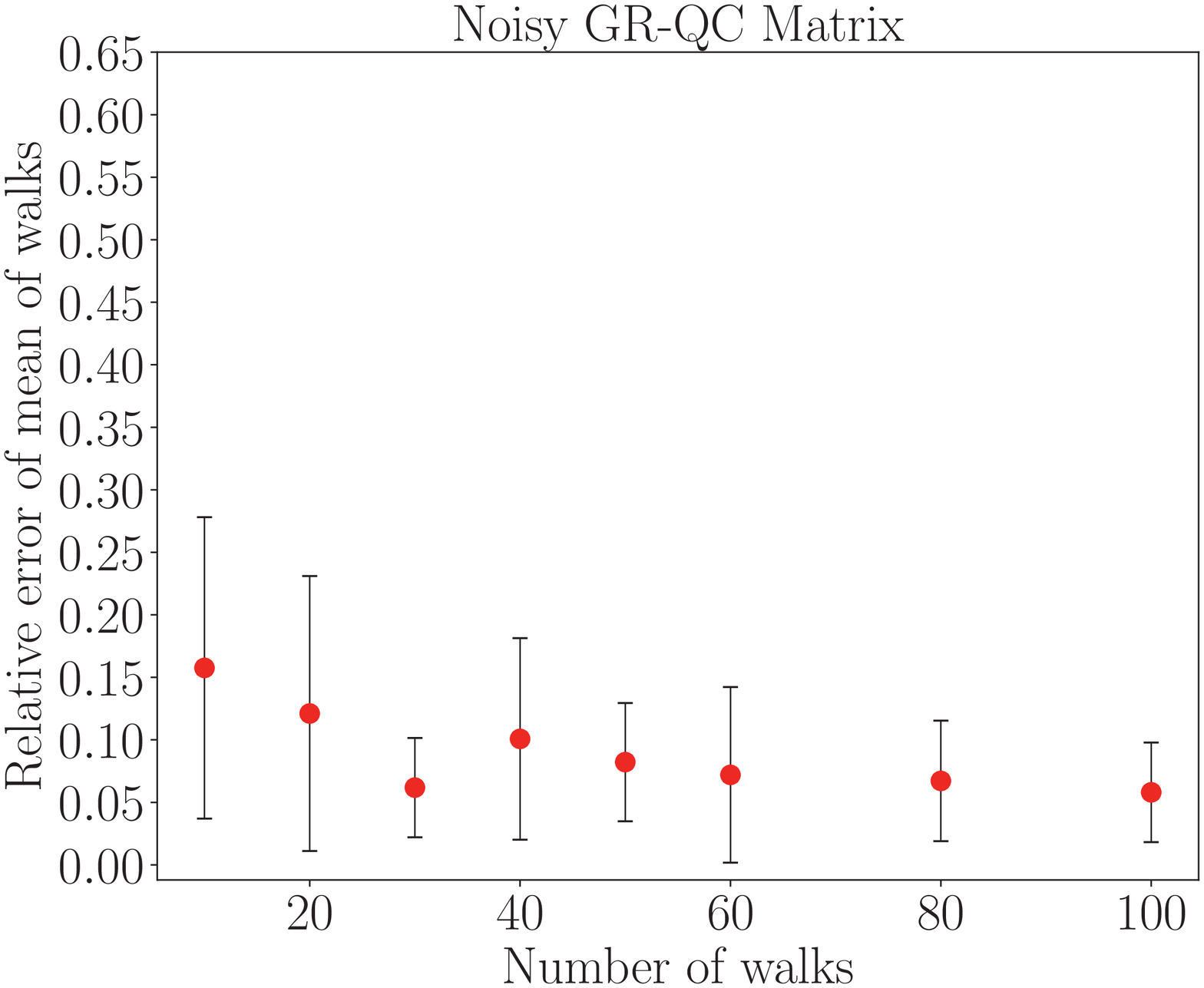}
  \end{minipage}
\caption{Relative error of Algorithm \ref{Alg:row-order} for Schatten $6$-norm of arXiv General Relativity and Quantum Cosmology Collaboration Network: Vary number of walks and plot relative error of the mean of the walks.}
 \label{fig:exp1}
\end{figure}

Recall that the number of independent walks (estimators) is ultimately the space required by Algorithm \ref{Alg:row-order} (upto the space needed to store a row), as they are run the in parallel. Therefore, the left graph shows that the space actually needed to approximate the Schatten $6$-norm of the selected input matrix is significantly smaller than the theoretical bound of Theorem \ref{thm:row-orderAlgSpecial}, which is $O_p{\varepsilon ^{-2}k^(p/2)}\approx 135000$. The other graph shows that the algorithm is robust to small random noise, i.e. it works also for nearly-sparse, where every row and column are dominated by a small amount of entries.

In the second experiment, we use all five collaboration networks -- General Relativity and Quantum Cosmology ($n=5242$), High Energy Physics - Phenomenology ($n=9877$), High Energy Physics - Theory ($n=12008$), Astro Physics ($n=18772$) and Condensed Matter ($n=23133$). For each matrix we compute walks (estimator from Section \ref{sec:Row-orderImportanceSampling}) until the mean of the walks is within $10\%$ of the true Schatten $6$-norm of the matrix. We repeat this process $10$ times for each matrix and plot the median, the first and third quartile of the number of walks for the $10$ trials in Figure \ref{fig:exp2}.Since in the second and third experiments, most of the outputs of the 10 trials are concentrated around the median except for very few trials (one or two) which are very large outliers. Hence, we chose to output the first and third quartiles indicating the output of the majority of the trials.

\begin{figure}[H]
  \centering
  \includegraphics[width= 0.49\linewidth]{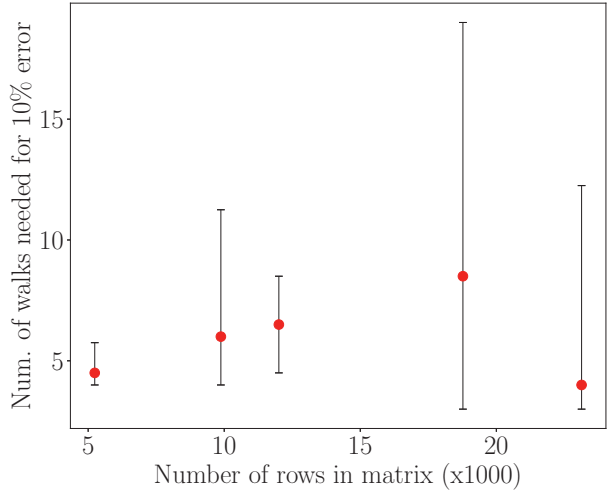}
\caption{Number of walks to $(1\pm 0.1)$-approximate Schatten $6$-norm of $5$ different matrices from arXiv Physics Collaboration Network.}
\label{fig:exp2}
\end{figure}

The above figure shows that indeed calculating the space needed to approximate the Schatten norms using our algorithm is independent of the matrix dimension. Again, as in Figure \ref{fig:exp1}, it is easy to see that the number of estimators needed to approximate the Schatten $6$-norm of the chosen matrices is several orders of scale better than the theoretical bound.

In our third experiment we compute the number of walks needed for the mean of the walks to be within $10\%$ of the true Schatten $p$-Norm of the GR-QC matrix for different values of $p$. We vary the value of $p$ and, for each value of $p$, compute the number of walks needed for $10$ trials and plot the median, first and third quartile of the $10$ trials in Figure \ref{fig:exp3}.

\begin{figure}[H]
  \centering
  \includegraphics[width= 0.49\linewidth]{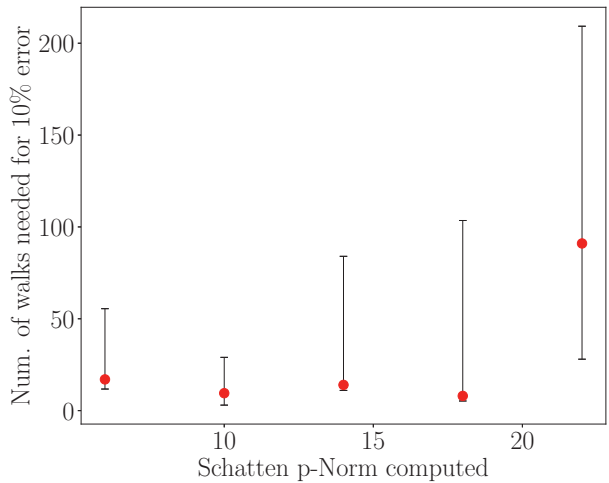}
\caption{Number of walks to $(1\pm 0.1)$-approximate Schatten $p$-norm for arXiv General Relativity and Quantum Cosmology Collaboration Network (GR-QC) for different values of $p \in 2\Z^+$.}
\label{fig:exp3}
\end{figure}

The last figure follows the previous figures, and shows that again the numerical results are much better than the theoretical bounds, in this case in the dependence on $p$. Although there is an expected increase in space as $p$ grows, it is not rapid, and in particular is not exponential. This means, for example, that the space needed to approximate other spectral functions, as explained in Section \ref{sec:Applications}, would be small, suggesting that our algorithm would be practical for such tasks.

\bibliographystyle{alphaurlinit}
\bibliography{Refs}

\appendix
\section{Appendix}

\subsection{Proof of Fact \ref{fact:lengthLeqSchatten-p}}\label{app:lengthLeqSchatten-p}
Let $M = AA^\top$ be a PSD matrix, with eigenvalues $\lambda_1\geq\ldots\geq\lambda_n\geq 0$. Let $\vec{m}, \vec{\lambda} \in \R^n$ be the vectors corresponding to the diagonal entries of $M$ and the eigenvalues of $M$ respectively, both in non-increasing order. Then, by Schur-Horn theorem (Theorem 4.3.26 in \cite{HJ85}), $\vec{\lambda}$ weakly majorizes $\vec{m}$, i.e. $\sum_{i = 1}^r \lambda_i \geq \sum_{i = 1}^r m_i$ for all $r \in [n]$.

Since $f(y) = \sum_{i = 1}^n y_i^t$ is a Schur-convex function for $y \in \R^n$ and $t \geq 1$, we have that $\sum_{i = 1}^n \lambda_i^t \geq \sum_{i = 1}^n m_i^t$. The statement follows from the fact that $\sum_{i = 1}^n \lambda_i^t =\spn{AA^\top}{t}= \spn{A}{2t}$ and $m_i = \|a_i\|_2^{2}$ for all $i \in [n]$.

\subsection{Proof of Theorem \ref{thm:ImportanceSampling}}\label{App:ImpSamplingProof}
It is easy to see that the estimator is unbiased; $\Exp{}{\hat{Z}} = \sum_{i  \in [n]} \frac{z_i}{\tau_i} \cdot \tau_i = z$. Bounding the variance can be done as follows,
\begin{align*}
\Var({\hat{Z}}) &\leq \Exp{}{(\hat{Z})^2} = \sum_{i \in [n]} \left(\frac{z_i}{\tau_i} \right)^2 \tau_i = \sum_{i \in [n]} \left(\frac{|z_i|}{\tau_i} \right)^2 \tau_i .
\end{align*}
Since for each $i \in [n]$ we have $\tau_i \geq \frac{|z_i|}{\lambda z}$, we can bound $\Var({\hat{Z}}) \leq \sum_{i \in [n]} (\lambda z)^2 \tau_i  = (\lambda z)^2 $

\subsection{Proof of Lemma \ref{thm:setSampling}}\label{App:TwoSetProof}

The expectation is straight forward. First assume $r=1$:
$$\Exp{}{Y}=\Exp{}{\frac{z_{i,j}}{\tau_i\tau_j}}=\sum_{l\in [n],m\in [n]}\frac{z_{l,m}}{\tau_l \tau_m}\tau_l \tau_m=z$$
and then using the linearity of expectation,
$$\Exp{}{Y}=\frac{1}{r^2}\sum_{u\in [r],v\in [r]}\Exp{}{\frac{z_{i_u,j_v}}{\tau_{i_u}\tau_{j_v}}}=\frac{1}{r^2}\sum_{u\in [r],v\in [r]}z=z$$

For the variance,
\begin{align*}
\mathbb{E} Y^2 &= \frac{1}{r^4}\sum_{u,v}
\left(
\Exp{}{\sum_{\substack{u'\neq u \\ v'\neq v}}\frac{z_{i_u,j_v}}{\tau_{i_u}\tau_{j_v}}\cdot \frac{z_{i_{u'},j_{v'}}}{\tau_{i_{u'}}\tau_{j_{v'}}}}
+\Exp{}{\left(\frac{z_{i_u,j_v}}{\tau_{i_u}\tau_{j_v}}\right)^2}
+\Exp{}{\sum_{u\neq u'}\frac{z_{i_u,j_v}}{\tau_{i_u}\tau_{j_v}}\cdot \frac{z_{i_{u'},j_v}}{\tau_{i_{u'}}\tau_{j_v}}}
+\Exp{}{\sum_{v\neq v'}\frac{z_{i_u,j_v}}{\tau_{i_u}\tau_{j_v}}\cdot \frac{z_{i_u,j_{v'}}}{\tau_{i_u}\tau_{j_{v'}}}}
\right)\\
\intertext{As $i_u$ and $i_{u'}$ are independent for $u\neq u'$, and similarly for $j_v$ and $j_{v'}$ for $v\neq v'$, we get}
&= \frac{1}{r^4}
\left(
r^2(r-1)^2 z^2
+r^2 \sum_{l,m}\frac{z_{l,m}}{\tau_l}\cdot\frac{z_{l,m}}{\tau_m}
+r^2(r-1)\sum_{l,m,m'}\frac{z_{l,m'}}{\tau_l}\cdot{z_{l,m}}
+r^2(r-1)\sum_{l,l',m}\frac{z_{l',m}}{\tau_m}\cdot{z_{l,m}}
\right)
\\
&\leq z^2
+\frac{1}{r^2} \sum_{l,m}\frac{\abs{z_{l,m}}}{\tau_l\tau_m}\cdot{\abs{z_{l,m}}}
+\frac{1}{r}\sum_{l,m,m'}\frac{\abs{z_{l,m'}}}{\tau_l}\cdot{\abs{z_{l,m}}}
+\frac{1}{r}\sum_{l,l',m}\frac{\abs{z_{l',m}}}{\tau_m}\cdot{\abs{z_{l,m}}}
\end{align*}
As the first term is just $(\Exp{}{Y})^2$, it holds that
\begin{align*}
\text{Var}(Y)&\leq \frac{1}{r^2} \sum_{l,m\in N(l)}\frac{\abs{z_{l,m}}}{\tau_l\tau_m}\cdot\abs{z_{l,m}}
+\frac{1}{r}\sum_{l,m,m'\in N(l)}\frac{\abs{z_{l,m'}}}{\tau_l}\cdot{\abs{z_{l,m}}}
+\frac{1}{r}\sum_{m,l\in N(m),l'\in N(m)}\frac{\abs{z_{l',m}}}{\tau_m}\cdot{\abs{z_{l,m}}} \\
\intertext{Recalling that $z_{l,m}=0$ for all $(l,m)\notin E$, we can rewrite the above as}
&= \frac{1}{r^2} \sum_{l,m\in N(l)}\frac{\abs{z_{l,m}}}{\tau_l\tau_m}\cdot\abs{z_{l,m}}
+\frac{1}{r}\sum_{l, m\in N(l),m'\in N(l)}\frac{\abs{z_{l,m'}}}{\tau_l}\cdot{\abs{z_{l,m}}}
+\frac{1}{r}\sum_{m, l\in N(m), l'\in N(m)}\frac{\abs{z_{l',m}}}{\tau_m}\cdot{\abs{z_{l,m}}} \\
\intertext{and using the bound on the probability,}
&\leq \frac{\lambda^2 z}{r^2} \sum_{l,m\in N(l)}\abs{z_{l,m}}
+\frac{\lambda z}{r}\sum_{l,m\in N(l),m'\in N(l)}\abs{z_{l,m}}
+\frac{\lambda z}{r}\sum_{m,l\in N(m),l'\in N(m)}\abs{z_{l,m}}\\
\intertext{Finally, using the bounds on maximum degrees, we get}
&\leq \left(\frac{\lambda^2}{r^2}+\frac{2\lambda\Delta}{r}\right)z \sum_{i,j\in [n]} \abs{z_{i,j}}
\end{align*}

\subsection{Bounding the tail of the Estrada index Taylor expansion (Theorem \ref{Thm:spectralSums})}\label{App:LaplacianEstradaIndexProof}
We bound the tail of the Estrada index Taylor expansion \eqref{eq:EstradaTaylor}, i.e. 
$\left|\sum_{p=m+1}^{\infty}\frac{\tr(A^p)}{p!}\right| \leq \varepsilon \left| \tr(\exp(A))\right|$ for $m=\ceil{(e\theta+1)\log(1/\varepsilon)-1}.$
\begin{align*}
\left|\sum_{p=m+1}^{\infty}\frac{\tr(A^p)}{p!}\right| &\leq\left|\sum_{p=m+1}^{\infty}\frac{\tr(A^{m+1}A^{p-(m+1)})}{(m+1)!(p-(m+1))!}\right|.\\
\intertext{Using $\tr(AB)\leq \|A\|_{S_\infty} \cdot \tr(B)$ which follows from Von Neuman's trace inequality (see \cite{BDKKZ17}),}
&\leq \frac{\|A^{m+1}\|_{S_\infty}}{(m+1)!} \left|\sum_{p=m+1}^{\infty}\frac{\tr(A^{p-(m+1)})}{(p-(m+1))!}\right|, \\
\intertext{and by the bound on the largest eigenvalue and Stirling's formula,}
&\leq \frac{(e\theta)^{m+1}}{(m+1)^{m+3/2}\sqrt{2\pi}} \left|\sum_{p=0}^{\infty}\frac{\tr(A^p)}{p!}\right| \\
&\leq \left(\frac{e\theta}{m+1}\right)^{m+1} \left|\tr(\exp(A))\right|
\end{align*}

Setting $m=\ceil{(e\theta+1)\log(1/\varepsilon)-1}$ and using $(1-x^{-1})^x\leq e^{-1}$ (for $x>0$) guarantees that
$$\left(\frac{e\theta}{m+1}\right)^{m+1} \leq
\left(\frac{e\theta}{(e\theta+1)\log(1/\varepsilon)}\right)^{m+1}
\leq \left(1-\frac{e\theta}{e\theta+1}\right)^{(e\theta+1)\log(1/\varepsilon)}
=\varepsilon.$$

\end{document}